\newcommand{\beq}{\begin{equation}}
\newcommand{\eeq}{\end{equation}}
\newcommand{\lb}{\label}
\newcommand{\beqar}{\begin{eqnarray}}
\newcommand{\eeqar}{\end{eqnarray}}
\newcommand{\bit}{\begin{itemize}}
\newcommand{\eit}{\end{itemize}}
\newcommand{\benum}{\begin{enumerate}}
\newcommand{\eenum}{\end{enumerate}}
\newcommand{\barr}{\begin{array}}
\newcommand{\earr}{\end{array}}
\def\ds{\displaystyle}
\newcommand\eq[1]{(\ref{#1})}
\newtheorem{theorem}{Theorem}[section]
\newtheorem*{remarknonum}{Remark}
\newtheorem*{corollarynonum}{Corollary}
\newcommand{\jump}[2]{
[\mbox{\hspace{-#1em}}[#2]\mbox{\hspace{-#1em}}]}
\newcommand{\deriv}[2]{\frac{\partial #1}{\partial #2}}
\newcommand{\nderiv}[3]{\frac{\partial^{\,#3} #1}{\partial #2^{\,#3}}}
\newcommand{\modI}{\text{I}}
\newcommand{\modII}{\text{II}}
\newcommand{\modIII}{\text{III}}
\def\XXint#1#2#3{{\setbox0=\hbox{$#1{#2#3}{\int}$}
   \vcenter{\hbox{$#2#3$}}\kern-.5\wd0}}
\def\b0{\mbox{\boldmath $0$}}
\def\be{\mbox{\boldmath $e$}}
\def\beffe{\mbox{\boldmath $f$}}
\def\bp{\mbox{\boldmath $p$}}
\def\bu{\mbox{\boldmath $u$}}
\def\bw{\mbox{\boldmath $w$}}
\def\bA{\mbox{\boldmath $A$}}
\def\bB{\mbox{\boldmath $B$}}
\def\bD{\mbox{\boldmath $D$}}
\def\bF{\mbox{\boldmath $F$}}
\def\bG{\mbox{\boldmath $G$}}
\def\bK{\mbox{\boldmath $K$}}
\def\bQ{\mbox{\boldmath $Q$}}
\def\bR{\mbox{\boldmath $R$}}
\def\bS{\mbox{\boldmath $S$}}
\def\bT{\mbox{\boldmath $T$}}
\def\bU{\mbox{\boldmath $U$}}
\def\bV{\mbox{\boldmath $V$}}
\def\bX{\mbox{\boldmath $X$}}
\newcommand{\bsigma}{\mbox{\boldmath $\sigma$}}
\newcommand{\bSigma}{\mbox{\boldmath $\Sigma$}}
\newcommand{\bPsi}{\mbox{\boldmath$\Psi$}}
\newcommand{\bPhi}{\mbox{\boldmath $\Phi$}}
\newcommand{\bXi}{\mbox{\boldmath $\Xi$}}
\def\f0{\ensuremath{\mathbb{O}}}
\newcommand{\Ga}{\alpha}
\newcommand{\mF}{\ensuremath{\mathcal{F}}}
\newcommand{\mS}{\ensuremath{\mathcal{S}}}
\newcommand{\mU}{\ensuremath{\mathcal{U}}}
\newcommand{\bmB}{\mbox{\boldmath $\mathcal{B}$}}
\newcommand{\bmM}{\mbox{\boldmath $\mathcal{M}$}}
\newcommand{\bmS}{\mbox{\boldmath $\mathcal{S}$}}
\newcommand{\bmT}{\mbox{\boldmath $\mathcal{T}$}}
\newcommand{\bmU}{\mbox{\boldmath $\mathcal{U}$}}
\newcommand{\bmV}{\mbox{\boldmath $\mathcal{V}$}}
\def\Im{\mathop{\mathrm{Im}}}
\def\Re{\mathop{\mathrm{Re}}}
\newcommand{\Ci}{\mathop{\mathrm{Ci}}}
\newcommand{\Si}{\mathop{\mathrm{Si}}}
\newcommand{\sign}{\mathop{\mathrm{sign}}}
\newcommand{\Res}{\mathop{\mathrm{Res}}}
\newcommand{\arctanh}{\mathop{\mathrm{arctanh}}}
\newcommand{\Reals}{\ensuremath{\mathbb{R}}}
\newcommand{\Complex}{\ensuremath{\mathbb{C}}}
\newcommand{\Naturals}{\ensuremath{\mathbb{N}}}
\def\IJSS{{\it Int.\ J.\ Solids Struct.}\ }
\def\JAM{{\it ASME J.\ Appl.\ Mech.}\ }
\def\JE{{\it J.\ Elasticity}\ }
\def\JMPS{{\it J.\ Mech.\ Phys.\ Solids}\ }
\def\ZAMM{{\it Z.\ Angew.\ Math.\ Mech.}\ }
\title{Symmetric and skew-symmetric weight functions in perturbation models of 2D interfacial cracks}
\author{A. Piccolroaz$^{(1)}$, G. Mishuris$^{(2)}$, A.B. Movchan$^{(3)}$
\\
\\
$^{(1)}$
{\it Dipartimento di Ingegneria Meccanica e
Strutturale, Universit\`a di Trento,}
\\ {\it Via Mesiano 77, I-38050 Trento, Italia}
\\
{\it
$^{(2)}$
Wales Institute of Mathematical and Computational Sciences, }
\\ {\it Institute of Mathematical and Physical Sciences, Aberystwyth University, }
\\ {\it Ceredigion SY23 3BZ, Wales U.K.,}
\\
{\it
$^{(3)}$
Department of Mathematical Sciences, University of Liverpool, }
\\ {\it Liverpool L69 3BX, U.K.}
}
\begin{document}

\maketitle

\begin{abstract}
\noindent
In this paper we address the vector problem of a 2D half-plane interfacial crack loaded by a general asymmetric distribution of forces acting on its faces. It is 
shown that the general integral formula for the evaluation of stress intensity factors, as well as high-order terms, requires both symmetric and skew-symmetric weight 
function matrices. The symmetric weight function matrix is obtained via the solution of a Wiener-Hopf functional equation, whereas the derivation of the skew-symmetric 
weight function matrix requires the construction of the corresponding full field singular solution.

The weight function matrices are then used in the perturbation analysis of a crack advancing quasi-statically along the interface between two dissimilar media. A general 
and rigorous asymptotic procedure is developed to compute the perturbations of stress intensity factors as well as high-order terms.
\end{abstract}

\newpage

\tableofcontents

\newpage

\section{Introduction}
The theory of weight functions is fundamental in the evaluation of stress intensity factors for asymptotic representations near non-regular boundaries such as cusps, 
wedges, cracks. The classical work of Bueckner defined weight functions for several types of cracks, both in 2D and 3D, as the stress intensity factors corresponding to 
the point force loads applied to the faces of the cracks placed in a homogeneous continuum (Bueckner, 1987, 1989). In particular, for a penny shaped crack in three 
dimensions, as well as a half-plane crack, Bueckner has introduced the weight functions corresponding to a general distribution of forces on the opposite faces of 
the crack, which also included the case of forces acting in the same direction. 

In the present paper, we use the term ``symmetric'' load for forces acting on opposite crack faces and in opposite directions, whereas the term ``skew-symmetric'' or 
``anti-symmetric'' load will be used for forces acting on opposite crack faces but in the same direction, see Fig.~\ref{fig01a}. Consequently, symmetric loads are always 
self-balanced, whereas skew-symmetric loads are not, in general (however, we shall consider only the case of self-balanced loading). Note that this terminology is 
different from that usually in use, where the words ``symmetric'' and ``anti-symmetric'' are intended in terms of the symmetry about the plane containing the crack, 
so that, for example, shear forces acting on opposite crack faces but in the same direction are considered to be symmetric (see for example Meade and Keer, 1984).
\begin{figure}[!htb]
\begin{center}
\vspace*{3mm}
\includegraphics[width=10cm]{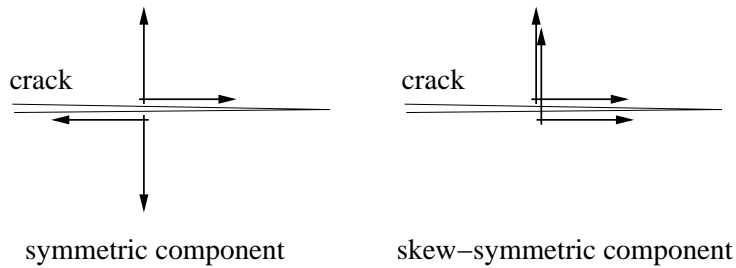}
\caption{\footnotesize Symmetric and skew-symmetric parts of the loading in the case of single point forces. In practical cases, the loading is 
given as a combination of point forces and/or distributed forces in such a way it is self-balanced (see for example the problem analysed in 
Section \ref{example}).}
\label{fig01a}
\end{center}
\end{figure}

Although in two dimensions skew-symmetric loading does not contribute to the stress intensity factors, it becomes essential in three dimensions. Both symmetric and 
skew-symmetric loads on the faces of a half-plane crack in a three dimensional homogeneous elastic space are also considered by Meade and Keer (1984).

The situation when the crack is placed at an interface between two dissimilar elastic media is substantially different from the analogue corresponding to a homogeneous 
elastic space with a crack. Here, even in the case of two dimensions (plane strain or plane stress), the stress components oscillate near the crack tip and also the 
skew-symmetric loads generate non-zero stress intensity factors. Also for the Mode III, where the stress components do not oscillate, there is a non-vanishing  
skew-symmetric component of the weight function. 

Symmetric twodimensional weight functions (i.e. stress intensity factors for symmetric opening loads applied on the crack faces) for interfacial cracks were analysed by 
Hutchinson, Mear and Rice (1987). The problem of symmetric three dimensional Bueckner's weight functions for interfacial cracks has been addressed by Lazarus and Leblond 
(1998) and Piccolroaz {\em et al.}\ (2007). On the basis of these results, Pindra {\em et al.}\ (2008) studied the evolution in time of the deformation of the front of a 
semi-infinite 3D interface crack propagating quasistatically in an infinite heterogeneous elastic body.

However, to our best knowledge, the skew-symmetric weight function for interfacial cracks has never been constructed for the plane strain case nor even for the Mode III 
deformation. On the other hand, in most applications, especially for the interfacial crack, the loading is not symmetric and consequently the effects of the skew-symmetric 
loading in the analysis of fracture propagation require a thoroughly investigation.

The aim of this paper is to construct the aforementioned skew-symmetric weight functions for the twodimensional interfacial crack problem. To this purpose, we develop a 
general approach, which allows us to obtain the general weight functions, as used by Willis and Movchan (1995), defined as non-trivial singular solutions of a boundary 
value problem for interfacial cracks with zero tractions on the crack faces but unbounded elastic energy. By taking the trace of these general weight functions on the 
real axis, one can arrive to the notion of Bueckner's weight function associated with the point force load on the crack faces (Piccolroaz {\em et al.}, 2007). It is then 
shown that the skew-symmetric part of the loading contributes to the stress singularity at the crack edge and thus to the resulting stress intensity factors. These 
results are presented in Section \ref{wfunc} for the plane strain problem and in Section \ref{antiplane} for the Mode III problem.

We summarize that the symmetric weight function matrix $\jump{0.15}{\bU}$ for a plane strain interfacial crack has the form (compare with \eq{jumpu} in the main text):
\beq
\lb{symm}
\barr{ll}
\ds \jump{0.15}{\bU}(x_1) = \frac{1}{2 d_0 \sqrt{2 \pi x_1}} \left\{ \frac{x_1^{-i \epsilon}}{c_1^+} \bmB + \frac{x_1^{i \epsilon}}{c_1^-} \bmB^\top \right\} 
& \text{for } x_1 > 0, \\[5mm]
\ds \jump{0.15}{\bU}(x_1) = 0 & \text{for } x_1 < 0,
\earr
\eeq
where $i$ is the imaginary unit, $\bmB = \begin{pmatrix} 1 & -i \\ i & 1 \end{pmatrix}$, the superscript $^\top$ denote transposition, and the bimaterial parameters 
$\epsilon$ and $d_0$ are defined in Appendix \ref{app1}. The quantities $c_1^+$, $c_1^-$ appear as coefficients in the representation of oscillatory terms in the 
asymptotics of stress near the tip of the interfacial crack, as defined in the sequel (see \eq{stress1}--\eq{stress2}). 

The result for the skew-symmetric weight function matrix $\langle {\bf U} \rangle$, unpublished in the earlier literature, reads as follows (compare with \eq{meanu} 
in the main text):
\beq
\lb{skew}
\barr{ll}
\ds \langle \bU \rangle(x_1) = \frac{\Ga}{2} \jump{0.15}{\bU}(x_1) & \text{for } x_1 >0, \\[5mm]
\ds \langle \bU \rangle(x_1) = -i \frac{\alpha (d_* - \gamma_*)}{4d_0^3 \sqrt{-2 \pi x_1}} 
\left\{ \frac{(-x_1)^{-i \epsilon}}{c_1^+} \bmB - \frac{(-x_1)^{i \epsilon}}{c_1^-} \bmB^\top \right\} & \text{for } x_1 < 0,
\earr
\eeq
where the bimaterial constant $\gamma_*$ and the Dundurs parameters $\alpha$ and $d_*$ are defined in Appendix \ref{app1}. The conventional notations 
$\jump{0.15}{f}(x) = f(x,0^+) - f(x,0^-)$ and $\langle f \rangle(x) = \frac{1}{2}(f(x,0^+) + f(x,0^-))$ are in use here to denote the symmetric and skew-symmetric 
components, respectively.

Note that the result for the symmetrical weight functions is consistent with the representation for the stress intensity factors derived by Hutchinson, Mear and Rice 
(1987).

For the Mode III case, the scalar symmetric and skew-symmetric weight functions are (compare with \eq{jumpu3} and \eq{meanu3} in the main text):
\beq
\jump{0.15}{U_3}(x_1) = \left\{
\barr{ll}
\ds \frac{1-i}{\sqrt{2\pi}} x_1^{-1/2}, & \text{for } x_1 > 0, \\[3mm]
0, & \text{for } x_1 < 0,
\earr \right., \quad
\langle U_3\rangle = \frac{\eta}{2} \jump{0.15}{U_3},
\eeq
where $\eta$ is another bimaterial constant defined in Appendix \ref{app1}.

In Section \ref{identity} we use both symmetric and skew-symmetric weight function matrices together with the Betti identity in order to derive the stress intensity 
factors for an interfacial crack subjected to a general load. We summarize here that the integral formula for the computation of the complex stress intensity factor 
in the plane strain problem, $K = K_\modI + i K_\modII$, has the form (compare with \eq{SIF} in the main text):
\beq
\bK = -i \bmM_1^{-1} \lim_{x_1'\to 0} \int_{-\infty}^{0} \left\{\jump{0.15}{\bU}^\top(x_1'-x_1) \bR \langle \bp \rangle(x_1) 
+ \langle\bU\rangle^\top(x_1'-x_1) \bR \jump{0.15}{\bp}(x_1) \right\} dx_1,
\eeq
where $\bK = [K,K^*]^\top$, the superscript $^*$ denotes conjugation, $\bmM_1$ is defined in \eq{emme1} and \eq{stress2}, $\bR$ is defined in \eq{erre} and 
$\langle \bp \rangle(x_1)$, $\jump{0.15}{\bp}(x_1)$ stand for the symmetric and skew-symmetric parts of the loading, respectively.

For the computation of the Mode III stress intensity factor, $K_\modIII$, the integral formula has the form (compare with \eq{k3} in the main text):
\beq
K_\modIII = - \sqrt{\frac{2}{\pi}} \int_{-\infty}^{0} \left\{ \langle p_3 \rangle(x_1) + \frac{\eta}{2} \jump{0.15}{p_3}(x_1) \right\} (-x_1)^{-1/2} dx_1.
\eeq

Needless to say, by replacing the loading with the Dirac delta function, these integral formulae immediately give the Bueckner's weight functions. Also, this approach 
allows us to derive the constants in the high-order terms, which are essential in the perturbation analysis (Lazarus and Leblond, 1998; Piccolroaz {\em et al.}, 2007). 

Section \ref{perturbation} is devoted to the perturbation model of a crack advancing quasi-statically along the interface between two dissimilar media. We develop and prove 
an alternative approach consistent with the idea originally presented in Willis and Movchan (1995). Both symmetric and skew-symmetric weight function matrices are 
essential in this perturbation analysis.

A simple illustrative example is given in Section \ref{example}, which shows quantitatively the influence of the skew-symmetric loading for 
interfacial cracks.

Section \ref{prel_analysis} is introductory and contains auxiliary results concerning the physical fields around the interfacial crack tip, which are necessary for the analysis presented 
in Sections \ref{wfunc}--\ref{antiplane}.

Finally, the Appendix \ref{app1} collects all bimaterial parameters involved in the analysis of the interfacial crack, including the well-known parameters as well as 
the new parameters appearing in the skew-symmetric problem. The parameters have been classified according to whether they are involved in the symmetrical or 
skew-symmetrical weight functions. The Appendix \ref{app3} provides the theorems required for proving the approach presented in Section \ref{perturbation}.

\section{Preliminary results for an interfacial crack}
\lb{prel_analysis}
To introduce the main notations and the mathematical framework for our model, we start with the analysis of the displacement and stress fields around a semi-infinite 
plane crack located on an interface between two dissimilar isotropic elastic media, with elastic constants denoted by $\nu_\pm,\mu_\pm$, see Fig. \ref{fig02a}.
\begin{figure}[!htb]
\begin{center}
\vspace*{3mm}
\includegraphics[width=9cm]{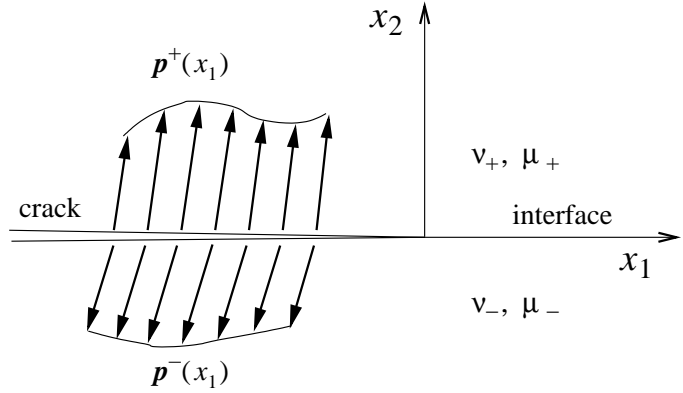}
\caption{\footnotesize Geometry of the model.}
\label{fig02a}
\end{center}
\end{figure}

The loading is given by tractions acting upon the crack faces. In terms of a Cartesian coordinate system with the origin at the crack tip, traction components are 
defined as follows 
\beq
\lb{load}
\sigma_{2j}^\pm(x_1,0^\pm) = p_j^\pm(x_1), \quad \text{for } x_1 < 0, \quad j = 1,2,
\eeq
where $p_j^\pm(x_1)$ are prescribed functions.

The load is assumed to be self-balanced, so that its principal force and moment vectors are equal to zero. We also assume that the forces are applied outside a 
neighbourhood of the crack tip and vanish at infinity (specific assumptions on the behaviour of the loading will be discussed in the sequel). The body forces are 
assumed to be zero. It is convenient to use the notion of the symmetric and skew-symmetric parts of the loading
\beq
\lb{parts}
\langle p_j \rangle(x_1) = \frac{p_j^+(x_1) + p_j^-(x_1)}{2}, \quad \jump{0.15}{p_j}(x_1) = p_j^+(x_1) - p_j^-(x_1), \quad j = 1,2.
\eeq

The solution to the physical problem is sought in the class of functions which vanish at infinity and have finite elastic energy.

The results presented in Sections \ref{mellin} and \ref{asymptotics} are known in the literature (see for example: Williams, 1959; Erdogan, 1963; Rice, 1965; 
Willis, 1971), at least for symmetric loading. These results, together with the results on higher order terms presented in Section \ref{high}, which are less common 
in the literature, provide the formulae needed for the perturbation analysis developed in the sequel of the paper.

\subsection{Representation in terms of Mellin transforms}
\lb{mellin}
In polar coordinates the Mellin transforms for the 
displacement vector and the stress tensor with locally bounded elastic energy are defined as follows 
$$
\tilde{\bu}(s,\theta) = \int_0^\infty \bu(r,\theta) r^{s-1} dr, \quad
\tilde{\bsigma}(s,\theta) = \int_0^\infty \bsigma(r,\theta) r^s dr, 
$$
and they are represented by analytic functions in the strips $-\vartheta_0 < \Re(s) < \vartheta_\infty$ and $-\gamma_0 < \Re(s) < \gamma_\infty$, respectively, where 
$\vartheta_0,\vartheta_\infty \ge 0$ ($\vartheta_0 + \vartheta_\infty > 0$), $\gamma_0,\gamma_\infty > 0$ are constants related to the behaviour of the solution at the 
crack tip and at infinity, namely
\beq
\lb{exponents}
\bu(r,\theta) = 
\left\{
\barr{ll}
O(r^{\vartheta_0}), & r \to 0, \\
O(r^{-\vartheta_\infty}), & r \to \infty, 
\earr
\right.
\quad
\bsigma(r,\theta) = 
\left\{
\barr{ll}
O(r^{\gamma_0-1}), & r \to 0, \\
O(r^{-\gamma_\infty-1}), & r \to \infty, 
\earr
\right.
\eeq

Correspondingly, the inverse transforms are 
\beq
\lb{inverse}
\bu(r,\theta) = 
\frac{1}{2\pi i} \int_{\omega_1 - i\infty}^{\omega_1 +i\infty} \tilde{\bu}(s,\theta) 
r^{-s} ds, \quad
\bsigma(r,\theta) = 
\frac{1}{2\pi i} \int_{\omega_2 - i\infty}^{\omega_2 + i\infty} \tilde{\bsigma}(s,\theta) 
r^{-s - 1} ds, 
\eeq
where $-\vartheta_0 < \omega_1 < \vartheta_\infty$ and $-\gamma_0 < \omega_2 < \gamma_\infty$.

\subsubsection{General solution to the field equations of linear elasticity} 
\lb{general}
In polar coordinates, with the origin at the crack tip, the Mellin transforms of stresses and displacements satisfy the relations (the material constants involved in 
these representations are reported in Appendix \ref{app1})
$$
\tilde{\sigma}_{\theta\theta}^\pm = 
C_1^\pm \cos[(s + 1)\theta] + C_2^\pm \cos[(s - 1)\theta] + C_3^\pm \sin[(s + 1)\theta] 
+ C_4^\pm \sin[(s - 1)\theta],
$$
$$
\tilde{\sigma}_{rr}^\pm = -\frac{1}{s}\left[\tilde{\sigma}_{\theta\theta}^\pm 
- \frac{1}{s-1}\nderiv{}{\theta}{2}\tilde{\sigma}_{\theta\theta}^\pm\right], \quad
\tilde{\sigma}_{r\theta}^\pm = \frac{1}{s-1} \deriv{}{\theta}\tilde{\sigma}_{\theta\theta}^\pm,
$$
$$
\tilde{u}_r^\pm = \frac{1}{2s \mu_\pm}[\tilde{\sigma}_{\theta\theta}^\pm 
- (1 - \nu_\pm)\tilde{\sigma}_0^\pm], \quad
\tilde{u}_\theta^\pm = -\frac{1}{2s \mu_\pm}\left[\tilde{\sigma}_{r\theta}^\pm 
+ \frac{1 - \nu_\pm}{s + 1}\deriv{}{\theta}\tilde{\sigma}_0^\pm\right], \quad
$$
where $\tilde{\sigma}_0^\pm = \tilde{\sigma}_{rr}^\pm + \tilde{\sigma}_{\theta\theta}^\pm$.

Hence, we deduce
$$
\tilde{\sigma}_{rr}^\pm = 
-\frac{s + 3}{s - 1} C_1^\pm \cos[(s + 1)\theta] - C_2^\pm \cos[(s - 1)\theta]
- \frac{s + 3}{s - 1} C_3^\pm \sin[(s + 1)\theta] - C_4^\pm \sin[(s - 1)\theta],
$$
$$
\tilde{\sigma}_{r\theta}^\pm = 
-\frac{s + 1}{s - 1} C_1^\pm \sin[(s + 1)\theta] - C_2^\pm \sin[(s - 1)\theta]
+ \frac{s + 1}{s - 1} C_3^\pm \cos[(s + 1)\theta] + C_4^\pm \cos[(s - 1)\theta],
$$
$$\barr{l}
\ds
\tilde{u}_r^\pm = 
\frac{1}{2s \mu_\pm} \left\{ C_1^\pm \cos[(s + 1)\theta] + C_2^\pm \cos[(s - 1)\theta]
+ C_3^\pm \sin[(s + 1)\theta] + C_4^\pm \sin[(s - 1)\theta] \right. \\[3mm]
\ds \hspace{60mm}
\left. + \frac{4(1 - \nu_\pm)}{s - 1}
\left[ C_1^\pm \cos[(s + 1)\theta] + C_3^\pm \sin[(s + 1)\theta] \right] \right\},
\earr
$$
$$
\barr{l}
\ds
\tilde{u}_\theta^\pm = 
-\frac{1}{2s \mu_\pm} \left\{ -C_1^\pm \frac{s + 1}{s - 1} \sin[(s + 1)\theta] 
- C_2^\pm \sin[(s - 1)\theta]
+ C_3^\pm \frac{s + 1}{s - 1} \cos[(s + 1)\theta] \right. \\[3mm]
\ds \hspace{30mm}
\left. + C_4^\pm \cos[(s - 1)\theta] + \frac{4(1 - \nu_\pm)}{s - 1}
\left[ C_1^\pm \sin[(s + 1)\theta] - C_3^\pm \cos[(s + 1)\theta] \right] \right\}.
\earr
$$

The coefficients $C_j^\pm$ depend on the loading and will be given in the next section.

\subsubsection{The boundary conditions and full field solution} 
\lb{full}
The coefficients $C_j^\pm$ are obtained from the boundary conditions on the crack faces, namely 
$$
\sigma_{\theta\theta}^\pm(r,\pm\pi) = p^\pm(r), \quad \sigma_{r\theta}^\pm(r,\pm\pi) = q^\pm(r),
$$
where $p^\pm(r), q^\pm(r)$ are prescribed functions given by $p^\pm(r) = p_2^\pm(-r)$, $q^\pm(r) = p_1^\pm(-r)$, see \eq{load}. Mishuris and Kuhn (2001) give the result in a 
compact form as the solution to the following system of algebraic equations
\beq
\lb{system}
\frac{2\sin \pi s}{s - 1}
\left[
\barr{l}
C_1^\pm(s) \\[3mm]
C_3^\pm(s) 
\earr
\right] = 
\left[
\barr{ll}
-\sin \pi s & \pm \cos \pi s \\[3mm]
\pm \cos \pi s & \sin \pi s
\earr
\right]
\bD(s)
\pm \left[
\barr{l}
\ds \langle \tilde{q} \rangle \pm \frac{1}{2} \jump{0.15}{\tilde{q}} \\[3mm]
\ds \langle \tilde{p} \rangle \pm \frac{1}{2} \jump{0.15}{\tilde{p}} 
\earr
\right],
\eeq
$$
\left[
\barr{l}
\ds
C_1^\pm(s) + C_2^\pm(s) \\[3mm]
\ds
\frac{s + 1}{s - 1} C_3^\pm(s) + C_4^\pm(s)
\earr
\right] = \bD(s),
$$
where $\bD(s) = - \bPhi^{-1}(s) \bF(s)$ and 
$$
\bPhi(s) = 
\left[
\barr{ll}
\cos \pi s & d_* \sin \pi s \\[3mm]
-d_* \sin \pi s & \cos \pi s 
\earr
\right],
\quad
\bF(s) = 
\left[
\barr{l}
\ds \langle \tilde{p} \rangle(s) + \frac{\alpha}{2} \jump{0.15}{\tilde{p}}(s) \\[3mm]
\ds \langle \tilde{q} \rangle(s) + \frac{\alpha}{2} \jump{0.15}{\tilde{q}}(s)
\earr
\right].
$$

Note that the prescribed boundary conditions appear in \eq{system} and in the definition of the vector $\bF(s)$, in terms of the symmetric and skew-symmetric parts of the 
loading. The material constants $d_*$ and $\alpha$ are the Dundurs parameters, as described in the Appendix \ref{app1}.

Denoting the determinant of $\bPhi(s)$ by $\delta(s) = \cos^2 \pi s + d_*^2 \sin^2 \pi s$, we obtain the full field solution in terms of the Mellin transform as follows 
$$\barr{l}
\ds 
C_1^\pm(s) = \frac{s - 1}{2 \delta(s)}
\Bigg\{
\langle \tilde{p} \rangle[(1 \mp d_*) \cos \pi s] 
+ \frac{1}{2} \jump{0.15}{\tilde{p}}[(1 \mp d_*)\alpha \cos \pi s] \\[3mm]
\ds \hspace{15mm} 
+ \langle \tilde{q} \rangle\left[-(1 \mp d_*)d_* \sin \pi s \right]
+ \frac{1}{2} \jump{0.15}{\tilde{q}}\left[(d_* - \alpha)d_* \sin \pi s 
+ (1 \mp \alpha)\frac{\cos^2 \pi s}{\sin \pi s}\right]
\Bigg\},
\earr
$$
$$
\barr{l}
\ds 
C_3^\pm(s) = \frac{s - 1}{2 \delta(s)}
\Bigg\{
\langle \tilde{p} \rangle\left[-(1 \mp d_*)d_* \sin \pi s\right]
+ \frac{1}{2} \jump{0.15}{\tilde{p}}\left[(d_*-\alpha)d_* \sin \pi s 
+ (1 \mp \alpha)\frac{\cos^2 \pi s}{\sin \pi s}\right] \\[3mm]
\ds \hspace{15mm}
+ \langle \tilde{q} \rangle[-(1 \mp d_*) \cos \pi s] 
+ \frac{1}{2} \jump{0.15}{\tilde{q}}[-(1 \mp d_*)\alpha \cos \pi s] 
\Bigg\},
\earr
$$
$$
\barr{l}
\ds 
C_2^\pm(s) = -\frac{1}{2 \delta(s)}
\Bigg\{
\langle \tilde{p} \rangle[(1 + s \pm (1 - s)d_*) \cos \pi s] 
+ \frac{1}{2} \jump{0.15}{\tilde{p}}[(1 + s \pm (1 - s)d_*)\alpha \cos \pi s] \\[3mm]
\ds \hspace{40mm} 
+ \langle \tilde{q} \rangle\left[(\pm(1 + s) + (1 - s)d_*)d_* \sin \pi s\right] \\[3mm]
\ds \hspace{40mm}
+ \frac{1}{2} \jump{0.15}{\tilde{q}}\left[-(\alpha(1 + s) + d_*(1 - s))d_* \sin \pi s 
- (1 \mp \alpha)(1 - s)\frac{\cos^2 \pi s}{\sin \pi s}\right]
\Bigg\},
\earr
$$
$$
\barr{l}
\ds 
C_4^\pm(s) = -\frac{1}{2 \delta(s)}
\Bigg\{
\langle \tilde{p} \rangle\left[(1 - s \pm (1 + s)d_*)d_* \sin \pi s\right] \\[3mm]
\ds \hspace{15mm}
+ \frac{1}{2} \jump{0.15}{\tilde{p}}\left[(\alpha(1 - s) + d_*(1 + s))d_* \sin \pi s 
+ (1 \mp \alpha)(1 + s)\frac{\cos^2 \pi s}{\sin \pi s}\right] \\[3mm]
\ds \hspace{15mm}
+ \langle \tilde{q} \rangle[(1 - s \pm (1 + s)d_*) \cos \pi s] 
+ \frac{1}{2} \jump{0.15}{\tilde{q}}[(1 - s \pm (1 + s)d_*)\alpha \cos \pi s] 
\Bigg\}.
\earr
$$

We note that some of the poles of the functions $C_j^\pm(s)$ are obtained from the solution of the equation $\delta(s) = \cos^2 \pi s + d_*^2 \sin^2 \pi s = 0$, so that 
the poles are given by 
\beq
\lb{poles}
s_n^\pm = \frac{1 - 2n}{2} \pm i\epsilon,
\eeq
where $n$ is an integer, and $\epsilon$ is the bimaterial constant defined in Appendix \ref{app1}.

For the purpose of evaluation of residues, it will be useful to have at hand the derivative of $\delta(s)$. This is given by 
$\delta'(s) = 2\pi(-1+d_*^2) \sin \pi s \cos \pi s$, and its value at $s = s_n^\pm$ is as follows
\beq
\lb{deltap}
\delta'\left(s_n^\pm\right) = \pm 2\pi i d_*.
\eeq
This formula will be used in the next section for the derivation of the asymptotic estimates of stress and displacement fields near the crack tip.

\subsection{Asymptotic representations near the crack tip and the stress intensity factors}
\lb{asymptotics}
The solution outlined above leads to the asymptotic representations of stress and displacement near the crack tip. Analysis of the singular terms in the stress components 
yields the complex stress intensity factor for the interfacial crack, where the stress singularity is accompanied by the oscillatory behaviour of the physical fields.

\subsubsection{Asymptotics of stress and displacement near the crack tip}
Taking into account the assumptions on the applied forces, the inspection of the results for $\tilde{\bsigma}(s,\theta)$ shows that $\tilde{\bsigma}(s,\theta)$ is 
analytic in the strip $-1/2 < \Re(s) < 1/2$. Therefore, in \eq{exponents} we have $\gamma_0 = \gamma_\infty = 1/2$, and choosing $\omega_2 = 0$, the inverse transform 
is given by 
$$
\bsigma(r,\theta) = \frac{1}{2\pi i} \int_{0-i\infty}^{0+i\infty} 
\tilde{\bsigma}(s,\theta) r^{-s-1} ds.
$$

By means of the Cauchy's residue theorem, we can get the asymptotics of $\bsigma(r,\theta)$ as $r \to 0$ as follows
$$
\bsigma(r,\theta) =  \sum_{\pm} \Res[\tilde{\bsigma}(s,\theta) r^{-s-1},s=s_1^\pm] + 
\frac{1}{2\pi i} \int_{\omega-i\infty}^{\omega+i\infty} \tilde{\bsigma}(s,\theta) r^{-s-1} ds, 
$$
where $\omega < -1/2$. The first two terms are the leading term, of the order $O(r^{-1/2})$, and the last term is a higher order term, of the order 
$O(r^{\beta}), \beta > -1/2$. The notation $\Res[f(s),s=s^*]$ stands for the residue of $f$ at the pole $s = s^*$,
$$
\Res[f(s),s=s^*] = \frac{1}{(m-1)!} \left. \frac{d^{m-1} f(s)(s-s^*)^m}{ds^{m-1}}\right|_{s=s^*},
$$
where $m$ is the order of the pole.

It suffices now to compute the residues at the poles $s = s_1^\pm = -1/2 \pm i\epsilon$,
$$
\barr{l}
\Res\left[\tilde{\bsigma}(s,\theta)\, r^{-s - 1}, s = -1/2 \pm i\epsilon\right] = \\[3mm]
\hspace{30mm}
\ds
= \left[\tilde{\bsigma}(s,\theta) \delta(s)\, r^{-s - 1}\right]_{s = -1/2 \pm i\epsilon} 
\cdot \Res\left[\frac{1}{\delta(s)}, s = -1/2 \pm i\epsilon\right] \\[3mm]
\hspace{30mm}
\ds
= \left[\tilde{\bsigma}(s,\theta) \delta(s)\, r^{-s - 1}\right]_{s = -1/2 \pm i\epsilon} 
\cdot \lim_{s \to -1/2 \pm i\epsilon} \frac{s-(-1/2 \pm i\epsilon)}{\delta(s)} \\[3mm]
\hspace{30mm}
\ds
= \left[\tilde{\bsigma}(s,\theta) \delta(s)\, r^{-s - 1}\right]_{s = -1/2 \pm i\epsilon} 
\cdot \frac{1}{\delta'(-1/2 \pm i\epsilon)} \\[3mm]
\hspace{30mm}
\ds
= \mp \frac{i}{2\pi d_*} 
\left[\tilde{\bsigma}(s,\theta) \delta(s)\right]_{s = -1/2 \pm i\epsilon} \cdot 
r^{-1/2 \mp i\epsilon},
\earr
$$
where in the last equality we used the formula \eq{deltap}.

It can be shown that $\beta = 0$, so that the leading term asymptotics of the stress field is 
\beq
\lb{sigma}
\bsigma(r,\theta) =  \beth_{-1/2}(r,\theta) + O(1),
\eeq
where
$$
\beth_{-1/2}(r,\theta) = 
\frac{i}{2\pi d_*}
\left\{-\left[\tilde{\bsigma}(s,\theta) \delta(s)\right]_{s = -1/2 + i\epsilon} 
r^{-1/2 - i\epsilon}
+ \left[\tilde{\bsigma}(s,\theta) \delta(s)\right]_{s = -1/2 - i\epsilon} 
r^{-1/2 + i\epsilon}\right\}.
$$

An inspection of the results for $\tilde{\bu}(s,\theta)$ shows that it is analytic in the same strip $-1/2 < \Re(s) < 1/2$, except for a simple pole in $s = 0$. Since 
we seek the solution vanishing at infinity, the strip of analyticity for the function $\tilde{\bu}(s,\theta)$ is $0 < \Re(s) < 1/2$. Therefore, in \eq{exponents} we 
have $\vartheta_0 = 0$, $\vartheta_\infty = \gamma_\infty = 1/2$. Choosing $\omega_1$ in this interval and applying the Cauchy's residue theorem, we can write the 
leading term asymptotics of $\bu(r,\theta)$ as $r \to 0$ as follows 
$$
\bu(r,\theta) = \Res\left[\tilde{\bu}(s,\theta)\, r^{-s}, s = 0\right] + \sum_{\pm} \Res\left[\tilde{\bu}(s,\theta)\, r^{-s}, s = s_1^\pm\right] 
+ \frac{1}{2 \pi i} \int_{\omega-i\infty}^{\omega+i\infty} \tilde{\bu}(s,\theta) r^{-s} ds,
$$
where $\omega < -1/2$. The first term of the order $O(1)$ corresponds to the translation of the crack tip and reads
\beq
\lb{qtheta}
\bu(0,\theta) = \bV_0(\theta) =  
\bQ(\theta)\bw_0, \quad \bQ(\theta) = \left[\barr{ll} \cos\theta & \sin\theta \\[3mm] -\sin\theta & \cos\theta \earr\right],
\eeq
where
$$
\barr{ll}
\ds w_{01} & \ds = u_1(0,0) \\[3mm]
& \ds = \frac{1}{2\pi\mu_\pm} \left\{ [1 - 2\nu_\pm \pm 2d_*(\nu_\pm - 1)] \pi \int_{0}^{\infty} \langle p \rangle(r) dr + 
(-1 \pm \alpha)(\nu_\pm - 1) \int_{0}^{\infty} \jump{0.15}{q}(r) (\log r) dr \right\},
\earr
$$
$$
\barr{ll}
\ds w_{02} & \ds = u_2(0,0) \\[3mm]
& \ds = \frac{1}{2\pi\mu_\pm} \left\{ -[1 - 2\nu_\pm \pm 2d_*(\nu_\pm - 1)] \pi \int_{0}^{\infty} \langle q \rangle(r) dr + 
(-1 \pm \alpha)(\nu_\pm - 1) \int_{0}^{\infty} \jump{0.15}{p}(r) (\log r) dr \right\},
\earr
$$
are the Cartesian components of the translation of the crack tip.
The second and third terms are of the order $O(r^{1/2})$, and the last term is a higher order term, which is of the order $O(r^{\beta}), \beta > 1/2$. 

It can be shown that $\beta = 1$, so that the computation of the residues leads to the asymptotics of the displacement field as follows
\beq
\lb{disp}
\bu(r,\theta) 
= \bV_0(\theta) + \bV_{1/2}(r,\theta) + O(r),
\eeq
where
$$
\bV_{1/2}(r,\theta) = 
\frac{i}{2\pi d_*}
\left\{-\left[\tilde{\bu}(s,\theta) \delta(s)\right]_{s = -1/2 + i\epsilon} 
r^{1/2 - i\epsilon}
+ \left[\tilde{\bu}(s,\theta) \delta(s)\right]_{s = -1/2 - i\epsilon} 
r^{1/2 + i\epsilon}\right\}.
$$

Note that the remainders $O(1)$ in \eq{sigma} and $O(r)$ in \eq{disp} are rough estimates and will be refined in the Section \ref{high}. However, their physical meaning is 
already evident: the first corresponds to the so-called T-stress and the second to a rigid body rotation superimposed to a uniform deformation near the crack tip. 

\subsubsection{The complex stress intensity factor $K = K_\modI + i K_\modII$}
From the full field solution outlined in Sections \ref{general} and \ref{full} we obtain
\beq
\lb{eq46}
\tilde{\sigma}_{\theta\theta}(s,0) + i \tilde{\sigma}_{r\theta}(s,0) = 
-\left\{\langle \tilde{p} \rangle(s) + i\langle \tilde{q} \rangle(s) + \frac{\alpha}{2} \jump{0.15}{\tilde{p}}(s) + i \frac{\alpha}{2} \jump{0.15}{\tilde{q}}(s)\right\}
\frac{\cos \pi s + i d_* \sin \pi s}{\delta(s)}.
\eeq

Applying the formula (\ref{sigma}), the leading term asymptotics may be written as
$$
\sigma_{\theta\theta}(r,0) + i \sigma_{r\theta}(r,0) = \frac{K}{\sqrt{2\pi}} r^{-1/2 + i\epsilon} + O(1),
$$
where
$$
K = - \sqrt{\frac{2}{\pi}} \cosh(\pi\epsilon) \left\{\langle \tilde{p} \rangle(s) + i\langle \tilde{q} \rangle(s) 
+ \frac{\alpha}{2} \jump{0.15}{\tilde{p}}(s) + i \frac{\alpha}{2} \jump{0.15}{\tilde{q}}(s)\right\}_{s = -1/2 - i\epsilon}
$$
is the complex stress intensity factor.

Correspondingly, by means of the formula (\ref{disp}), we obtain
$$
\jump{0.15}{u_\theta}(r) + i\jump{0.15}{u_r}(r) = 
-\frac{(1 - \nu_+) / \mu_+ + (1 - \nu_-) / \mu_-}{(1/2 + i\epsilon)\cosh(\pi\epsilon)} \frac{K}{\sqrt{2\pi}} r^{1/2 + i\epsilon} + O(r), 
$$
where we used the notation $\jump{0.15}{f}(r) = f(r,\pi) - f(r,-\pi)$. Note that the zero-order term $\bV_0(\theta)$ present in \eq{disp} disappears in the last formula.

The formula for the complex stress intensity factor is then
$$
\barr{l}
\ds
K = - \sqrt{\frac{2}{\pi}} \cosh(\pi\epsilon) 
\int_{0}^{\infty}\left\{\langle p \rangle(r) + i\langle q \rangle(r) + \frac{\alpha}{2} \jump{0.15}{p}(r) + i \frac{\alpha}{2} \jump{0.15}{q}(r)\right\} 
r^{-1/2 - i\epsilon} dr.
\earr
$$

As expected, this representation is consistent with Hutchinson, Mear and Rice (1987) who considered a crack loaded by point forces a distance $a$ behind the tip:
$$
p^+(r) = p^-(r) = -P \delta(r-a), \quad q^+(r) = q^-(r) = -Q \delta(r-a),
$$
where $\delta(\cdot)$ is the Dirac delta function, so that
$$
\langle p \rangle(r) = -P \delta(r-a), \quad \jump{0.15}{p}(r) = 0, \quad \langle q \rangle(r) = -Q \delta(r-a), \quad \jump{0.15}{q}(r) = 0.
$$
For this loading we obtain
$$
K  = \sqrt{\frac{2}{\pi}} \cosh(\pi\epsilon) \int_0^\infty \left\{ P \delta(r-a) + i Q \delta(r-a) \right\} r^{-1/2-i\epsilon} dr 
= \sqrt{\frac{2}{\pi}} \cosh(\pi\epsilon) (P+iQ) a^{-1/2-i\epsilon},
$$
which fully agrees with Hutchinson, Mear and Rice (1987).

\subsection{High-order asymptotics}
\lb{high}
The asymptotic procedure involving the evaluation of stress intensity factors for cracks with a slightly perturbed front requires the high-order asymptotics of the 
displacement and stress fields near the crack edge. The procedure of the previous section can be extended to the high-order terms, and hence the high-order asymptotics 
of the stress field is given by
\beq
\lb{beth}
\barr{l}
\bsigma(r,\theta) = \beth_{-1/2}(r,\theta) + \bT(\theta) + \beth_{1/2}(r,\theta)
+ \bS(\theta)r + \beth_{3/2}(r,\theta) + O(r^2), 
\earr
\eeq
where (the superscript $^\top$ denotes transposition, $x_1 = r \cos\theta$, $x_2 = r \sin\theta$, $\bQ(\theta)$ is given in \eq{qtheta})
$$
\bT(\theta) = \Res\left[\tilde{\bsigma}(s,\theta), s = -1\right] = T \bQ(\theta) \left[\barr{ll} 1 & 0 \\[3mm] 0 & 0 \earr\right] \bQ^\top(\theta), 
$$
$$
\barr{l}
\ds \bS(\theta) = \Res\left[\tilde{\bsigma}(s,\theta), s = -2\right] = \\[3mm]
\ds = \frac{1 \mp \alpha}{\pi \sqrt{x_1^2 + x_2^2}} \bQ(\theta) 
\left[
\barr{ll}
\ds x_2 \int_{0}^{\infty} \jump{0.15}{p}(r) \frac{dr}{r^2} - x_1 \int_{0}^{\infty} \jump{0.15}{q}(r) \frac{dr}{r^2} 
& \ds x_2 \int_{0}^{\infty} \jump{0.15}{q}(r) \frac{dr}{r^2} \\[3mm]
\ds x_2 \int_{0}^{\infty} \jump{0.15}{q}(r) \frac{dr}{r^2} 
& \ds 0 
\earr
\right] 
\bQ^\top(\theta),
\earr
$$
and
$$
\barr{ll}
\ds \beth_{\,l/2}(r,\theta) 
&
\ds = \sum_{\pm} \Res\left[\tilde{\bsigma}(s,\theta)\, r^{-s - 1}, s = -l/2 -1 \pm i\epsilon\right] \\[5mm]
&
\ds = \frac{i}{2\pi d_*}
\left\{-\left[\tilde{\bsigma}(s,\theta) \delta(s)\right]_{s = -l/2 -1 + i\epsilon} 
r^{l/2 - i\epsilon}
+ \left[\tilde{\bsigma}(s,\theta) \delta(s)\right]_{s = -l/2 -1 - i\epsilon} 
r^{l/2 + i\epsilon}\right\}, \quad l = 1,3.
\earr
$$

Note that $T$ is the T-stress given by
$$
T = \frac{1 \mp \alpha}{\pi} \int_{0}^{\infty} \jump{0.15}{q}(r) \frac{dr}{r}.
$$

Applying the formula \eq{beth} to \eq{eq46}, we finally obtain
$$
\sigma_{\theta\theta}(r,0) + i \sigma_{r\theta}(r,0) = \frac{K}{\sqrt{2\pi}} r^{-1/2 + i\epsilon} 
+ \frac{A}{\sqrt{2\pi}} r^{1/2 + i\epsilon} + \frac{B}{\sqrt{2\pi}} r^{3/2 + i\epsilon} 
+ O(r^{2}),
$$
where the constants in the high-order terms are 
$$
A = \sqrt{\frac{2}{\pi}} \cosh(\pi\epsilon) 
\int_0^\infty \left\{\langle p \rangle(r) + i\langle q \rangle(r) + \frac{\alpha}{2} \jump{0.15}{p}(r) + i \frac{\alpha}{2} \jump{0.15}{q}(r)\right\} 
r^{-3/2-i\epsilon} dr, 
$$
$$
B = -\sqrt{\frac{2}{\pi}} \cosh(\pi\epsilon) 
\int_0^\infty \left\{\langle p \rangle(r) + i\langle q \rangle(r) + \frac{\alpha}{2} \jump{0.15}{p}(r) + i \frac{\alpha}{2} \jump{0.15}{q}(r)\right\} 
r^{-5/2-i\epsilon} dr. 
$$
If the loading is smooth enough, then we can integrate by parts to obtain
$$
A = \sqrt{\frac{2}{\pi}} \frac{\cosh(\pi\epsilon)}{1/2+i\epsilon} 
\int_0^\infty \left\{\langle p \rangle'(r) + i\langle q \rangle'(r) + \frac{\alpha}{2} \jump{0.15}{p}'(r) + i \frac{\alpha}{2} \jump{0.15}{q}'(r)\right\} 
r^{-1/2-i\epsilon} dr,
$$
$$
B = -\sqrt{\frac{2}{\pi}} \frac{\cosh(\pi\epsilon)}{(1/2+i\epsilon)(3/2+i\epsilon)} 
\int_0^\infty \left\{\langle p \rangle''(r) + i\langle q \rangle''(r) + \frac{\alpha}{2} \jump{0.15}{p}''(r) + i \frac{\alpha}{2} \jump{0.15}{q}''(r)\right\} 
r^{-1/2-i\epsilon} dr.
$$

Correspondingly, we obtain for the high-order asymptotics of the displacement field the expression
$$
\bu(r,\theta) = \bV_0(\theta) + \bV_{1/2}(r,\theta) + \bV_1(\theta)r + \bV_{3/2}(r,\theta) 
+ \bV_2(\theta)r^2 + \bV_{5/2}(r,\theta) + O(r^3), 
$$
where
$$
\bV_l(\theta) = \Res\left[\tilde{\bu}(s,\theta), s = -l\right] = \bQ(\theta) r^{-l} \bw_l(r,\theta), \quad l = 1,2,
$$
\vspace{3mm}
$$
\barr{ll}
\ds \bV_{l/2}  
&
\ds = \sum_{\pm} \Res\left[\tilde{\bu}(s,\theta)\, r^{-s}, s = -l/2 \pm i\epsilon\right] \\[5mm]
&
\ds = \frac{i}{2\pi d_*}
\left\{-\left[\tilde{\bu}(s,\theta) \delta(s)\right]_{s = -l/2 + i\epsilon} 
r^{l/2 - i\epsilon}
+ \left[\tilde{\bu}(s,\theta) \delta(s)\right]_{s = -l/2 - i\epsilon} 
r^{l/2 + i\epsilon}\right\}, \quad l = 3,5,
\earr
$$
and
$$
w_{11} = \frac{1 \mp \alpha}{2\pi \mu_\pm} (1 - \nu_\pm) \left\{x_1\int_{0}^{\infty} \jump{0.15}{q}(r) \frac{dr}{r} - 
x_2\int_{0}^{\infty} \jump{0.15}{p}(r) \frac{dr}{r}\right\},
$$
$$
w_{12} = \frac{1 \mp \alpha}{2\pi \mu_\pm} \left\{(1 - \nu_\pm) x_1\int_{0}^{\infty} \jump{0.15}{p}(r) \frac{dr}{r} - 
\nu_\pm x_2\int_{0}^{\infty} \jump{0.15}{q}(r) \frac{dr}{r}\right\},
$$
$$
w_{21} = \frac{1 \mp \alpha}{8\pi \mu_\pm} \left\{4(1 - \nu_\pm)x_1x_2\int_{0}^{\infty} \jump{0.15}{p}(r) \frac{dr}{r^2} + 
[(x_1^2 + x_2^2) - (3 - 2\nu_\pm)(x_1^2 - x_2^2)]\int_{0}^{\infty} \jump{0.15}{q}(r) \frac{dr}{r^2}\right\},
$$
$$
w_{22} = \frac{1 \mp \alpha}{8\pi \mu_\pm} \left\{-[(x_1^2 + x_2^2) + (1 - 2\nu_\pm)(x_1^2 - x_2^2)]\int_{0}^{\infty} \jump{0.15}{p}(r) \frac{dr}{r^2} + 
4\nu_\pm x_1x_2 \int_{0}^{\infty} \jump{0.15}{q}(r) \frac{dr}{r^2}\right\}.
$$
We note that $\bw_l(r,\theta)$ is homogeneous of degree $l$ in the variable $r$ and that $w_{11}$ and $w_{12}$ are the Cartesian components of the local rigid body rotation superimposed to a uniform deformation near the crack tip mentioned above.

Hence
$$
\barr{l}
\ds \jump{0.15}{u_\theta}(r) + i\jump{0.15}{u_r}(r) = 
-\frac{(1 - \nu_+) / \mu_+ + (1 - \nu_-) / \mu_-}{\cosh(\pi\epsilon)} \times \\[3mm]
\ds \left\{ \frac{1}{1/2 + i\epsilon} \frac{K}{\sqrt{2\pi}} r^{1/2 + i\epsilon} 
- \frac{1}{3/2 + i\epsilon} \frac{A}{\sqrt{2\pi}} r^{3/2 + i\epsilon} + \frac{1}{5/2 + i\epsilon} \frac{B}{\sqrt{2\pi}} r^{5/2 + i\epsilon} \right\} + O(r^{3}). 
\earr
$$

\section{Symmetric and skew-symmetric weight functions for interfacial cracks}
\lb{wfunc}
In this section, we introduce a special type of singular solutions of a homogeneous problem for an interfacial crack. The traces of these functions on the plane containing 
the crack are known as the weight functions, and they are used in the evaluation of the stress intensity factors in models of linear fracture mechanics. The notations 
$\jump{0.15}{\bU}$ and $\langle \bU \rangle$ will be used for the symmetric and skew-symmetric weight function matrices, as outlined in the sequel of the paper.

We refer to the earlier publications by Antipov (1999), Bercial {\em et al.}\ (2005) and Piccolroaz {\em et al.}\ (2007) for the detailed discussion of the theory of 
weight functions and related functional equations of the Wiener-Hopf type. Here we give an outline, required for our purpose of evaluation of the weight function matrices 
for interfacial cracks.

The special singular solutions for the interfacial crack are defined as the solutions of the elasticity problem where the crack is placed along the positive semi-axis, 
$x_1 > 0$, the boundary conditions are homogeneous (traction-free crack faces) and satisfying special homogeneity properties. In particular, 
\bit
\item[(a)]
the singular solution $\bU = [U_1,U_2]^\top$ satisfies the equation of equilibrium;
\item[(b)]
$\jump{0.15}{\bU} = 0$ when $x_1 < 0$;
\item[(c)]
the associated traction vector acting on the plane containing the crack, $\bSigma = [\Sigma_{21},\Sigma_{22}]^\top$, is continuous and $\bSigma = 0$ when $x_2 = 0$ and 
$x_1 > 0$ (homogeneous boundary conditions); 
\item[(d)]
$\bU$ is a linear combination of homogeneous functions of degree $-1/2 + i\epsilon$ and $-1/2 - i\epsilon$.
\eit

We enphasise here that the domain for weight functions (where the crack is placed along the positive semi-axis) is different from the domain for the physical solution 
(where the crack is placed along the negative semi-axis). The reason for this is to have the fundamental integral identity, introduced in the sequel, eq. \eq{betti}, 
in convolution form (see Piccolroaz {\em et al.}, 2007, for details).

\subsection{The Wiener-Hopf equation}
\subsubsection{Factorization and solution}
Let us introduce the Fourier transforms of the crack-opening singular displacements and the corresponding traction components
$$
\jump{0.15}{\overline{\bU}}^+(\beta) = \int_{0}^{\infty} \jump{0.15}{\bU}(x_1) e^{i\beta x_1} dx_1, \quad
\overline{\bSigma}^-(\beta) = \int_{-\infty}^{0} \bSigma(x_1) e^{i\beta x_1} dx_1.
$$
The superscript $^+$ indicates that the function $\jump{0.15}{\overline{\bU}}^+(\beta)$ is analytic in the upper half-plane 
$\Complex^+ = \{\beta \in \Complex : \Re\beta \in (-\infty,\infty), \Im\beta \in (0,\infty)\}$, whereas the superscript $^-$ indicates that the function 
$\overline{\bSigma}^-(\beta)$ is analytic in the lower half-plane $\Complex^- = \{\beta \in \Complex : \Re\beta \in (-\infty,\infty), \Im\beta \in (-\infty,0)\}$.

These functions are related via the functional equation of the Wiener-Hopf type (Antipov, 1999):
\beq
\lb{wh}
\jump{0.15}{\overline{\bU}}^+(\beta) = 
-\frac{b}{|\beta|} \bG(\beta) \overline{\bSigma}^-(\beta), \quad \beta \in \Reals,
\eeq
where $b = (1 - \nu_+)/\mu_+ + (1 - \nu_-)/\mu_-$ and 
$$
\bG(\beta) = 
\left[
\barr{cc}
1 & -i\sign(\beta) d_* \\[3mm]
i\sign(\beta) d_* & 1
\earr
\right].
$$

The factorization of $-b/|\beta| \bG$ is given by
\beq
\lb{fact}
-\frac{b}{|\beta|} \bG(\beta) = 
-\frac{b}{\beta_+^{1/2} \beta_-^{1/2}} \bX^+(\beta) [\bX^-(\beta)]^{-1}, \quad \beta \in \Reals,
\eeq
where
$$
\bX^+(\beta) = d_0
\left[
\barr{cc}
\cos B^+ & -\sin B^+ \\[3mm]
\sin B^+ & \cos B^+
\earr
\right], \quad \beta \in \Complex^+, \quad
\bX^-(\beta) = d_0^{-1}
\left[
\barr{cc}
\cos B^- & -\sin B^- \\[3mm]
\sin B^- & \cos B^-
\earr
\right], \quad \beta \in \Complex^-,
$$
$d_0 = (1 - d_*)^{1/4}$, and the limit values of the functions $\beta_\pm^{1/2}$, as $\Im(\beta) \to 0^\pm$, are
$$
\beta_+^{1/2} =
\left\{ 
\barr{ll}
\beta^{1/2}, & \beta>0 \\[3mm]
i(-\beta)^{1/2}, & \beta<0
\earr
\right., \quad
\beta_-^{1/2} =
\left\{ 
\barr{ll}
\beta^{1/2}, & \beta>0 \\[3mm]
-i(-\beta)^{1/2}, & \beta<0
\earr
\right.,
$$
and
$$
B^\pm = -\epsilon \log(\mp i\beta), \quad \beta \in \Complex^\pm.
$$
From the definition of singular solutions given in the introduction of this section, it follows that the asymptotic behaviour of the weight functions is given by
$$
\barr{ll}
\jump{0.15}{\bU}(x_1) \sim x_1^{-1/2} \bF_1(x_1), & x_1 \to 0^+, \\[3mm]
\bSigma(x_1) \sim (-x_1)^{-3/2} \bF_2(x_1), & x_1 \to 0^-,
\earr
$$ 
$$
\barr{ll}
\jump{0.15}{\bU}(x_1) \sim x_1^{-1/2} \bF_3(x_1), & x_1 \to \infty, \\[3mm]
\bSigma(x_1) \sim (-x_1)^{-3/2} \bF_4(x_1), & x_1 \to -\infty,
\earr
$$
where $\bF_1,\bF_2,\bF_3,\bF_4$ are bounded functions. The application of the Abelian type theorem \ref{abel} and the Tauberian type theorem \ref{tau} 
(see Appendix \ref{app3}) gives
$$
\barr{ll}
\jump{0.15}{\overline{\bU}}^+(\beta) \sim \beta_+^{-1/2} \tilde{\bF}_1(\beta), & 
\beta \in \Complex^+, \quad \beta \to \infty, \\[3mm]
\overline{\bSigma}^-(\beta) \sim \beta_-^{1/2} \tilde{\bF}_2(\beta), & 
\beta \in \Complex^-, \quad \beta \to \infty,
\earr
$$ 
$$
\barr{ll}
\jump{0.15}{\overline{\bU}}^+(\beta) \sim \beta_+^{-1/2} \tilde{\bF}_3(\beta), & 
\beta \in \Complex^+, \quad \beta \to 0, \\[3mm]
\overline{\bSigma}(\beta) \sim \beta_-^{1/2} \tilde{\bF}_4(\beta), & 
\beta \in \Complex^-, \quad \beta \to 0,
\earr
$$
where $\tilde{\bF}_1,\tilde{\bF}_2,\tilde{\bF}_3,\tilde{\bF}_4$ are bounded functions. Substituting the representation \eq{fact} for the matrix $-b/|\beta|\bG(\beta)$ 
into the Wiener-Hopf equation \eq{wh}, we obtain
\beq
\lb{wiener}
\beta_+^{1/2} [\bX^+(\beta)]^{-1} \jump{0.15}{\overline{\bU}}^+(\beta) =
-\frac{b}{\beta_-^{1/2}} [\bX^-(\beta)]^{-1} \overline{\bSigma}^-(\beta), \quad \beta \in \Reals.
\eeq
Since $[\bX^+(\beta)]^{-1}$ and $[\bX^-(\beta)]^{-1}$ are bounded functions and taking into account the asymptotic behaviour of the weight functions at infinity, it 
follows that the LHS and RHS of \eq{wiener} behave as $O(1)$, $\beta \to \infty$, and from the Liouville theorem, they are equal to the same constant 
$\be = [e_1,e_2]^\top$, where $e_1,e_2$ are arbitrary, so that
\beq
\lb{wf}
\jump{0.15}{\overline{\bU}}^+(\beta) = \frac{1}{\beta_+^{1/2}} \bX^+(\beta)
\left[ 
\barr{l}
e_1 \\[3mm]
e_2
\earr
\right], \quad \beta \in \Complex^+, \quad
\overline{\bSigma}^-(\beta) = -\frac{\beta_-^{1/2}}{b} \bX^-(\beta)
\left[ 
\barr{l}
e_1 \\[3mm]
e_2
\earr
\right], \quad \beta \in \Complex^-.
\eeq

\subsubsection{The basis in the space of singular solutions}
From now on, we shall use the notion of weight functions, defined as traces of the singular displacement fields obtained in the above section. It is evident from the 
solution obtained above that the space of singular solutions is a 2-dimensional linear space. Two linearly independent weight functions can be selected as follows. 
Setting $e_1 = 1$, $e_2 = 0$ in \eq{wf}, we get the first weight function as:
\beq
\lb{first}
\barr{l}
\ds \jump{0.15}{\overline{U}_1^1}^+ = \frac{d_0}{\beta_+^{1/2}} \cos B^+ =
\frac{d_0}{2} \beta_+^{-1/2} \left(e_0 \beta_+^{i\epsilon} + \frac{1}{e_0} \beta_+^{-i\epsilon}\right), \quad 
\beta \in \Complex^+, \\[3mm]
\ds \jump{0.15}{\overline{U}_2^1}^+ = \frac{d_0}{\beta_+^{1/2}} \sin B^+ =
\frac{i d_0}{2} \beta_+^{-1/2} \left(e_0 \beta_+^{i\epsilon} - \frac{1}{e_0} \beta_+^{-i\epsilon}\right), \quad 
\beta \in \Complex^+, \\[3mm]
\ds \overline{\bSigma}_{21}^{1-} = -\frac{\beta_-^{1/2}}{bd_0} \cos B^- =
- \frac{1}{2bd_0} \beta_-^{1/2} \left(\frac{1}{e_0} \beta_-^{i\epsilon} + e_0 \beta_-^{-i\epsilon}\right), \quad 
\beta \in \Complex^-, \\[3mm]
\ds \overline{\bSigma}_{22}^{1-} = -\frac{\beta_-^{1/2}}{bd_0} \sin B^- =
- \frac{i}{2bd_0} \beta_-^{1/2} \left(\frac{1}{e_0} \beta_-^{i\epsilon} - e_0 \beta_-^{-i\epsilon}\right), \quad 
\beta \in \Complex^-,
\earr
\eeq
where $e_0 = e^{\epsilon\pi/2}$.

Setting $e_1 = 0$, $e_2 = 1$ in \eq{wf}, we get the second weight function as:
\beq
\lb{second}
\barr{l}
\ds \jump{0.15}{\overline{U}_1^2}^+ = -\frac{d_0}{\beta_+^{1/2}} \sin B^+ =
-\frac{i d_0}{2} \beta_+^{-1/2} \left(e_0 \beta_+^{i\epsilon} - \frac{1}{e_0} \beta_+^{-i\epsilon}\right), \quad 
\beta \in \Complex^+, \\[3mm]
\ds \jump{0.15}{\overline{U}_2^2}^+ = \frac{d_0}{\beta_+^{1/2}} \cos B^+ =
\frac{d_0}{2} \beta_+^{-1/2} \left(e_0 \beta_+^{i\epsilon} + \frac{1}{e_0} \beta_+^{-i\epsilon}\right), \quad 
\beta \in \Complex^+, \\[3mm]
\ds \overline{\bSigma}_{21}^{2-} = \frac{\beta_-^{1/2}}{bd_0} \sin B^- =
\frac{i}{2bd_0} \beta_-^{1/2} \left(\frac{1}{e_0} \beta_-^{i\epsilon} - e_0 \beta_-^{-i\epsilon}\right), \quad 
\beta \in \Complex^-, \\[3mm]
\ds \overline{\bSigma}_{22}^{2-} = -\frac{\beta_-^{1/2}}{bd_0} \cos B^- =
- \frac{1}{2bd_0} \beta_-^{1/2} \left(\frac{1}{e_0} \beta_-^{i\epsilon} + e_0 \beta_-^{-i\epsilon}\right), \quad 
\beta \in \Complex^-.
\earr
\eeq
In order to use a compact notation in the sequel of the paper, we collect the weight function components together with components of the corresponding tractions in 
matrices as follows
$$
\jump{0.15}{\overline{\bU}}^+ = 
\left[
\barr{cc}
\jump{0.15}{\overline{U}_1^1}^+ & \jump{0.15}{\overline{U}_1^2}^+ \\[3mm]
\jump{0.15}{\overline{U}_2^1}^+ & \jump{0.15}{\overline{U}_2^2}^+ 
\earr
\right], \quad
\overline{\bSigma}^- = 
\left[
\barr{cc}
\overline{\Sigma}_{21}^{1-} & \overline{\Sigma}_{21}^{2-} \\[3mm]
\overline{\Sigma}_{22}^{1-} & \overline{\Sigma}_{22}^{2-} 
\earr
\right].
$$

Note that the limit values of the functions $\beta_+^{\pm i\epsilon}$, as $\Im(\beta) \to 0^+$, are
$$
\beta_+^{i\epsilon} = 
\left\{ 
\barr{ll} 
\ds \beta^{i\epsilon}, & \beta>0, \\[3mm]
\ds \frac{(-\beta)^{i\epsilon}}{e_0^2}, & \beta<0,
\earr 
\right. \qquad
\beta_+^{-i\epsilon} = 
\left\{ 
\barr{ll} 
\ds \beta^{-i\epsilon}, & \beta>0, \\[3mm]
\ds (-\beta)^{-i\epsilon} e_0^2, & \beta<0,
\earr 
\right.
$$
whereas the limit values of the functions $\beta_-^{\pm i\epsilon}$, as $\Im(\beta) \to 0^-$, are
$$
\beta_-^{i\epsilon} = 
\left\{ 
\barr{ll} 
\ds \beta^{i\epsilon}, & \beta>0, \\[3mm]
\ds (-\beta)^{i\epsilon} e_0^2, & \beta<0,
\earr 
\right. \qquad
\beta_-^{-i\epsilon} = 
\left\{ 
\barr{ll} 
\ds \beta^{-i\epsilon}, & \beta>0, \\[3mm]
\ds \frac{(-\beta)^{-i\epsilon}}{e_0^2}, & \beta<0.
\earr 
\right.
$$

It is clear that the choice of a basis of two linearly independent singular solutions is not unique. Bercial-Velez {\em et al.}\ (2005) provided another set 
of linearly independent weight functions, namely
$$
\barr{l}
\ds \jump{0.15}{\overline{\mU}_1^1}^+ =  
\frac{1}{2d_0^2} \beta_+^{-1/2} 
\left(-\frac{e_0 \beta_+^{i\epsilon}}{c_1^-} + \frac{\beta_+^{-i\epsilon}}{e_0 c_1^+}\right), \quad 
\beta \in \Complex^+, \\[3mm]
\ds \jump{0.15}{\overline{\mU}_2^1}^+ =  
-\frac{i}{2d_0^2} \beta_+^{-1/2} 
\left(\frac{e_0 \beta_+^{i\epsilon}}{c_1^-} + \frac{\beta_+^{-i\epsilon}}{e_0 c_1^+}\right), \quad 
\beta \in \Complex^+, \\[3mm]
\ds \jump{0.15}{\overline{\mU}_1^2}^+ = -\jump{0.15}{\overline{\mU}_2^1}^+, \quad \beta \in \Complex^+, \\[3mm]
\ds \jump{0.15}{\overline{\mU}_2^2}^+ = \jump{0.15}{\overline{\mU}_1^1}^+, \quad \beta \in \Complex^+.
\earr
$$
It can be shown that these weight functions are linear combinations of (\ref{first}) and (\ref{second}),
$$
\left[
\barr{cc}
\jump{0.15}{\overline{\mU}_1^1}^+ & \jump{0.15}{\overline{\mU}_1^2}^+ \\[3mm]
\jump{0.15}{\overline{\mU}_2^1}^+ & \jump{0.15}{\overline{\mU}_2^2}^+ 
\earr
\right] = \frac{1}{2 c_1^+ c_1^- d_0^3}
\left[
\barr{cc}
\jump{0.15}{\overline{U}_1^1}^+ & \jump{0.15}{\overline{U}_1^2}^+ \\[3mm]
\jump{0.15}{\overline{U}_2^1}^+ & \jump{0.15}{\overline{U}_2^2}^+ 
\earr
\right] 
\left[
\barr{cc}
\ds -c_1^+ + c_1^- & \ds i (c_1^+ + c_1^-) \\[3mm]
\ds -i (c_1^+ + c_1^-) & \ds -c_1^+ + c_1^- 
\earr
\right].
$$

\subsection{The half-plane problem and full representation of weight functions}
In this section, we will construct the full representation of singular solutions for the whole plane. This is needed in order to compute the skew-symmetric weight 
function matrix $\langle \bU \rangle$. The complete singular solutions can be constructed by solving a boundary value problem for a semi-infinite half-plane subjected 
to traction boundary conditions at its boundary. 

Let us consider first the lower half-plane. A plane strain elasticity problem is usually solved by means of the Airy function. Introducing the Fourier transform with 
respect to the $x_1$ variable, the problem is most easily solved considering the stress component $\overline{\sigma}_{22}$ as the primary unknown function, so that the 
problem reduces to the following ordinary differential equation
$$
\overline{\sigma}_{22}'''' - 2\beta^2 \overline{\sigma}_{22}'' + \beta^4 \overline{\sigma}_{22} = 0,
$$
where a prime denotes derivative with respect to $x_2$. The general solution is then 
$$
\overline{\sigma}_{22}(\beta,x_2) = (A_2 + x_2 B_2) e^{|\beta| x_2}, \quad
\overline{\sigma}_{11}(\beta,x_2) = -\frac{1}{\beta^2} \overline{\sigma}_{22}'', \quad
\overline{\sigma}_{21}(\beta,x_2) = -\frac{i}{\beta} \overline{\sigma}_{22}', \quad \beta \in \Reals.
$$
Correspondingly, the Fourier transforms of the displacement components are 
$$
\overline{u}_{1}(\beta,x_2) = -\frac{i}{2\mu_- \beta} 
\{ \overline{\sigma}_{22} - (1-\nu_-)\overline{\sigma}_0 \}, \quad
\overline{u}_{2}(\beta,x_2) = \frac{1}{2\mu_- \beta^2} 
\{ \overline{\sigma}_{22}' + (1-\nu_-)\overline{\sigma}_0' \}, \quad \beta \in \Reals,
$$
where $\overline{\sigma}_0 = \overline{\sigma}_{11} + \overline{\sigma}_{22}$. The boundary conditions along the boundary $x_2 = 0^-$ are defined by
$$
\overline{\sigma}_{22}(\beta,x_2 = 0^-) = \overline{\Sigma}_{22}^-(\beta), \quad
\overline{\sigma}_{21}(\beta,x_2 = 0^-) = \overline{\Sigma}_{21}^-(\beta), \quad \beta \in \Reals,
$$
where $\Sigma_{22}^-,\Sigma_{21}^-$ are the tractions along the interface derived in the previous section. It follows that
$$
\overline{\sigma}_{22}(\beta,x_2 = 0^-) = A_2 = \overline{\Sigma}_{22}^-(\beta), \quad
\overline{\sigma}_{21}(\beta,x_2 = 0^-) = -\frac{i}{\beta} (A_2|\beta| + B_2) = 
\overline{\Sigma}_{21}^-(\beta), \quad \beta \in \Reals,
$$
and thus
$$
A_2 = \overline{\Sigma}_{22}^-, \quad
B_2 = i\beta \overline{\Sigma}_{21}^- - |\beta| \overline{\Sigma}_{22}^-, \quad \beta \in \Reals.
$$
The full representations of Fourier transforms (with respect to $x_1$)  of the required singular displacements and the corresponding components of stress are
$$
\barr{l}
\overline{\sigma}_{22}(\beta,x_2) = \{ i\beta x_2 \overline{\Sigma}_{21}^- 
+ (1 - |\beta|x_2) \overline{\Sigma}_{22}^- \} e^{|\beta|x_2}, \\[3mm]
\overline{\sigma}_{11}(\beta,x_2) = \{ -i(2\sign(\beta) + \beta x_2) \overline{\Sigma}_{21}^- 
+ (1 + |\beta|x_2) \overline{\Sigma}_{22}^- \} e^{|\beta|x_2}, \\[3mm]
\overline{\sigma}_{21}(\beta,x_2) = \{ (1 + |\beta|x_2) \overline{\Sigma}_{21}^- 
+ i\beta x_2 \overline{\Sigma}_{22}^- \} e^{|\beta|x_2}.
\earr
$$
$$
\barr{l}
\ds
\overline{u}_{1}(\beta,x_2) = \frac{1}{2\mu_-} \left\{ \left[x_2 + \frac{2(1-\nu_-)}{|\beta|}\right] 
\overline{\Sigma}_{21}^- 
+ i\left[\sign(\beta)x_2 + \frac{1 - 2\nu_-}{\beta}\right] 
\overline{\Sigma}_{22}^- \right\} e^{|\beta|x_2}, \\[3mm]
\ds
\overline{u}_{2}(\beta,x_2) = \frac{1}{2\mu_-} 
\left\{ i\left[\sign(\beta) x_2 - \frac{1-2\nu_-}{\beta} \right] 
\overline{\Sigma}_{21}^- 
+ \left[\frac{2(1-\nu_-)}{|\beta|} - x_2\right] 
\overline{\Sigma}_{22}^- \right\} e^{|\beta|x_2}.
\earr
$$

For the upper half-plane, we find the same equations, subject to replacing $|\beta|$ with $-|\beta|$, $\mu_-$ with $\mu_+$ and $\nu_-$ with $\nu_+$.

\subsection{New results for skew-symmetric weight functions}
\subsubsection{The Fourier transform representations}
It is possible now to derive the jump and average of Fourier transforms of the singular displacement functions across the plane containing the crack. The traces on the 
plane containing the crack are given by
$$
\barr{l}
\overline{U}_{1}(\beta) = \overline{u}_{1}(\beta,x_2 = 0^+) = 
\left[
\barr{ll}
\ds -\frac{1-\nu_+}{\mu_+ |\beta|}, & \ds \frac{i(1-2\nu_+)}{2\mu_+ \beta}
\earr
\right]
\left[
\barr{l}
\overline{\Sigma}_{21}^- \\[3mm]
\overline{\Sigma}_{22}^-
\earr
\right], \\[3mm]
\overline{U}_{2}(\beta) = \overline{u}_{2}(\beta,x_2 = 0^+) = 
\left[
\barr{ll}
\ds -\frac{i(1-2\nu_+)}{2\mu_+ \beta}, & \ds -\frac{1-\nu_+}{\mu_+ |\beta|}
\earr
\right]
\left[
\barr{l}
\overline{\Sigma}_{21}^- \\[3mm]
\overline{\Sigma}_{22}^-
\earr
\right], \\[3mm]
\overline{U}_{1}(\beta) = \overline{u}_{1}(\beta,x_2 = 0^-) = 
\left[
\barr{ll}
\ds \frac{1-\nu_-}{\mu_- |\beta|}, & \ds \frac{i(1-2\nu_-)}{2\mu_- \beta}
\earr
\right]
\left[
\barr{l}
\overline{\Sigma}_{21}^- \\[3mm]
\overline{\Sigma}_{22}^-
\earr
\right], \\[3mm]
\overline{U}_{2}(\beta) = \overline{u}_{2}(\beta,x_2 = 0^-) = 
\left[
\barr{ll}
\ds -\frac{i(1-2\nu_-)}{2\mu_- \beta}, & \ds \frac{1-\nu_-}{\mu_- |\beta|}
\earr
\right]
\left[
\barr{l}
\overline{\Sigma}_{21}^- \\[3mm]
\overline{\Sigma}_{22}^-
\earr
\right],
\earr
$$
so that, we obtain in matrix form
$$
\left[
\barr{cc}
\jump{0.15}{\overline{U}_1^1}^+ & \jump{0.15}{\overline{U}_1^2}^+ \\[3mm]
\jump{0.15}{\overline{U}_2^1}^+ & \jump{0.15}{\overline{U}_2^2}^+
\earr
\right] = -\frac{b}{|\beta|} 
\left[
\barr{cc}
1 & -i\sign(\beta) d_* \\[3mm]
i\sign(\beta) d_* & 1
\earr
\right]
\left[
\barr{cc}
\overline{\Sigma}_{21}^{1-} & \overline{\Sigma}_{21}^{2-} \\[3mm]
\overline{\Sigma}_{22}^{1-} & \overline{\Sigma}_{22}^{2-}
\earr
\right], \quad \beta \in \Reals,
$$
and
$$
\left[
\barr{cc}
\langle\overline{U}_1^1\rangle & \langle\overline{U}_1^2\rangle \\[3mm]
\langle\overline{U}_2^1\rangle & \langle\overline{U}_2^2\rangle
\earr
\right] = -\frac{b\alpha}{2|\beta|} 
\left[
\barr{cc}
1 & -i\sign(\beta) \gamma_* \\[3mm]
i\sign(\beta) \gamma_* & 1
\earr
\right]
\left[
\barr{cc}
\overline{\Sigma}_{21}^{1-} & \overline{\Sigma}_{21}^{2-} \\[3mm]
\overline{\Sigma}_{22}^{1-} & \overline{\Sigma}_{22}^{2-}
\earr
\right], \quad \beta \in \Reals,
$$
where $\gamma_*$ is a material parameter,
$$
\gamma_* = \frac{\mu_-(1 - 2\nu_+) + \mu_+(1 - 2\nu_-)}{2\mu_-(1 - \nu_+) - 2\mu_+(1 - \nu_-)}.
$$

Note that the above equations can be rewritten as
$$
\jump{0.15}{\overline{\bU}}^+ = 
-\frac{b}{|\beta|} \left[\barr{cc} 1 & 0 \\[3mm] 0 & 1 \earr\right] \overline{\bSigma}^- - 
\frac{i b d_*}{\beta} \left[\barr{cc} 0 & -1 \\[3mm] 1 & 0 \earr\right] \overline{\bSigma}^-, \quad
\beta \in \Reals,
$$
$$
\langle\overline{\bU}\rangle = 
-\frac{\alpha b}{2|\beta|} \left[\barr{cc} 1 & 0 \\[3mm] 0 & 1 \earr\right] \overline{\bSigma}^- - 
\frac{i \alpha b \gamma_*}{2\beta} \left[\barr{cc} 0 & -1 \\[3mm] 1 & 0 \earr\right]
\overline{\bSigma}^-, \quad \beta \in \Reals,
$$
so that we can derive the decomposition of $\langle\overline{\bU}\rangle$ in the sum of "$+$" and "$-$" functions as follows
\beq
\lb{average}
\langle\overline{\bU}\rangle = 
\frac{\alpha}{2} \jump{0.15}{\overline{\bU}}^+ + \alpha(d_* - \gamma_*) 
\frac{i b}{2\beta} \left[\barr{cc} 0 & -1 \\[3mm] 1 & 0 \earr\right] 
\overline{\bSigma}^-, \quad \beta \in \Reals,
\eeq
where the last term on the right-hand side is a "$-$" function.

\subsubsection{The weight functions - Fourier inversion}
\lb{weight_f}
After the inversion of the corresponding Fourier transforms, the weight functions, which will be needed for the computation of stress intensity factors, are as follows. 
The symmetric weight function matrix $\jump{0.15}{\bU}(x_1)$ is equal to 0 for $x_1 < 0$, whereas for $x_1 > 0$ it is given by 
\beq
\lb{jumpu}
\barr{l}
\ds
\jump{0.15}{U_1^1}(x_1) = \frac{x_1^{-1/2}}{2d_0 \sqrt{2\pi}} 
\left( \frac{x_1^{-i\epsilon}}{c_1^+} + \frac{x_1^{i\epsilon}}{c_1^-} \right), \\[5mm]
\ds
\jump{0.15}{U_2^1}(x_1) = \frac{i x_1^{-1/2}}{2d_0 \sqrt{2\pi}} 
\left( \frac{x_1^{-i\epsilon}}{c_1^+} - \frac{x_1^{i\epsilon}}{c_1^-} \right), \\[5mm]
\ds 
\jump{0.15}{U_1^2}(x_1) = -\jump{0.15}{U_2^1}(x_1), \\[5mm]
\jump{0.15}{U_2^2}(x_1) = \jump{0.15}{U_1^1}(x_1).
\earr
\eeq
The skew-symmetric weight function matrix $\langle\bU\rangle(x_1)$ is equal to $\ds \frac{\alpha}{2}\jump{0.15}{\bU}(x_1)$ for $x_1 > 0$, whereas for $x_1 < 0$ it is 
given by 
\beq
\lb{meanu}
\barr{l}
\ds
\langle U_1^1 \rangle(x_1) = -\frac{i\alpha(d_*-\gamma_*)(-x_1)^{-1/2}}{4d_0^3 \sqrt{2\pi}}
\left[ \frac{(-x_1)^{-i\epsilon}}{c_1^+} - \frac{(-x_1)^{i\epsilon}}{c_1^-} \right], \\[3mm]
\ds
\langle U_2^1 \rangle(x_1) = \frac{\alpha(d_*-\gamma_*)(-x_1)^{-1/2}}{4d_0^3 \sqrt{2\pi}}
\left[ \frac{(-x_1)^{-i\epsilon}}{c_1^+} + \frac{(-x_1)^{i\epsilon}}{c_1^-} \right], \\[3mm]
\ds 
\langle U_1^2 \rangle(x_1) = -\langle U_2^1 \rangle(x_1), \\[3mm]
\langle U_2^2 \rangle(x_1) = \langle U_1^1 \rangle(x_1).
\earr
\eeq
The function $\bSigma(x_1)$ is equal to 0 for $x_1 > 0$, whereas for $x_1 < 0$ it is given by 
$$
\barr{l}
\ds
\Sigma_{21}^1(x_1) = \frac{(-x_1)^{-3/2}}{2bd_0^3 \sqrt{2\pi}} 
\left[ \frac{1/2 + i\epsilon}{c_1^+}(-x_1)^{-i\epsilon} + \frac{1/2 - i\epsilon}{c_1^-}(-x_1)^{i\epsilon} \right], \\[3mm]
\ds
\Sigma_{22}^1(x_1) = \frac{i(-x_1)^{-3/2}}{2bd_0^3 \sqrt{2\pi}} 
\left[ \frac{1/2 + i\epsilon}{c_1^+}(-x_1)^{-i\epsilon} - \frac{1/2 - i\epsilon}{c_1^-}(-x_1)^{i\epsilon} \right], \\[3mm]
\ds 
\Sigma_{21}^2(x_1) = -\Sigma_{22}^1(x_1), \\[3mm]
\Sigma_{22}^2(x_1) = \Sigma_{21}^1(x_1).
\earr
$$

It will be shown in the sequel of the text that we only need the inverse transform of $\jump{0.15}{\overline{\bU}}^+$  (see (\ref{average})) in 
order to compute the stress intensity factors for both symmetric and skew-symmetric loading.

\section{The Betti identity and evaluation of the stress intensity factors}
\lb{identity}
Here we develop a general procedure for the evaluation of coefficients in the asymptotics of the stress components near the crack tip. This includes the stress intensity 
factors as well as high-order asymptotics. In particular, the coefficients near the higher order terms require appropriate weight functions which are shown to be derived 
via differentiation of the weight functions $\jump{0.15}{\bU}$ and $\langle \bU \rangle$ along the crack.

\subsection{The fundamental Betti identity and its equivalent representation in terms of Fourier transforms}
In this section, the notations $\bu = [u_1, u_2]^\top$ and $\bsigma = [\sigma_{21},\sigma_{22}]^\top$ will be used for the physical displacement and traction fields discussed 
in Section \ref{prel_analysis}. The notations $\bU = [U_1,U_2]^\top$ and $\bSigma = [\Sigma_{21},\Sigma_{22}]^\top$ will apply to the auxiliary singular displacements and the 
corresponding tractions obtained in Section \ref{weight_f}. We note that $\bU$ is discontinuous along the positive semi-axis $x_1 > 0$, whereas $\bu$ is discontinuous 
along the negative semi-axis $x_1 < 0$.

Similar to Willis and Movchan (1995) and Piccolroaz {\em et al.}\ (2007), we can apply the Betti formula to the physical fields and to the weight functions in order to 
evaluate the coefficients in the asymptotics near the crack tip. In particular, applying the Betti identity to the upper half-plane and lower half-plane, we obtain
$$
\int_{(x_2 = 0^+)} \left\{ \bU^\top(x_1'-x_1,0^+) \bR \bsigma(x_1,0^+) - \bSigma^\top(x_1'-x_1,0^+) \bR \bu(x_1,0^+) \right\} dx_1 = 0,
$$
and
$$
\int_{(x_2 = 0^-)} \left\{ \bU^\top(x_1'-x_1,0^-) \bR \bsigma(x_1,0^-) - \bSigma^\top(x_1'-x_1,0^-) \bR \bu(x_1,0^-) \right\} dx_1 = 0,
$$
respectively, where 
\beq
\lb{erre}
\bR = \begin{pmatrix} -1 & 0 \\ 0 & 1 \end{pmatrix}
\eeq
is a rotation matrix. Then, subtracting one from the other, we arrive at
$$
\barr{l}
\ds
\int_{(x_2 = 0)} \left\{ \bU^\top(x_1'-x_1,0^+) \bR \bsigma(x_1,0^+) - \bU^\top(x_1'-x_1,0^-) \bR \bsigma(x_1,0^-) \right. \\[3mm]
\ds \hspace{30mm}
\left. - \left[\bSigma^\top(x_1'-x_1,0^+) \bR \bu(x_1,0^+) - \bSigma^\top(x_1'-x_1,0^-) \bR \bu(x_1,0^-)\right] \right\} dx_1 = 0.
\earr
$$
The traction components acting on the $x_1$ axis can be written as
$$
\bsigma(x_1,0^+) = \bp^+(x_1) + \bsigma^{(+)}(x_1,0), \quad \bsigma(x_1,0^-) = \bp^-(x_1) + \bsigma^{(+)}(x_1,0), 
$$
where $\bp^+(x_1) = \bsigma(x_1,0^+) H(-x_1)$, $\bp^-(x_1) = \bsigma(x_1,0^-) H(-x_1)$ is the loading acting on the upper and lower crack faces, respectively, and 
$\bsigma^{(+)}(x_1,0)$ is the traction field ahead of the crack tip, with $H(x_1)$ being the unit-step Heaviside function. Consequently, the integral identity can be 
written as
$$
\barr{l}
\ds
\int_{(x_2 = 0)} \Big\{ \bU^\top(x_1'-x_1,0^+) \bR \bp^+(x_1) + \bU^\top(x_1'-x_1,0^+) \bR \bsigma^{(+)}(x_1,0) \\[3mm]
\ds \hspace{15mm}
- \bU^\top(x_1'-x_1,0^-) \bR \bp^-(x_1) - \bU^\top(x_1'-x_1,0^-) \bR \bsigma^{(+)}(x_1,0) \\[3mm]
\ds \hspace{30mm}
- \left[\bSigma^\top(x_1'-x_1,0^+) \bR \bu(x_1,0^+) - \bSigma^\top(x_1'-x_1,0^-) \bR \bu(x_1,0^-)\right] \Big\} dx_1 = 0.
\earr
$$
From the continuity of $\bsigma^{(+)}$ and $\bSigma$ we get
$$
\barr{l}
\ds
\int_{(x_2 = 0)} \Big\{ \jump{0.15}{\bU}^\top(x_1'-x_1) \bR \bsigma^{(+)}(x_1,0) - \bSigma^\top(x_1'-x_1,0) \bR \jump{0.15}{\bu}(x_1) \Big\} dx_1 = \\[3mm]
\ds \hspace{30mm}
- \int_{(x_2 = 0)} \Big\{ \bU^\top(x_1'-x_1,0^+) \bR \bp^+(x_1) - \bU^\top(x_1'-x_1,0^-) \bR \bp^-(x_1) \Big\} dx_1.
\earr
$$
Introducing the symmetric and skew-symmetric parts of the loading (\ref{parts}), we finally obtain
\beq
\lb{betti}
\barr{l}
\ds
\int_{(x_2 = 0)} \Big\{ \jump{0.15}{\bU}^\top(x_1'-x_1) \bR \bsigma^{(+)}(x_1,0) - \bSigma^\top(x_1'-x_1,0) \bR \jump{0.15}{\bu}(x_1) \Big\} dx_1 = \\[3mm]
\ds \hspace{30mm}
- \int_{(x_2 = 0)} \Big\{ \jump{0.15}{\bU}^\top(x_1'-x_1) \bR \langle \bp \rangle(x_1) + \langle \bU \rangle^\top(x_1'-x_1) \bR \jump{0.15}{\bp}(x_1) \Big\} dx_1.
\earr
\eeq
Taking the Fourier transform of (\ref{betti}) we derive
\beq
\lb{betti2}
\jump{0.15}{\overline{\bU}}^{+\top} \bR \overline{\bsigma}^+ - \overline{\bSigma}^{-\top} \bR \jump{0.15}{\overline{\bu}}^- = 
- \jump{0.15}{\overline{\bU}}^{+\top} \bR \langle \overline{\bp} \rangle - \langle \overline{\bU} \rangle^\top \bR \jump{0.15}{\overline{\bp}}, \quad \beta \in \Reals.
\eeq

\subsection{Stress intensity factors and high-order terms coefficients}
In this section, we will derive the stress intensity factors as well as high-order terms from the Betti formula \eq{betti2} obtained in the previous section. The complex 
stress intensity factor will be denoted by $K = K_\modI + i K_\modII$, and analogous notations will be used for high-order terms: $A = A_\modI + i A_\modII$, 
$B = B_\modI + i B_\modII$.

The asymptotics of the physical traction field as $x_1 \to 0^+$ are as follows
$$
\bsigma^{(+)}(x_1) = \frac{x_1^{-1/2}}{2\sqrt{2\pi}} \bmS(x_1) \bK + \frac{x_1^{1/2}}{2\sqrt{2\pi}} \bmS(x_1) \bA + \frac{x_1^{3/2}}{2\sqrt{2\pi}} \bmS(x_1) \bB + O(x_1^{5/2}),
$$
where 
$$
\bmS(x_1) = 
\left[\barr{ll}
\ds
-i x_1^{i\epsilon} & i x_1^{-i\epsilon} \\[3mm]
\ds
x_1^{i\epsilon} & x_1^{-i\epsilon}
\earr\right],
$$
and $\bK = [K, K^*]^\top$, $\bA = [A, A^*]^\top$, $\bB = [B, B^*]^\top$, where the superscript $^*$ denotes conjugation. The corresponding Fourier transforms, as 
$\beta \to \infty$, $\beta \in \Complex^+$, are (compare with Piccolroaz {\em et al.}, 2007)
\beq
\lb{stress1}
\overline{\bsigma}^+(\beta) = \frac{\beta_+^{-1/2}}{4} \bmT_1(\beta) \bK + \frac{\beta_+^{-1/2}}{4\beta} \bmT_2(\beta) \bA + 
\frac{\beta_+^{-1/2}}{4\beta^2} \bmT_3(\beta) \bB + O(\beta^{-7/2}),
\eeq
where
\beq
\bmT_j(\beta) = 
\left[\barr{ll}
\ds \frac{\beta_+^{-i\epsilon}}{c_j^+ e_0} & \ds -\frac{e_0 \beta_+^{i\epsilon}}{c_j^-} \\[5mm]
\ds \frac{i\beta_+^{-i\epsilon}}{c_j^+ e_0}  & \ds \frac{ie_0 \beta_+^{i\epsilon}}{c_j^-} 
\earr\right], \quad j=1,2,3,
\eeq
\beq
\lb{stress2}
c_1^\pm = \frac{(1+i)\sqrt{\pi}}{2\Gamma(1/2 \pm i\epsilon)}, \quad
c_2^\pm = \frac{(1-i)\sqrt{\pi}}{2\Gamma(3/2 \pm i\epsilon)}, \quad
c_3^\pm = -\frac{(1+i)\sqrt{\pi}}{2\Gamma(5/2 \pm i\epsilon)},
\eeq
and $\Gamma(\cdot)$ denotes the Gamma function.

The components of the displacement physical field as $x_1 \to 0^-$ have the form
$$
\jump{0.15}{\bu}(x_1) = \frac{b d_0^2}{\sqrt{2\pi}} (-x_1)^{1/2} \bmU_1(x_1) \bK -
\frac{b d_0^2}{\sqrt{2\pi}} (-x_1)^{3/2} \bmU_2(x_1) \bA + \frac{b d_0^2}{\sqrt{2\pi}} (-x_1)^{5/2} \bmU_3(x_1) \bB + O[(-x_1)^{7/2}],
$$
where
$$
\bmU_j(x_1) = 
\left[\barr{ll}
\ds -\frac{i (-x_1)^{i\epsilon}}{2j-1 + 2i\epsilon} & 
\ds \frac{i (-x_1)^{-i\epsilon}}{2j-1 - 2i\epsilon} \\[3mm]
\ds \frac{(-x_1)^{i\epsilon}}{2j-1 + 2i\epsilon} & 
\ds \frac{(-x_1)^{-i\epsilon}}{2j-1 - 2i\epsilon}
\earr\right], \quad j=1,2,3.
$$

The corresponding Fourier transforms, as $\beta \to \infty$, $\beta \in \Complex^-$ are (compare with Piccolroaz {\em et al.}, 2007)
\beq
\lb{displac}
\jump{0.15}{\overline{\bu}}^-(\beta) = -\frac{bd_0^2}{4\beta} \beta_-^{-1/2} \bmV_1(\beta) \bK - \frac{bd_0^2}{4\beta^2} \beta_-^{-1/2} \bmV_2(\beta) \bA 
- \frac{bd_0^2}{4\beta^3} \beta_-^{-1/2} \bmV_3(\beta) \bB + O(\beta^{-9/2}),
\eeq
where
$$
\bmV_j(\beta) = 
\left[\barr{ll}
\ds \frac{e_0 \beta_-^{-i\epsilon}}{c_j^+} & \ds - \frac{\beta_-^{i\epsilon}}{c_j^-e_0} \\[5mm]
\ds \frac{i e_0 \beta_-^{-i\epsilon}}{c_j^+} & \ds \frac{i \beta_-^{i\epsilon}}{c_j^-e_0} 
\earr\right], \quad j=1,2,3.
$$

From these asymptotics we can derive estimates of the terms in the LHS of the Betti identity 
\eq{betti2}, as $\beta \to \infty$,
\beq
\lb{lhs}
\barr{l}
\jump{0.15}{\overline{\bU}}^{+\top} \bR \overline{\bsigma}^+  = 
\beta^{-1} \bmM_1 \bK + \beta^{-2} \bmM_2 \bA + \beta^{-3} \bmM_3 \bB + O(\beta^{-4}), \quad \beta \in \Complex^+ \\[3mm]
\overline{\bSigma}^{-\top} \bR \jump{0.15}{\overline{\bu}}^- = 
\beta^{-1} \bmM_1 \bK + \beta^{-2} \bmM_2 \bA + \beta^{-3} \bmM_3 \bB + O(\beta^{-4}), \quad \beta \in \Complex^-
\earr
\eeq
where
\beq
\lb{emme1}
\bmM_j = \frac{d_0}{4c_j^+c_j^-} \left[ \barr{cc} -c_j^- & c_j^+ \\[3mm] ic_j^- & ic_j^+ \earr \right], \quad j=1,2,3.
\eeq

Let us introduce the following notation for the RHS of the Betti identity \eq{betti2}
$$
\bPsi(\beta) = - \jump{0.15}{\overline{\bU}}^{+\top}(\beta) \bR \langle \overline{\bp} \rangle(\beta) 
- \langle\overline{\bU}\rangle^\top(\beta) \bR \jump{0.15}{\overline{\bp}}(\beta), \quad \beta \in \Reals.
$$
Note that for any possible loading, $\bPsi(\beta) \to 0$, as $\beta \to \pm \infty$. In fact, if the loading is given by point forces in terms of the Dirac delta function, 
then $\bPsi(\beta) = O(|\beta|^{-1/2})$. If, for example, $\bp^\pm \in L_1(\Reals)$ or $(\bp^\pm)' \in L_1(\Reals)$ then $\bPsi(\beta) = o(|\beta|^{-1/2})$ or 
$\bPsi(\beta) = o(|\beta|^{-3/2})$, respectively.

Consequently, we can split $\bPsi(\beta)$ in the sum of a plus function and a minus function 
$$
\bPsi(\beta) = \bPsi^+(\beta) - \bPsi^-(\beta), \quad \beta \in \Reals,
$$
where
$$
\bPsi^\pm(\beta) = \frac{1}{2\pi i} \int_{-\infty}^{\infty} \frac{\bPsi(t)}{t-\beta} dt, \quad 
\beta \in \Complex^\pm.
$$

Moreover $\bPsi^\pm(\beta) \sim 1/\beta$, $\beta \to \infty$, $\beta \in \Complex^\pm$, and, from the estimates (\ref{lhs}) we can conclude that the inhomogeneous 
Wiener-Hopf equation (\ref{betti2}) has the unique solution 
$$
\jump{0.15}{\overline{\bU}}^{+\top} \bR \overline{\bsigma}^+ = \bPsi^+, \quad \beta \in \Complex^+ \quad \text{and} \quad 
\overline{\bSigma}^{-\top} \bR \jump{0.15}{\overline{\bu}}^- = \bPsi^-, \quad \beta \in \Complex^-.
$$

From these identities we can extract asymptotic estimates. For this reason, let us first assume that $\bp^\pm \in \mS(\Reals)$, where $\mS(\Reals)$ is 
the Schwartz space of rapidly decreasing functions (Zemanian, 1968), so that 
$\bPsi(\beta)$ decays exponentially as $\beta \to \pm\infty$. Then we have, for $\beta \to \infty$, $\beta \in \Complex^\pm$,
\beq
\lb{rhs}
\bPsi^\pm(\beta) = \frac{1}{2\pi i} \int_{-\infty}^{\infty} \frac{\bPsi(t)}{t-\beta} dt = 
-\beta^{-1} \frac{1}{2\pi i} \int_{-\infty}^{\infty} \bPsi(t) dt  
-\beta^{-2} \frac{1}{2\pi i} \int_{-\infty}^{\infty} t \bPsi(t) dt
+ O(\beta^{-3}). 
\eeq

Comparing corresponding terms in (\ref{rhs}) and (\ref{lhs}), we obtain the following formulae for the complex stress intensity factor and the coefficient in the 
second-order term
\beq
\lb{coeff}
\barr{l}
\ds
\bK = \frac{1}{2\pi i} \bmM_1^{-1}
\int_{-\infty}^{\infty} \left\{\jump{0.15}{\overline{\bU}}^{+\top}(t) \bR \langle \overline{\bp} \rangle(t)  
+ \langle\overline{\bU}\rangle^\top(t) \bR \jump{0.15}{\overline{\bp}}(t) \right\} dt, \\[3mm]
\ds
\bA = \frac{1}{2\pi i} \bmM_2^{-1}
\int_{-\infty}^{\infty} t \left\{\jump{0.15}{\overline{\bU}}^{+\top}(t) \bR \langle \overline{\bp} \rangle(t) 
+ \langle\overline{\bU}\rangle^\top(t) \bR \jump{0.15}{\overline{\bp}}(t) \right\} dt,
\earr
\eeq
respectively.

Note that
$$
\barr{l}
\ds
\frac{1}{2\pi} \int_{-\infty}^{\infty} \left\{\jump{0.15}{\overline{\bU}}^{+\top}(t) \bR \langle \overline{\bp} \rangle(t)  
+ \langle\overline{\bU}\rangle^\top(t) \bR \jump{0.15}{\overline{\bp}}(t) \right\} dt = \\[3mm]
\ds
= \lim_{x_1'\to 0} \frac{1}{2\pi} 
\int_{-\infty}^{\infty} \left\{\jump{0.15}{\overline{\bU}}^{+\top}(t) \bR \langle \overline{\bp} \rangle(t) 
+ \langle\overline{\bU}\rangle^\top(t) \bR \jump{0.15}{\overline{\bp}}(t) \right\} e^{-i x_1' t} dt \\[3mm]
\ds
= \lim_{x_1'\to 0} \mF^{-1}_{x_1'} \left\{\jump{0.15}{\overline{\bU}}^{+\top} \bR \langle \overline{\bp} \rangle 
+ \langle\overline{\bU}\rangle^\top \bR \jump{0.15}{\overline{\bp}} \right\} \\[3mm]
\ds
= \lim_{x_1'\to 0} \int_{-\infty}^{0} \left\{\jump{0.15}{\bU}^\top(x_1'-x_1) \bR \langle \bp \rangle(x_1) 
+ \langle\bU\rangle^\top(x_1'-x_1) \bR \jump{0.15}{\bp}(x_1) \right\} dx_1.
\earr
$$

The formula for the complex stress intensity factor becomes
\beq
\lb{SIF}
\bK = -i \bmM_1^{-1} \lim_{x_1'\to 0} \int_{-\infty}^{0} \left\{\jump{0.15}{\bU}^\top(x_1'-x_1) \bR \langle \bp \rangle(x_1) 
+ \langle\bU\rangle^\top(x_1'-x_1) \bR \jump{0.15}{\bp}(x_1) \right\} dx_1.
\eeq

Similarly, we can develop the integral in \eq{coeff}$_2$ to obtain
$$
\barr{l}
\ds
\frac{1}{2\pi} \int_{-\infty}^{\infty} t \left\{\jump{0.15}{\overline{\bU}}^{+\top}(t) \bR \langle \overline{\bp} \rangle(t) 
+ \langle\overline{\bU}\rangle^\top(t) \bR \jump{0.15}{\overline{\bp}}(t) \right\} dt = \\[3mm]
\ds
= i \lim_{x_1'\to 0} \frac{d}{dx_1'} \frac{1}{2\pi} 
\int_{-\infty}^{\infty} \left\{\jump{0.15}{\overline{\bU}}^{+\top}(t) \bR \langle \overline{\bp} \rangle(t) 
+ \langle\overline{\bU}\rangle^\top(t) \bR \jump{0.15}{\overline{\bp}}(t) \right\} e^{-i x_1' t} dt \\[3mm]
\ds
= i \lim_{x_1'\to 0} \int_{-\infty}^{0} \left\{\jump{0.15}{\bU}^\top(x_1'-x_1) \bR \frac{d\langle \bp \rangle(x_1)}{dx_1} 
+ \langle\bU\rangle^\top(x_1'-x_1) \bR \frac{d\jump{0.15}{\bp}(x_1)}{dx_1} \right\} dx_1.
\earr
$$

The formula for the coefficient in the second order term is
\beq
\lb{sec}
\bA = \bmM_2^{-1} \lim_{x_1'\to 0} \int_{-\infty}^{0} \left\{\jump{0.15}{\bU}^\top(x_1'-x_1) \bR \frac{d\langle \bp \rangle(x_1)}{dx_1} 
+ \langle\bU\rangle^\top(x_1'-x_1) \bR \frac{d\jump{0.15}{\bp}(x_1)}{dx_1} \right\} dx_1.
\eeq

If the loading $\bp^\pm$ is not smooth enough, singular or it is a distribution (such as the Dirac delta function), then we can replace $\bp^\pm$ 
with a sequence $\{\bp^\pm_n\}$ such that $\bp^\pm_n \in \mS(\Reals)$ and $\bp^\pm_n \to \bp^\pm$ in $\bar{\mS}(\Reals)$, when $n \to \infty$. 
Here $\bar{\mS}(\Reals)$ denotes the space of distributions (Zemanian, 1968). Therefore, the representations \eq{coeff}, \eq{SIF}, \eq{sec} can 
be interpreted in the sense of distributions and we can integrate formally \eq{sec} by parts to obtain
$$
\bA = \bmM_2^{-1} \lim_{x_1'\to 0} \int_{-\infty}^{0} \left\{\frac{d\jump{0.15}{\bU}^\top(x_1'-x_1)}{dx_1} \bR \langle \bp \rangle(x_1) 
+ \frac{d\langle\bU\rangle^\top(x_1'-x_1)}{dx_1} \bR \jump{0.15}{\bp}(x_1) \right\} dx_1.
$$

\section{Perturbation of the crack front}
\lb{perturbation}
For the 2D case, the perturbation considered here is related to an advance of the crack front by a distance $a$. We denote quantities relative to the original unperturbed 
crack problem by subscript $0$, and quantities relative to the perturbed crack problem by subscript $\star$.

\subsection{General settings}
The complex stress intensity factor for the original unperturbed problem is given by
$$
\bK_0 = -i \bmM_1^{-1} \lim_{x_1'\to 0} \int_{-\infty}^{0} \left\{\jump{0.15}{\bU}^\top(x_1'-x_1) \bR \langle \bp \rangle(x_1) 
+ \langle\bU\rangle^\top(x_1'-x_1) \bR \jump{0.15}{\bp}(x_1) \right\} dx_1.
$$
For the perturbed problem, if the coordinate system is shifted by the quantity $a$ to the new crack tip position, the complex stress intensity factor is given by the 
formula
$$
\bK_\star(a) = -i \bmM_1^{-1} \lim_{x_1'\to 0} \int_{-\infty}^{0} \left\{\jump{0.15}{\bU}^\top(x_1'-x_1) \bR \langle \bp \rangle(x_1+a) 
+ \langle\bU\rangle^\top(x_1'-x_1) \bR \jump{0.15}{\bp}(x_1+a) \right\} dx_1.
$$

The computation of the first order variation of the complex stress intensity factor is now straightforward
$$
\barr{l}
\ds
\left.\frac{d\bK_\star(a)}{da}\right|_{a=0} =  \\[3mm]
\ds
=-i \bmM_1^{-1} \lim_{x_1'\to 0} \lim_{a \to 0^+} \int_{-\infty}^{0} 
\left\{\jump{0.15}{\bU}^\top(x_1'-x_1) \bR \frac{\langle \bp \rangle(x_1+a) - \langle \bp \rangle(x_1)}{a} \right. \\[3mm] 
\hspace{50mm}
\ds
\left. + \langle\bU\rangle^\top(x_1'-x_1) \bR \frac{\jump{0.15}{\bp}(x_1+a) - \jump{0.15}{\bp}(x_1)}{a} \right\} 
dx_1 \\[3mm]
\ds
=-i \bmM_1^{-1} \lim_{x_1'\to 0} \int_{-\infty}^{0} 
\left\{\jump{0.15}{\bU}^\top(x_1'-x_1) \bR \frac{d\langle \bp \rangle(x_1)}{dx_1} 
+ \langle\bU\rangle^\top(x_1'-x_1) \bR \frac{d\jump{0.15}{\bp}(x_1)}{dx_1} \right\} 
dx_1.
\earr
$$

Similarly, for the second order term, we obtain
$$
\left.\frac{d\bA_\star(a)}{da}\right|_{a=0} = \bmM_2^{-1} \lim_{x_1'\to 0} 
\int_{-\infty}^{0} \left\{\jump{0.15}{\bU}^\top(x_1'-x_1) \bR \frac{d^2\langle \bp \rangle(x_1)}{dx_1^2} 
+ \langle\bU\rangle^\top(x_1'-x_1) \bR \frac{d^2\jump{0.15}{\bp}(x_1)}{dx_1^2} \right\} dx_1.
$$

We assume here that all the derivatives exist, otherwise, as previously, the formula should be understood in the generalized sense.

\subsection{Computation of the perturbations of stress intensity factor according to the approach of Willis and Movchan (1995)} 
The Betti formula for the original crack problem and the perturbed crack problem reads
\beq
\lb{betti4}
\barr{l}
\jump{0.15}{\overline{\bU}}^{+\top} \bR \overline{\bsigma}_0^{+} - \overline{\bSigma}^{-\top} \bR \jump{0.15}{\overline{\bu}_0}^- = 
- \jump{0.15}{\overline{\bU}}^{+\top} \bR \langle \overline{\bp} \rangle - \langle\overline{\bU}\rangle^\top \bR \jump{0.15}{\overline{\bp}}, \quad \beta \in \Reals, \\[3mm]
\jump{0.15}{\overline{\bU}}^{+\top} \bR \overline{\bsigma}_\star^{\ddag} - \overline{\bSigma}^{-\top} \bR \jump{0.15}{\overline{\bu}_\star^{\ddag}} = 
- \jump{0.15}{\overline{\bU}}^{+\top} \bR \langle \overline{\bp} \rangle - \langle\overline{\bU}\rangle^\top \bR \jump{0.15}{\overline{\bp}}, \quad \beta \in \Reals,
\earr
\eeq
respectively. According to \eq{stress1} and \eq{displac}, the asymptotics of physical fields in the unperturbed problem are
\beq
\lb{unpert}
\barr{l}
\ds
\overline{\bsigma}_0^+(\beta) = \frac{\beta_+^{-1/2}}{4} \bmT_1(\beta) \bK_0 + \frac{\beta_+^{-1/2}}{4\beta} \bmT_2(\beta) \bA_0 + O(\beta^{-5/2}), \quad \beta \in \Complex^+, \\[3mm]
\ds
\jump{0.15}{\overline{\bu}_0}^-(\beta) = -\frac{bd_0^2}{4\beta} \beta_-^{-1/2} \bmV_1(\beta) \bK_0 - \frac{bd_0^2}{4\beta^2} \beta_-^{-1/2} \bmV_2(\beta) \bA_0 + O(\beta^{-7/2}), \quad 
\beta \in \Complex^-.
\earr
\eeq

For the physical fields in the perturbed problem, $\bsigma_\star^\ddag$ and $\bu_\star^\ddag$, we can write the respective transforms in the new 
coordinate system related to the new position of the crack tip
$$
\barr{l}
\ds
\overline{\bsigma}_\star^\ddag(\beta,a) =  
\int_a^{\infty} \bsigma_\star^\ddag(x_1) e^{i\beta x_1} dx_1 = 
e^{i\beta a} \int_0^{\infty} \bsigma_\star(y) e^{i\beta y} dy = 
e^{i\beta a} \overline{\bsigma}_\star^+(\beta,a), \\[3mm]
\ds
\jump{0.15}{\overline{\bu}_\star^\ddag}(\beta,a) =  
\int_{-\infty}^a \jump{0.15}{\bu_\star^\ddag}(x_1) e^{i\beta x_1} dx_1 = 
e^{i\beta a} \int_{-\infty}^0 \jump{0.15}{\bu_\star}(y) e^{i\beta y} dy = 
e^{i\beta a} \jump{0.15}{\overline{\bu}_\star}^-(\beta,a).
\earr
$$
Note that $\overline{\bsigma}_\star^+$ and $\jump{0.15}{\overline{\bu}_\star}^-$ are "$+$" and "$-$" functions, respectively, with the same asymptotics as in \eq{unpert}, 
subject to replacing $\bK_0$ with $\bK_\star(a)$ and $\bA_0$ with $\bA_\star(a)$.

Subtracting (\ref{betti4})$_2$ from (\ref{betti4})$_1$, we obtain
\beq
\lb{betti5}
\jump{0.15}{\overline{\bU}}^{+\top} \bR (\overline{\bsigma}_0^{+} - e^{i\beta a}\overline{\bsigma}_\star^{+}) 
- \overline{\bSigma}^{-\top} \bR (\jump{0.15}{\overline{\bu}_0}^- - e^{i\beta a}\jump{0.15}{\overline{\bu}_\star}^-) = \b0, 
\quad \beta \in \Reals.
\eeq 
Note that for any $a > 0$, the function $e^{i\beta a}\overline{\bsigma}_\star^{+}$ is a "$+$" function which decays exponentially as $\beta \to \infty$, 
$\beta \in \Complex^+$, whereas $e^{i\beta a}[\overline{\bu}_\star]^-$ is a "$-$" function which grows exponentially as $\beta \to \infty$, $\beta \in \Complex^-$.

Following Willis and Movchan (1995), we expand the exponential term as $a \to 0^+$, so that
$$
\jump{0.15}{\overline{\bU}}^{+\top} \bR (\overline{\bsigma}_0^{+} - (1 + ia\beta)\overline{\bsigma}_\star^{+}) 
- \overline{\bSigma}^{-\top} \bR (\jump{0.15}{\overline{\bu}_0}^- - (1 + ia\beta)\jump{0.15}{\overline{\bu}_\star}^-) = \b0, 
\quad \beta \in \Reals.
$$

The substitution of two-terms asymptotics for the unperturbed physical fields, $\overline{\bsigma}_0^{+}$, $\jump{0.15}{\overline{\bu}_0}^-$, and for the perturbed 
physical fields, $\overline{\bsigma}_\star^{+}$, $\jump{0.15}{\overline{\bu}_\star}^-$, leads to
$$
\barr{l}
\ds
\bmM_1 (\bK_0 - \bK_\star(a))\beta_+^{-1} - ia \bmM_1 \bK_\star(a) -ia \bmM_2 \bA_\star(a) \beta_+^{-1} \\[3mm]
\hspace{30mm}
- \{ \bmM_1(\bK_0 - \bK_\star(a))\beta_-^{-1} - ia \bmM_1 \bK_\star(a) -ia \bmM_2 \bA_\star(a) \beta_-^{-1} \} + O(\beta^{-2}) = \\[3mm]
\ds \hspace{30mm}
= \{\bmM_1 (\bK_0 - \bK_\star(a)) - ia \bmM_2 \bA_\star(a) \} (\beta_+^{-1} - \beta_-^{-1}) + O(\beta^{-2}) = \b0.
\earr
$$
We get 
\beq
\lb{SIF_perturb}
\Delta \bK = \bK_\star(a) - \bK_0 \sim -ia \bmM_1^{-1} \bmM_2 \bA_0 = 
a \left[\barr{ll} 1/2 + i\epsilon & 0 \\[3mm] 0 & 1/2 - i\epsilon \earr\right] \bA_0, \quad a \to 0^+,
\eeq
and finally
$$
\Delta K \sim a (1/2 + i\epsilon) A_0, \quad \Delta K^* \sim a (1/2 - i\epsilon) A_0^*, \quad a \to 0^+.
$$

Note that it is assumed here that $\bK_\star(a)$ and $\bA_\star(a)$ are continuous in $a = 0$. This fact will be proven in the next section.

\subsection{An alternative computation of the perturbations of stress intensity factors and high-order terms}
\lb{SIF_Abelian}
Here we provide a rigorous asymptotic procedure leading to the computation of perturbations in the stress intensity factors as well as high-order terms. The Abelian 
and Tauberian type theorems outlined in Appendix \ref{app3} will be used for this purpose.

We rewrite now the equation (\ref{betti5}) as follows 
\beq
\lb{w-h}
\jump{0.15}{\overline{\bU}}^{+\top} \bR (e^{-i\beta a}\overline{\bsigma}_0^{+} - \overline{\bsigma}_\star^{+}) 
- \overline{\bSigma}^{-\top} \bR (e^{-i\beta a}\jump{0.15}{\overline{\bu}_0}^- - \jump{0.15}{\overline{\bu}_\star}^-) = \b0, 
\quad \beta \in \Reals.
\eeq 
The function $e^{-i\beta a} \overline{\bSigma}^{-\top} \bR \jump{0.15}{\overline{\bu}_0}^{-}$ is a "$-$" function which decays exponentially as $\beta \to \infty$, 
$\beta \in \Complex^-$, whereas $e^{-i\beta a} \jump{0.15}{\overline{\bU}}^{+\top} \bR \overline{\bsigma}_0^+ $ is not a proper "$+$" function, because of the exponential 
growth as $\beta \to \infty$, $\beta \in \Complex^+$. However, for $\beta \in \Reals$, 
$e^{-i\beta a} \jump{0.15}{\overline{\bU}}^{+\top} \bR \overline{\bsigma}_0^+  \sim 1/\beta$, as $\beta \to \pm \infty$, so that it can be decomposed as 
$$
e^{-i\beta a} \jump{0.15}{\overline{\bU}}^{+\top} \bR \overline{\bsigma}_0^+ = \bXi^+ - \bXi^-,
$$
where
$$
\bXi^\pm(\beta) = \frac{1}{2\pi i} \int_{-\infty}^{\infty} 
\frac{e^{-ia t} \jump{0.15}{\overline{\bU}}^{+\top} \bR \overline{\bsigma}_0^+}{t - \beta} dt, 
\quad \beta \in \Complex^\pm.
$$
Now, from the solution of the Wiener-Hopf equation (\ref{w-h}), we obtain the following identities
\beq
\lb{id}
\barr{l}
\ds
\bXi^+ - \jump{0.15}{\overline{\bU}}^{+\top} \bR \overline{\bsigma}_\star^{+} = \b0, \quad \beta \in \Complex^+, \\[3mm]
\ds
-\bXi^- - \overline{\bSigma}^{-\top} \bR (e^{-i\beta a}\jump{0.15}{\overline{\bu}_0}^- - \jump{0.15}{\overline{\bu}_\star}^-) = \b0, \quad \beta \in \Complex^-.
\earr
\eeq
Note that
$$
\int_{-\infty}^{\infty} e^{-iat} \jump{0.15}{\overline{\bU}}^{+\top} \bR \overline{\bsigma}_0^+ dt < \infty,
$$
and
$$
e^{-i\beta a} \overline{\bSigma}^{-\top} \bR \jump{0.15}{\overline{\bu}_0}^- = o\left( \frac{1}{\beta^n} \right), 
\quad \beta \to \infty, \quad \beta \in \Complex^-,
$$
for any $n \in \Naturals$. Consequently, we obtain from the leading term asymptotic of both identities (\ref{id}) the same result, namely
\beq
\lb{k2}
\bmM_1 \bK_\star(a) = -\frac{1}{2\pi i} \int_{-\infty}^{\infty} e^{-iat} \jump{0.15}{\overline{\bU}}^{+\top} \bR \overline{\bsigma}_0^+ dt.
\eeq
The application of the Tauberian type Theorem \ref{theorem} (see Appendix \ref{app3}) leads to
$$
\lim_{a \to 0^+} \bmM_1 \bK_\star(a) = -\frac{1}{i} (- i \bmM_1 \bK_0),
$$
and, since the matrix $\bmM_1$ is not singular,
$$
\lim_{a \to 0^+} \bK_\star(a) = \bK_0.
$$
This proves that the complex stress intensity factor $K$ is continuous in $a = 0$.

In order to derive the first order variation of the complex stress intensity factor in the perturbation problem, we consider the derivative of (\ref{k2}) with 
respect to $a$ (see Theorem \ref{theorem}),
$$
\bmM_1 \bK_\star'(a) = \frac{1}{2\pi} \int_{-\infty}^{\infty} e^{-iat} \{t \jump{0.15}{\overline{\bU}}^{+\top} \bR \overline{\bsigma}_0^+ - \bmM_1 \bK_0\} dt,
$$
where
$$
t \jump{0.15}{\overline{\bU}}^{+\top} \bR \overline{\bsigma}_0^+ - \bmM_1 \bK_0 = t^{-1} \bmM_2 \bA_0 + t^{-2} \bmM_2 \bB_0 + O(t^{-3}), \quad t \to \infty, 
\quad t \in \Complex^+.
$$
Applying again the Theorem \ref{theorem} we can conclude that
$$
\lim_{a \to 0^+} \bmM_1 \bK_\star'(a) = -i \bmM_2 \bA_0, 
$$
and, finally, the first order variation of the complex stress intensity factor in the perturbation problem is given by 
$$
\bK_\star'(0) = -i \bmM_1^{-1} \bmM_2 \bA_0,
$$
which is consistent with the result \eq{SIF_perturb}.

This asymptotic procedure can be extended to compute the perturbations of high-order terms as follows. From the second-term asymptotics of (\ref{id}) we derive 
the coefficient $\bA_\star(a)$, which reads
\beq
\lb{A2}
\bmM_2 \bA_\star(a) = \lim_{\beta \to \infty} \frac{\beta}{2\pi i}
\int_{-\infty}^{\infty} \frac{t e^{-iat} \jump{0.15}{\overline{\bU}}^{+\top} \bR \overline{\bsigma}_0^+}{t - \beta} dt,
\quad \beta \in \Complex^+.
\eeq
Note that the integral on the right-hand side exists, but the behaviour of the integrand does not allow us to take the limit directly.

Let us consider the function
$$
\beffe(\beta,a) = \frac{\beta}{2\pi i}
\int_{-\infty}^{\infty} \frac{t e^{-iat} \jump{0.15}{\overline{\bU}}^{+\top} \bR \overline{\bsigma}_0^+}{t - \beta} dt.
$$

Integration by parts leads to
\beq
\lb{effe2}
\beffe(\beta,a) = \frac{\beta}{2\pi i} \left\{ 
\left[g(t,\beta) t \jump{0.15}{\overline{\bU}}^{+\top} \bR \overline{\bsigma}_0^+ \right]_{-\infty}^{\infty}
- \int_{-\infty}^{\infty} g(t,\beta) (t \jump{0.15}{\overline{\bU}}^{+\top} \bR \overline{\bsigma}_0^+ )' dt
\right\},
\eeq
where 
\beq
\lb{gi}
g(t,\beta) = \int_{-\infty}^{t} \frac{e^{-ia\xi}}{\xi - \beta} d\xi, \quad \beta \in \Complex^+.
\eeq
The integrand in (\ref{gi}) is analytic in $\Complex^-$ and decays exponentially as $\xi \to \infty$, 
$\xi \in \Complex^-$. As a result
$$
\lim_{t \to \infty} g(t,\beta) = 0, \quad \beta \in \Complex^+,
$$
and therefore the first term in brackets in \eq{effe2} vanishes and we can then write
$$
\barr{ll}
\ds \beffe(\beta,a) & \ds = -\frac{\beta}{2\pi i} \int_{-\infty}^{\infty} g(t,\beta) (t \jump{0.15}{\overline{\bU}}^{+\top} \bR \overline{\bsigma}_0^+ )' dt \\[3mm]
& \ds = -\frac{\beta}{2\pi i} \int_{-\infty}^{\infty} g(t,\beta) (t \jump{0.15}{\overline{\bU}}^{+\top} \bR \overline{\bsigma}_0^+ - \bmM_1 \bK_0)' dt \\[3mm]
& \ds = -\frac{\beta}{2\pi i} \left\{ 
\left[g(t,\beta) (t \jump{0.15}{\overline{\bU}}^{+\top} \bR \overline{\bsigma}_0^+ - \bmM_1 \bK_0)\right]_{-\infty}^{\infty} 
- \int_{-\infty}^{\infty} g'(t,\beta) (t \jump{0.15}{\overline{\bU}}^{+\top} \bR \overline{\bsigma}_0^+ - \bmM_1 \bK_0) dt 
\right\} \\[3mm]
& \ds = \frac{\beta}{2\pi i}  
\int_{-\infty}^{\infty} \frac{e^{-iat}}{t-\beta} (t \jump{0.15}{\overline{\bU}}^{+\top} \bR \overline{\bsigma}_0^+ - \bmM_1 \bK_0) dt.
\earr
$$
It is now possible to take the limit as $\beta \to \infty$ in (\ref{A2}), thus we deduce
$$
\barr{ll}
\ds \bmM_2 \bA_\star(a) & \ds = \lim_{\beta \to \infty} 
\frac{\beta}{2\pi i} \int_{-\infty}^{\infty} \frac{e^{-iat} (t \jump{0.15}{\overline{\bU}}^{+\top} \bR \overline{\bsigma}_0^+ - \bmM_1 \bK_0)}{t - \beta} dt \\[3mm]
& \ds = -\frac{1}{2\pi i} \int_{-\infty}^{\infty} e^{-iat} (t \jump{0.15}{\overline{\bU}}^{+\top} \bR \overline{\bsigma}_0^+ - \bmM_1 \bK_0) dt.
\earr
$$
We are now in the position to use the Theorem \ref{theorem}, so that
$$
\lim_{a \to 0^+} \bmM_2 \bA_\star(a) = -\frac{1}{i} (-i \bmM_2 \bA_0),
$$
and, since the matrix $\bmM_2$ is not singular,
$$
\lim_{a \to 0^+} \bA_\star(a) = \bA_0,
$$
which proves the continuity of the function $\bA_\star(a)$ in $a = 0$.

Moreover, the first order variation of $\bA_\star(a)$ can now be derived following the same asymptotic procedure presented above and we deduce
$$
\bA_\star'(0) = -i \bmM_2^{-1} \bmM_3 \bB_0,
$$
so that
$$
A_\star'(0) = (3/2 + i\epsilon) B_0, \quad A_\star^{*\prime}(0) = (3/2 - i\epsilon) B_0^*.
$$

\section{An illustrative example}
\lb{example}
In this section, we present an example regarding the computation of the complex stress intensity factor $K$ for an interfacial crack loaded by a simple asymmetric force 
system, as shown in Fig. \ref{three-point}. The loading is given by a point force $F$ acting upon the upper crack face at a distance $a$ behind the crack tip and two 
point forces $F/2$ acting upon the lower crack face at a distance $a - b$ and $a + b$ behind the crack tip. In terms of the Dirac delta function $\delta(\cdot)$, the 
loading is then given by
$$
\barr{l}
\ds p^+(x_1) = - F \delta(x_1 + a), \\[3mm]
\ds p^-(x_1) = -\frac{F}{2} \delta(x_1 + a + b) - \frac{F}{2} \delta(x_1 + a - b).
\earr
$$

\begin{figure}[!htb]
\begin{center}
\vspace*{3mm}
\includegraphics[width=10cm]{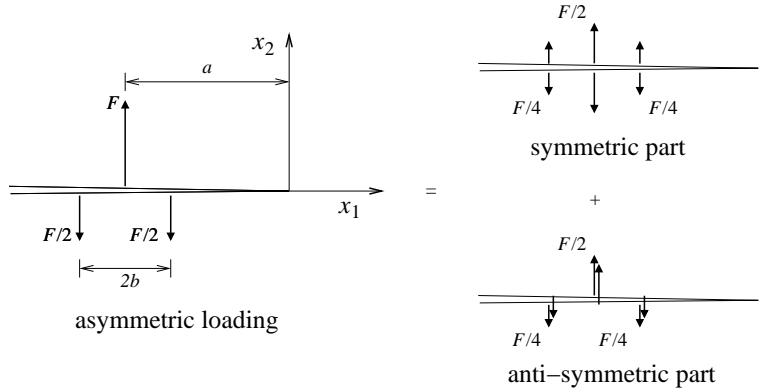}
\caption{\footnotesize "Three-point" loading of an interfacial crack.}
\label{three-point}
\end{center}
\end{figure}

The loading is self-balanced, in terms of both principal force and moment vectors, and can be divided into symmetric and skew-symmetric parts (see Fig. 
\ref{three-point}), 
$$
\barr{l}
\ds \langle p \rangle(x_1) = -\frac{F}{2} \delta(x_1 + a) - \frac{F}{4} \delta(x_1 + a + b) - \frac{F}{4} \delta(x_1 + a - b), \\[3mm]
\ds \jump{0.15}{p}(x_1) = - F \delta(x_1 + a) + \frac{F}{2} \delta(x_1 + a + b) + \frac{F}{2} \delta(x_1 + a - b).
\earr
$$

For this loading system, the complex stress intensity factor $K$ has been evaluated by means of the integral formula \eq{SIF} and the symmetric and skew-symmetric 
weight functions, \eq{symm} and \eq{skew} respectively, obtaining $K = K^S + K^A$, where
$$
\barr{l}
\ds K^S = F \sqrt{\frac{2}{\pi}} \cosh(\pi\epsilon) a^{-1/2 - i\epsilon} \left\{ \frac{1}{2} + \frac{1}{4} (1 + b/a)^{-1/2 - i\epsilon} + 
\frac{1}{4} (1 - b/a)^{-1/2 - i\epsilon} \right\}, \\[3mm]
\ds K^A = \alpha F \sqrt{\frac{2}{\pi}} \cosh(\pi\epsilon) a^{-1/2 - i\epsilon} \left\{ \frac{1}{2} - \frac{1}{4} (1 + b/a)^{-1/2 - i\epsilon} -
\frac{1}{4} (1 - b/a)^{-1/2 - i\epsilon} \right\}.
\earr
$$

The complex stress intensity factor has been computed for the case $\nu_+ = 0.2$ and $\nu_- = 0.3$ and five values of the parameter $\eta$, namely 
$\eta = \{ -0.99, -0.5, 0, 0.5, 0.99 \}$. The values have been normalized and plotted in Fig. \ref{fig04} as functions of the ratio $b/a$ 
($a$ is fixed equal to 1). The symmetric (skew-symmetric) 
stress intensity factor is reported on the left (right) of the figure, the real (imaginary) part is reported on the top (bottom).
\begin{figure}[!htb]
\begin{center}
\vspace*{3mm}
\includegraphics[width=11cm]{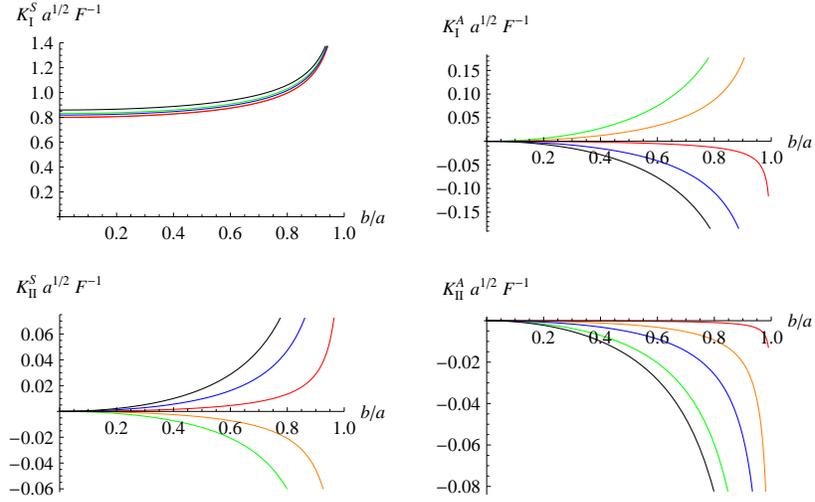}
\caption{\footnotesize Symmetric and anti-symmetric stress intensity factors as a function of $b/a$ ($a$ is fixed equal to 1) for the case 
$\nu_+ = 0.2$, $\nu_-=0.3$ and: $\eta = -0.99$ (green), $\eta = -0.5$ (orange), $\eta = 0$ (red), 
$\eta = 0.5$ (blue), $\eta = 0.99$ (black).}
\label{fig04}
\end{center}
\end{figure}

Commenting on the results, first, we note from the figure that both $K_\modII^S$ and $K_\modII^A$ are not identically zero, even though the loading corresponds to a 
tensile (or opening) mode. This is typical of the interfacial crack problem, where there is no symmetry with respect to the plane containing the crack, so that Mode I and 
Mode II are coupled.

Second, results pertaining to the skew-symmetric stress intensity factor show that $K_\modI^A$ and $K_\modII^A$ are both equal to zero for $b/a = 0$, as expected, since 
in this case the loading is symmetric. As we increase $b/a$ and the skew-symmetric part of the loading becomes more and more relevant, the skew-symmetric stress intensity 
factor increases correspondingly. Both the symmetric and skew-symmetric stress intensity factors are singular as $b/a \to 1$, because a point force is approaching the 
crack tip.

The analytical method developed in the present paper allows us also to evaluate the coefficients in the higher order terms of asymptotics of physical 
fields. As an example, we computed the coefficient in the second order term, $A$, which is reported in Fig. \ref{fig05}, in 
analogy with the stress intensity factor in the figures above.
\begin{figure}[!htb]
\begin{center}
\vspace*{3mm}
\includegraphics[width=11cm]{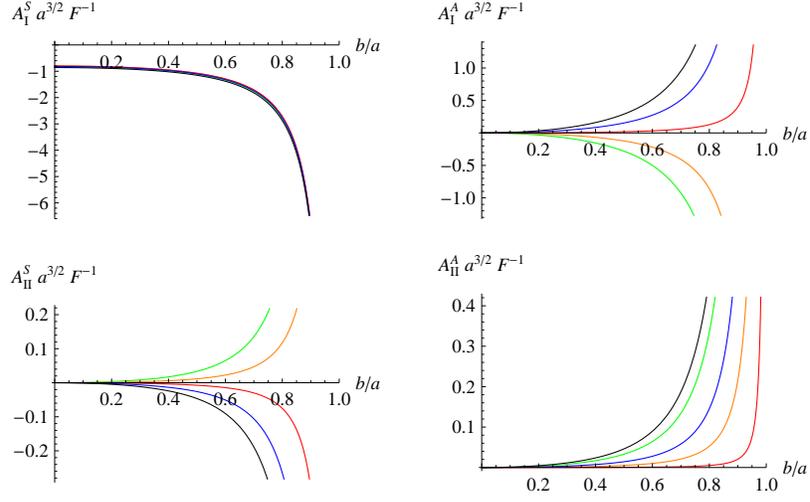}
\caption{\footnotesize Symmetric and anti-symmetric parts of $A$ as functions of $b/a$ ($a$ is fixed equal to 1) for the case $\nu_+ = 0.2$, 
$\nu_-=0.3$ and: $\eta = -0.99$ (green), $\eta = -0.5$ (orange), $\eta = 0$ (red), $\eta = 0.5$ (blue), $\eta = 0.99$ (black).}
\label{fig05}
\end{center}
\end{figure}

In order to appreciate the magnitude of the skew-symmetric stress intensity factor with respect to the magnitude of the symmetric stress intensity factor, the ratio 
$K_\modI^A/K_\modI^S$ is plotted in Fig. \ref{fig06} (on the left) as a function of $b/a$. We notice from the figure that the magnitude of $K_\modI^A$ may 
easily reach 20\% of the magnitude of $K_\modI^S$, a value which is not negligible and has to be taken into account in the view of the application of a tensile fracture 
criterion. Note also that the coefficient $A$ is even more sensitive to the anti-symmetric loading, $A_\modI^A$ may 
easily reach the same magnitude as $A_\modI^S$, see Fig. \ref{fig06} on the right. The ratios $K_\modII^A/K_\modII^S$ and 
$A_\modII^A/A_\modII^S$ are not reported here since, in this case, they are both equal to the same constant, namely 
$K_\modII^A/K_\modII^S = A_\modII^A/A_\modII^S = \alpha$.
\begin{figure}[!htb]
\begin{center}
\vspace*{3mm}
\includegraphics[width=11cm]{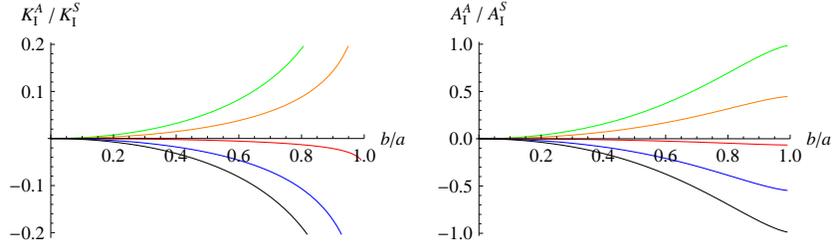}
\caption{\footnotesize Ratios $K_\modI^A/K_\modI^S$ and $A_\modI^A/A_\modI^S$ as functions of
$b/a$ ($a$ is fixed equal to 1) for the case 
$\nu_+ = 0.2$, $\nu_-=0.3$ and: $\eta = -0.99$ (green), $\eta = -0.5$ (orange), $\eta = 0$ (red), 
$\eta = 0.5$ (blue), $\eta = 0.99$ (black).}
\label{fig06}
\end{center}
\end{figure}

\section{Mode III} 
\lb{antiplane}
The Wiener-Hopf equation relating the Fourier transforms of the singular displacement and the corresponding traction is given by 
$$
\jump{0.15}{\overline{U}_3}^+ = - \frac{b+e}{|\beta|} \overline{\Sigma}_{32}^-, \quad \beta \in \Reals,
$$
where $e = \nu_+ / \mu_+ + \nu_- / \mu_-$. This equation can be immediately factorized as follows
$$
\beta_+^{1/2} \jump{0.15}{\overline{U}_3}^+ = -\frac{b+e}{\beta_-^{1/2}} \overline{\Sigma}_{32}^-.
$$
Assuming that $\jump{0.15}{\overline{U}_3}^+ \sim \beta_+^{-1/2}$ and $\overline{\Sigma}_{32}^- \sim \beta_-^{1/2}$, as $\beta \to \infty$, $\beta \in \Complex^\pm$, 
we have
$$
\beta_+^{1/2} [\overline{U}_3]^+ = -\frac{b+e}{\beta_-^{1/2}} \overline{\Sigma}_{32}^- = O(1), \quad
\beta \to \pm\infty , \quad \beta \in \Reals,
$$
so that, from the Liouville theorem, it follows that both sides of the equation are equal to the same constant.

The weight functions for the Mode III problem are then given by
$$
\jump{0.15}{\overline{U}_3}^+(\beta) = \beta_+^{-1/2}, \quad \text{and} \quad 
\overline{\Sigma}_{32}^-(\beta) = -\frac{\beta_-^{1/2}}{b+e},
$$
with corresponding inverse transforms
\beq
\lb{jumpu3}
\jump{0.15}{U_3}(x_1) = \left\{
\barr{ll}
\ds \frac{1-i}{\sqrt{2\pi}} x_1^{-1/2}, & \ds x_1 > 0, \\[3mm]
0, & x_1 < 0,
\earr \right. \qquad
\Sigma_{32}(x_1) = \left\{
\barr{ll}
0, & \ds x_1 > 0, \\[3mm]
\ds \frac{1-i}{2\sqrt{2\pi}(b+e)} (-x_1)^{-3/2}, & x_1 < 0,
\earr \right.
\eeq

The Betti identity reduces to 
$$ 
\jump{0.15}{\overline{U}_3}^+ \overline{\sigma}_{32}^+ - \overline{\Sigma}_{32}^- \jump{0.15}{\overline{u}_3}^- = 
- \jump{0.15}{\overline{U}_3}^+ \langle \overline{p}_3 \rangle - \langle\overline{U}_3\rangle^+ \jump{0.15}{\overline{p}_3}.
$$

The asymptotics of physical fields are
$$
\barr{l}
\ds
\sigma_{32} = \frac{K_\modIII}{\sqrt{2\pi}} x_1^{-1/2} + \frac{A_\modIII}{\sqrt{2\pi}} x_1^{1/2} + O(x_1^{3/2}), \quad x_1 \to 0^+, \\[5mm]
\ds
\jump{0.15}{u_3} = \frac{2(b+e)K_\modIII}{\sqrt{2\pi}} (-x_1)^{1/2} - \frac{2(b+e)A_\modIII}{3\sqrt{2\pi}} (-x_1)^{3/2} + O[(-x_1)^{5/2}], \quad x_1 \to 0^-,
\earr
$$
with corresponding Fourier transforms
$$
\barr{l}
\ds
\overline{\sigma}_{32}^+ = \frac{(1+i)K_\modIII}{2} \beta_+^{-1/2} - \frac{(1-i)A_\modIII}{4} \beta_+^{-3/2} + O(\beta_+^{-5/2}), 
\quad \beta \to \infty, \quad \beta \in \Complex^+ \\[5mm]
\ds
\jump{0.15}{\overline{u}_3}^- = -\frac{(1+i)(b+e)K_\modIII}{2} \beta_-^{-3/2} + \frac{(1-i)(b+e)A_\modIII}{4} \beta_-^{-5/2} + O(\beta_-^{-7/2}), 
\quad \beta \to \infty, \quad \beta \in \Complex^-.
\earr
$$

By the same arguments used for Mode I and II, we obtain the formulae
$$
\barr{l}
\ds K_\modIII = -(1+i) \lim_{x_1' \to 0^+} \int_{-\infty}^{0} 
\{ \jump{0.15}{U_3}(x_1'-x_1) \langle p_3 \rangle(x_1) + \langle U_3 \rangle(x_1'-x_1) \jump{0.15}{p_3}(x_1) \} dx_1, \\[3mm]
\ds A_\modIII = -2(1+i) \lim_{x_1' \to 0^+} \int_{-\infty}^{0} 
\left\{ \jump{0.15}{U_3}(x_1'-x_1) \frac{d\langle p_3 \rangle(x_1)}{dx_1} + \langle U_3 \rangle(x_1'-x_1) \frac{d\jump{0.15}{p_3}(x_1)}{dx_1} \right\} dx_1.
\earr
$$

From the solution of the half-plane problem, we get the full representation of weight functions. For the lower half-plane:
$$
\barr{l}
\ds\overline{u}_3(\beta,x_2) = \frac{1}{\mu_-|\beta|} \overline{\Sigma}_{32}^- e^{|\beta|x_2}, \\[3mm]
\ds \overline{\sigma}_{31}(\beta,x_2) = -i \sign(\beta) \overline{\Sigma}_{32}^- e^{|\beta|x_2}, \\[3mm]
\ds \overline{\sigma}_{32}(\beta,x_2) = \overline{\Sigma}_{32}^- e^{|\beta|x_2}.
\earr
$$

For the upper half-plane the equations are the same subject to replacing $|\beta|$ with $-|\beta|$ and $\mu_-$ with $\mu_+$, so that we obtain
\beq
\lb{meanu3}
\jump{0.15}{\overline{U}_3}^+ = -\frac{\mu_+ + \mu_-}{\mu_ + \mu_-} \frac{1}{|\beta|}\overline{\Sigma}_{32}^-, \quad
\langle\overline{U}_3\rangle^+ = \frac{\mu_+ - \mu_-}{2\mu_+\mu_-} \frac{1}{|\beta|}\overline{\Sigma}_{32}^- = \frac{\eta}{2} \jump{0.15}{\overline{U}_3}^+,
\eeq 
where
$$
\eta = \frac{\mu_- - \mu_+}{\mu_- + \mu_+}.
$$

The formulae for the evaluation of $K_\modIII$ and $A_\modIII$ are 
\beq
\lb{k3}
\barr{l}
\ds K_\modIII = - \sqrt{\frac{2}{\pi}} \int_{-\infty}^{0} \left\{ \langle p_3 \rangle(x_1) + \frac{\eta}{2} \jump{0.15}{p_3}(x_1) \right\} (-x_1)^{-1/2} dx_1, \\[3mm]
\ds A_\modIII = -2 \sqrt{\frac{2}{\pi}} \int_{-\infty}^{0} 
\left\{ \frac{d\langle p_3 \rangle(x_1)}{dx_1} + \frac{\eta}{2} \frac{d\jump{0.15}{p_3}(x_1)}{dx_1} \right\} (-x_1)^{-1/2} dx_1.
\earr
\eeq

The full representation of weight functions after Fourier inversion is, for the lower half-plane, $x_2 < 0$,
$$
\barr{l}
\ds u_3(x_1,x_2) = -\frac{\mu_+}{2\sqrt{\pi}(\mu_- + \mu_+)}\left\{ (ix_1 - x_2)^{-1/2} - i(-ix_1 - x_2)^{-1/2} \right\}, \\[3mm]
\ds \sigma_{31}(x_1,x_2) = \frac{\mu_+\mu_-}{4\sqrt{\pi}(\mu_- + \mu_+)}\left\{ i(ix_1 - x_2)^{-3/2} - (-ix_1 - x_2)^{-3/2} \right\}, \\[3mm]
\ds \sigma_{32}(x_1,x_2) = -\frac{\mu_+\mu_-}{4\sqrt{\pi}(\mu_- + \mu_+)}\left\{ (ix_1 - x_2)^{-3/2} - i(-ix_1 - x_2)^{-3/2} \right\},
\earr
$$
and, for the upper half-plane, $x_2 > 0$,
$$
\barr{l}
\ds u_3(x_1,x_2) = \frac{\mu_-}{2\sqrt{\pi}(\mu_- + \mu_+)}\left\{ (ix_1 + x_2)^{-1/2} - i(-ix_1 + x_2)^{-1/2} \right\}, \\[3mm]
\ds \sigma_{31}(x_1,x_2) = -\frac{\mu_+\mu_-}{4\sqrt{\pi}(\mu_- + \mu_+)}\left\{ i(ix_1 + x_2)^{-3/2} - (-ix_1 + x_2)^{-3/2} \right\}, \\[3mm]
\ds \sigma_{32}(x_1,x_2) = -\frac{\mu_+\mu_-}{4\sqrt{\pi}(\mu_- + \mu_+)}\left\{ (ix_1 + x_2)^{-3/2} - i(-ix_1 + x_2)^{-3/2} \right\}.
\earr
$$

We compare now the result obtained for the Mode III with that of Mode I and II. In particular, we notice from \eq{k3} that the skew-symmetric part of $K_\modIII$ 
is proportional to the material parameter $\eta$, whereas the skew-symmetric part of the complex stress intensity factor $K$ is proportional to the material parameter 
$\alpha$, see \eq{SIF} and \eq{skew}. A comparison of these two parameters is then useful to understand whether the skew-symmetric loading is more relevant for Mode III 
($\alpha > \eta$) or Mode I and II ($\eta$ > $\alpha$). This comparison is shown in Fig. \ref{alpha}.
\begin{figure}[!htb]
\begin{center}
\vspace*{3mm}
\includegraphics[width=6cm]{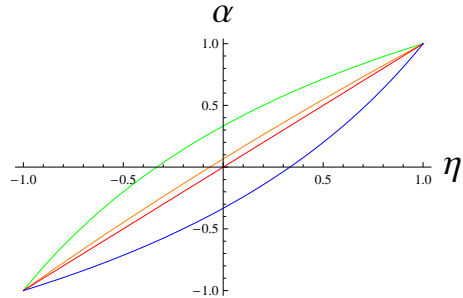}
\caption{\footnotesize The material parameter $\alpha$ as a function of the material parameter $\eta$ for: $\nu_+ = 0$, $\nu_-=0.5$ (green); 
$\nu_+ = 0.2$, $\nu_-=0.3$ (orange); $\nu_+ = \nu_-=0.3$ (red); 
$\nu_+ = 0.5$, $\nu_-=0$ (blue).}
\label{alpha}
\end{center}
\end{figure}

\vspace{60mm}

\clearpage
\appendix
\renewcommand{\theequation}{\thesection.\arabic{equation}}

\section{APPENDIX}
\setcounter{equation}{0}

\subsection{Material parameters relevant to the interfacial crack problem}
\lb{app1}
The classical material parameters involved in the problem of the interfacial crack are the bimaterial constant,
$$
\epsilon = \frac{1}{2\pi} \log\frac{\mu_+ + (3-4\nu_+)\mu_-}{\mu_- + (3-4\nu_-)\mu_+},
$$
and the Dundurs parameters
$$
d_* = \frac{\mu_-(1 - 2\nu_+) - \mu_+(1 - 2\nu_-)}{2\mu_-(1 - \nu_+) + 2\mu_+(1 - \nu_-)}, \quad
\alpha = \frac{\mu_-(1 - \nu_+) - \mu_+(1 - \nu_-)}{\mu_-(1 - \nu_+) + \mu_+(1 - \nu_-)},
$$
(see for example Hutchinson, Mear and Rice, 1987).
Note that the bimaterial constant $\epsilon$ can be written in terms of the parameter $d_*$ as follow
$$
\epsilon = \frac{1}{\pi} \arctanh(d_*) = \frac{1}{2\pi} \log\frac{1+d_*}{1-d_*}.
$$
Commonly used are also the parameter
$$
b = \frac{1 - \nu_+}{\mu_+} + \frac{1 - \nu_-}{\mu_-},
$$
and, mainly for the 3D case, the parameter
$$
e = \frac{\nu_+}{\mu_+} + \frac{\nu_-}{\mu_-}.
$$
In the study of the asymmetrical case, new bimaterial parameters appear. In particular, the parameter
$$
\gamma = \frac{\mu_-(1 - 2\nu_+) + \mu_+(1 - 2\nu_-)}{2\mu_-(1 - \nu_+) + 2\mu_+(1 - \nu_-)},
$$
for the plane strain problem (Mode I and II), the parameter
$$
\eta = \frac{\mu_- - \mu_+}{\mu_- + \mu_+},
$$
for the antiplane shear problem (Mode III), and the parameter
$$
f = \frac{\nu_+}{\mu_+} - \frac{\nu_-}{\mu_-}
$$
for the 3D case.

All the parameters involved in the problem of the interfacial crack are listed in Table \ref{params}.
\begin{table}[!htcb]
\label{params}
\begin{center}
\caption{\footnotesize Material parameters relevant to the interfacial crack problem ($\kappa_\pm = 3 - 4\nu_\pm$).}
\vspace{3mm}
\begin{tabular}{||l|l||}
\hline\hline
 & \\[-2mm]
Symmetric weight function $\jump{0.15}{\bU}$ & Skew-symmetric weight function $\langle \bU \rangle$ \\[3mm]
\hline\hline
$\barr{ll}
  & \\
b & \ds = \frac{1 - \nu_+}{\mu_+} + \frac{1 - \nu_-}{\mu_-} \\[3mm]
  & \ds = \frac{1 + \kappa_+}{4\mu_+} + \frac{1 + \kappa_-}{4\mu_-} \\
  &
\earr$
&
$\barr{ll}
        & \\
b\alpha & \ds = \frac{1 - \nu_+}{\mu_+} - \frac{1 - \nu_-}{\mu_-} \\[3mm]
        & \ds = \frac{1 + \kappa_+}{4\mu_+} - \frac{1 + \kappa_-}{4\mu_-} \\
        & 
\earr$
\\
\hline
$\barr{ll}
  & \\
d & \ds = \frac{1 - 2\nu_+}{2\mu_+} - \frac{1 - 2\nu_-}{2\mu_-} \\[3mm]
  & \ds = \frac{\kappa_+ - 1}{4\mu_+} - \frac{\kappa_- - 1}{4\mu_-} \\
  &
\earr$
&
$\barr{ll}
        & \\
b\gamma & \ds = \frac{1 - 2\nu_+}{2\mu_+} + \frac{1 - 2\nu_-}{2\mu_-} \\[3mm]
        & \ds = \frac{\kappa_+ - 1}{4\mu_+} + \frac{\kappa_- - 1}{4\mu_-} \\
        &
\earr$
\\
\hline
$\barr{ll}
  & \\
e & \ds = \frac{\nu_+}{\mu_+} + \frac{\nu_-}{\mu_-} \\[3mm]
  & \ds = \frac{3 - \kappa_+}{4\mu_+} + \frac{3 - \kappa_-}{4\mu_-} \\
  &
\earr$
&
$\barr{ll}
  & \\
f & \ds = \frac{\nu_+}{\mu_+} - \frac{\nu_-}{\mu_-} \\[3mm] 
  & \ds = \frac{3 - \kappa_+}{4\mu_+} - \frac{3 - \kappa_-}{4\mu_-} \\
  &
\earr$
\\
\hline
$\barr{ll}
    & \\
d_* & \ds = \frac{d}{b} = \frac{\mu_-(1 - 2\nu_+) - \mu_+(1 - 2\nu_-)}{2\mu_-(1 - \nu_+) + 2\mu_+(1 - \nu_-)} \\[3mm]
    & \ds = \frac{\mu_-(\kappa_+ - 1) - \mu_+(\kappa_- - 1)}{\mu_-(1 + \kappa_+) + \mu_+(1 + \kappa_-)} \\
    &
\earr$
&
$\barr{ll}
         & \\
\gamma_* & \ds = \frac{\gamma}{\alpha} = \frac{\mu_-(1 - 2\nu_+) + \mu_+(1 - 2\nu_-)}{2\mu_-(1 - \nu_+) - 2\mu_+(1 - \nu_-)} \\[3mm]
         & \ds = \frac{\mu_-(\kappa_+ - 1) + \mu_+(\kappa_- - 1)}{\mu_-(1 + \kappa_+) - \mu_+(1 + \kappa_-)} \\
         &
\earr$
\\
\hline
&
$\barr{l}
 \\
\ds \eta = \frac{\mu_- - \mu_+}{\mu_- + \mu_+} \\
~
\earr$
\\
\hline
$\barr{l}
 \\
\ds \frac{1}{\mu_+} + \frac{1}{\mu_-} = b + e \\
~
\earr$
&
$\barr{l}
 \\
\ds \frac{1}{\mu_+} - \frac{1}{\mu_-} = b\alpha + f \\
~
\earr$
\\
\hline\hline
\end{tabular}
\end{center}
\end{table}

We mention here that another bimaterial parameter, denoted by $\nu$, has been introduced by Lazarus and Leblond (1998), which is 
defined by
$$
1 - \nu = \frac{(1-\nu_+)/\mu_+ + (1-\nu_-)/\mu_-}{(1/\mu_+ + 1/\mu_-) \cosh^2(\pi\epsilon)},
$$
and can be expressed in our notations by
$$
1 - \nu = \frac{b d_0^4}{b + e}.
$$
As shown by Pindra {\em et al.}\ (2008), this bimaterial parameter has the meaning of ``equivalent Poisson ratio'', in the sense that it satisfies the inequalities 
$0 < \nu < 1/2$, and, when the two materials become the same, this quantity reduces to the Poisson ratio of the homogeneous body. 

There are several relations between material parameters that appear frequently in the solution of the problem. These relations are listed below
for convenience of the reader.

$$
d_0 = (1 - d_*^2)^{1/4}, \quad e_0 = e^{\pi\epsilon/2} = \left( \frac{1+d_*}{1-d_*} \right)^{1/4},
$$

$$
\cosh(\pi\epsilon) = \frac{1}{d_0^2}, \quad
\sinh(\pi\epsilon) = \frac{d_*}{d_0^2},
$$

$$
\cosh\left(\frac{\pi\epsilon}{2}\right) + \sinh\left(\frac{\pi\epsilon}{2}\right) = e_0, \quad
\cosh\left(\frac{\pi\epsilon}{2}\right) - \sinh\left(\frac{\pi\epsilon}{2}\right) = \frac{1}{e_0},
$$

$$
e_0^2 = \frac{1 + d_*}{d_0^2} = \frac{d_0^2}{1 - d_*}.
$$

\subsection{Abelian and Tauberian type theorems}
\lb{app3}
In this section, we will prove a proposition (Theorem \ref{theorem}), which is used in Section \ref{SIF_Abelian} for the purpose of evaluation of the perturbation 
of stress intensity factors and high-order analysis near the tip of the interfacial crack. This theorem appears to be a new result, not present in the literature.

We begin with presenting some known theorems. Although these results can be found in the literature (see for example Noble, 1988), they are presented here because of 
their relevance for the analysed 
problem. 

\begin{theorem} 
\lb{abel}
If, for some $a \in \Reals$, $|f(x)| < e^{ax}$ as $x \to +\infty$ and
\beq
\lb{est1}
f(x) \sim x^k, \quad x \to 0^+, \quad -1 < k < 0,
\eeq 
then 
$$
\psi^+(t) = \int_{0}^{\infty} f(x) e^{itx} dx
$$
is analytic in $\Complex_a^+ = \{t \in \Complex | \Im(t) > a\}$ and
\beq
\lb{est2}
\psi^+(t) \sim \Gamma(k+1) e^{i\frac{\pi}{2} (k+1)} t^{-k-1}, \quad t \to \infty, \quad 
\left|\arg(t) - \frac{\pi}{2}\right| \le A < \frac{\pi}{2}.
\eeq
\end{theorem}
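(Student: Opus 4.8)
\textbf{The plan} is to prove this as a version of Watson's lemma, isolating the pure power $x^k$ as the source of the leading asymptotics and estimating everything else as a genuinely lower-order remainder, uniformly on the sector.

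\emph{Analyticity.} For $t\in\Complex_a^+$ the integral defining $\psi^+$ is absolutely convergent: near $x=0$ the assumption \eqref{est1} gives $|f(x)|\le C x^k$ with $k>-1$, so the integrand is integrable there; for large $x$ the bound $|f(x)|<e^{ax}$ together with $|e^{itx}|=e^{-\Im(t)x}$ dominates the integrand by $e^{(a-\Im t)x}$, which decays exponentially. The same bounds apply, locally uniformly on $\Complex_a^+$, to $\partial_t\bigl(f(x)e^{itx}\bigr)=i x f(x)e^{itx}$, so either differentiation under the integral sign or Morera's theorem combined with Fubini shows that $\psi^+$ is analytic in $\Complex_a^+$.

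\emph{The model integral.} The leading behaviour is produced by the pure power. Starting from $\int_0^\infty x^k e^{-sx}\,dx=\Gamma(k+1)\,s^{-k-1}$ (immediate for $s>0$ by the substitution $u=sx$, then extended to $\Re(s)>0$ by analyticity), and continuing analytically to $s=-it$, for which $\Re(-it)=\Im(t)>0$, one finds
\[
\int_0^\infty x^k e^{itx}\,dx=\Gamma(k+1)(-it)^{-k-1}=\Gamma(k+1)\,e^{i\frac{\pi}{2}(k+1)}\,t^{-k-1},
\]
with the branch fixed by $\arg t\in(0,\pi)$, so that $\arg(-it)\in(-\tfrac{\pi}{2},\tfrac{\pi}{2})$; this is exactly the claimed leading term.

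\emph{The remainder.} Write $f(x)=x^k+g(x)$ with $g(x)=o(x^k)$ as $x\to0^+$, fix a cut-off $\delta>0$, and split $\psi^+(t)=\int_0^\delta+\int_\delta^\infty$. On the sector $|\arg t-\tfrac{\pi}{2}|\le A<\tfrac{\pi}{2}$ one has $\Im(t)\ge|t|\cos A$, so for $|t|$ large both $\int_\delta^\infty f(x)e^{itx}\,dx$ and $\int_\delta^\infty x^k e^{itx}\,dx$ are $O(e^{-c|t|})$ and hence $o(t^{-k-1})$; consequently $\int_0^\delta x^k e^{itx}\,dx$ reproduces the model integral up to exponentially small error. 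For the remaining piece, given $\varepsilon>0$ choose $\delta$ so that $|g(x)|\le\varepsilon x^k$ on $(0,\delta)$; then, using that $-k-1<0$ so $y\mapsto y^{-k-1}$ is decreasing,
\[
\Bigl|\int_0^\delta g(x)e^{itx}\,dx\Bigr|\le\varepsilon\int_0^\infty x^k e^{-\Im(t)x}\,dx=\varepsilon\,\Gamma(k+1)\,\Im(t)^{-k-1}\le\varepsilon\,\Gamma(k+1)\,(\cos A)^{-k-1}\,|t|^{-k-1}.
\]
Since $|t^{-k-1}|=|t|^{-k-1}$ and $\varepsilon$ is arbitrary, dividing by $|t|^{-k-1}$ and sending $|t|\to\infty$ within the sector yields $\psi^+(t)=\Gamma(k+1)e^{i\frac{\pi}{2}(k+1)}t^{-k-1}+o(t^{-k-1})$, which is \eqref{est2}.

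\emph{Main obstacle.} The argument is otherwise routine; the only delicate point is the uniformity over the sector. One must check that the constants in the tail estimate and in the bound for $\int_0^\delta g\,e^{itx}\,dx$ depend on $A$ only and not on $t$ --- which is precisely what $\Im(t)\ge|t|\cos A$ delivers --- and keep strict track of the branch of $t^{-k-1}$ so that the phase factor $e^{i\pi(k+1)/2}$ comes out with the correct sign.
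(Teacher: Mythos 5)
Your proof is correct. Note that the paper itself does not prove Theorem \ref{abel}: it is stated as a known Abelian-type result with a reference to Noble (1988), and the paper only supplies proofs for Theorems \ref{kappa}, \ref{kappazero} and \ref{theorem}. Your argument is the standard Watson's-lemma route --- isolate the exact model integral $\int_0^\infty x^k e^{itx}\,dx=\Gamma(k+1)(-it)^{-k-1}$, kill the tails using $\Im(t)\ge|t|\cos A$ on the sector, and absorb the $o(x^k)$ discrepancy near the origin into an $\varepsilon|t|^{-k-1}$ error --- and all the delicate points (order of quantifiers in $\varepsilon$ versus $\delta$, monotonicity of $y\mapsto y^{-k-1}$ since $-k-1\in(-1,0)$, and the branch giving the phase $e^{i\pi(k+1)/2}$) are handled correctly, so it stands as a legitimate self-contained proof of the cited result.
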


\begin{theorem} 
\lb{tau}
Let $\psi^+(t)$ be a function analytic in $\Complex_0^+$, with the property (\ref{est2}) where it is assumed that $-1 < k < 0$. Then there exists a function $f$ 
with support $\Reals_+$,
$$
f(x) = \frac{1}{2\pi} \int_{-\infty}^{\infty} \psi^+(t) e^{-ixt} dt,
$$
satisfying the asymptotic estimate (\ref{est1}).
\end{theorem}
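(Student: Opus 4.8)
The plan is to prove Theorem~\ref{tau} as a converse to Theorem~\ref{abel}, by peeling off an explicit model transform pair and then showing that what remains contributes only a lower-order term at the origin. The model pair is
\beq
f_0(x) = x_+^k \quad \longleftrightarrow \quad \psi_0^+(t) = \Gamma(k+1)\,e^{i\frac{\pi}{2}(k+1)}\,t^{-k-1}, \qquad t \in \Complex_0^+,
\eeq
with the branch of $t^{-k-1}$ cut along the negative imaginary axis; this follows either from the elementary computation $\int_0^\infty x^k e^{-\tau x}\,dx = \Gamma(k+1)\tau^{-k-1}$ ($\tau>0$) together with analytic continuation in $t$, or directly from Theorem~\ref{abel} applied to $f_0$ (the asymptotic there is in fact an identity for this particular $f_0$). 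Writing $\psi^+ = \psi_0^+ + \rho$, hypothesis \eq{est2} says precisely that $\rho$ is analytic in $\Complex_0^+$ and $\rho(t) = o(|t|^{-k-1})$ as $t\to\infty$ within the sector $|\arg t - \pi/2|\le A$. Since the inverse transform is linear, $f = f_0 + r$ with $r(x)=\frac{1}{2\pi}\int_{\Reals}\rho(t)e^{-ixt}\,dt$, and the theorem reduces to: (i) $\mathrm{supp}\,r\subseteq\overline{\Reals_+}$, and (ii) $r(x) = o(x^k)$ as $x\to 0^+$ — adding (ii) to $f_0(x)=x^k$ then gives $f(x)\sim x^k$.

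For (i) (and likewise for $f$ itself) I would close the contour in $\Complex_0^+$ for $x<0$: the factor $e^{-ixt}$ decays as $\Im t\to+\infty$, $\psi^+$ (hence $\rho$) is analytic there, and a Jordan-type estimate using $\psi^+(t)\to0$ as $|t|\to\infty$ in $\overline{\Complex_0^+}$ forces the integral to vanish, so the support lies in $\overline{\Reals_+}$. I note that this decay \emph{throughout} the closed upper half-plane — rather than only in the sector of \eq{est2} — is part of what it means for $\psi^+$ to be a genuine ``$+$''-function in this context, and it is needed here: the purely sectorial estimate \eq{est2} does not by itself determine the support (e.g.\ $t^{-k-1}+e^{it}$ satisfies \eq{est2} but has inverse transform supported outside $\Reals_+$).

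For (ii) I would pass to the Laplace transform. For $\Re p>0$, interchanging the order of integration and closing the $t$-contour in $\Complex_0^+$ (where $\rho$ is analytic and $o(|t|^{-k-1})$, so the arc vanishes), Cauchy's theorem yields
\beq
\mathcal{L}[r](p) \;=\; \int_0^\infty r(x)\,e^{-px}\,dx \;=\; \frac{1}{2\pi i}\int_{\Reals}\frac{\rho(t)}{t-ip}\,dt \;=\; \rho(ip).
\eeq
Since $ip$ sits on the positive imaginary axis, the very centre of the sector in \eq{est2}, the hypothesis gives $\mathcal{L}[r](p)=\rho(ip) = o\!\big(|ip|^{-k-1}\big)=o(p^{-k-1})$ as $p\to+\infty$. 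It then remains to deduce, from $\mathcal{L}[r](p)=o(p^{-k-1})$ as $p\to\infty$ and $\mathrm{supp}\,r\subseteq\overline{\Reals_+}$, that $r(x)=o(x^{k})$ as $x\to 0^+$. I expect \emph{this Tauberian inversion to be the main obstacle}: a Karamata-type argument gives the integrated conclusion $\int_0^x r(y)\,dy = o(x^{k+1})$ for free, but upgrading to the pointwise estimate requires a one-sided Tauberian side condition on $r$ (slow oscillation, or eventual monotonicity of $x^{-k}r$), which must be extracted from the structure of $r$ as the inverse transform of a sectorially decaying ``$+$''-function. Equivalently, working directly with the defining integral for $r$, the delicate point is that $|t|^{-k-1}$ fails to be integrable at infinity, so a naive dominated-convergence estimate after the scaling $t\mapsto t/x$ is unavailable and one must genuinely exploit the oscillation of $e^{-ixt}$ against the slowly varying $\rho$; either route converts the general statement into one about well-behaved functions, and in the concrete applications of Section~\ref{SIF_Abelian} (where $\rho$ is an explicit combination of $t^{-k-1}$ with $t^{\pm i\epsilon}$) $r$ is known in closed form and the estimate is automatic.
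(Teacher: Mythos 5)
The paper itself does not prove Theorem \ref{tau}: it is quoted as a known result from the Wiener--Hopf literature (Noble, 1988), so there is no internal proof to compare against. Judged on its own terms, your proposal is a plan rather than a proof, and the gap you yourself flag is fatal as it stands. The reduction $\psi^+=\psi_0^++\rho$, $f=f_0+r$ is fine, and the identity $\mathcal{L}[r](p)=\rho(ip)=o(p^{-k-1})$ is a clean way to exploit the sectorial hypothesis (modulo the same arc estimate discussed below). But the remaining implication --- that $\mathcal{L}[r](p)=o(p^{-k-1})$ as $p\to+\infty$ together with $\mathrm{supp}\,r\subseteq\overline{\Reals_+}$ forces $r(x)=o(x^{k})$ as $x\to0^+$ --- is precisely the Tauberian step; it is false without a one-sided side condition (Karamata only yields the integrated estimate $\int_0^x r=o(x^{k+1})$), and you do not supply that condition. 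Saying it ``must be extracted from the structure of $r$'' names the problem without solving it. Until that step is closed --- e.g.\ by deriving a slow-oscillation or one-sided boundedness property of $x^{-k}r(x)$ from the analyticity and decay of $\rho$ on the whole closed half-plane, or by strengthening the hypothesis to real-axis asymptotics as in Theorem \ref{kappa}, whose proof in the paper then goes through by the direct substitution $\xi=|x|t$ --- the theorem is not established.

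Two smaller points. First, your observation that the purely sectorial estimate \eq{est2} cannot by itself control the support, and that decay on all of $\overline{\Complex_0^+}$ must be read into the meaning of a ``$+$''-function, is legitimate (the paper's closing Remark makes the same distinction between \eq{est2} and \eq{assumpt}, and Jordan's lemma does require uniform decay on the whole semicircle, including near the real axis where \eq{est2} says nothing). But your counterexample does not show this: $e^{it}$ is bounded and analytic on the closed upper half-plane, and its inverse transform is $\delta(x-1)$, which is supported at $x=1\in\Reals_+$ and in any case would not disturb the asymptotics at $0^+$. Second, note that the results the paper actually proves (Theorems \ref{kappa}--\ref{theorem}) deliberately assume asymptotics of the transform \emph{along the real axis} precisely because that is what permits a direct estimate of the inversion integral; your impasse at step (ii) is the price of working from the weaker, purely sectorial hypothesis alone.
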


\begin{theorem} 
\lb{kappa}
Let $\psi$ be a locally integrable function on $\Reals$ satisfying, for some small $\epsilon > 0$,
\beq
\lb{conds}
\psi(t) = A_\pm |t|^{-k-1} + O(|t|^{-k-1-\epsilon}), \quad t \to \pm\infty, \quad -1 < k <0.
\eeq
Then there exists a function $f$,
$$
f(x) = \frac{1}{2\pi} \int_{-\infty}^{\infty} \psi(t) e^{-ixt} dt,
$$
satisfying 
\beq
\lb{res}
f(x) = B_\pm |x|^k + O(|x|^{k+\epsilon}), \quad x \to 0^\pm, 
\eeq
where
$$
B_\pm = \frac{\Gamma(-k)}{2\pi} \left\{ (A_+ + A_-)\sin\frac{\pi(k+1)}{2} \mp i(A_+ - A_-)\cos\frac{\pi(k+1)}{2} \right\}.
$$
\end{theorem}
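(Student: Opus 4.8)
\medskip
\noindent\textbf{Proof proposal.} The plan is to peel off from $\psi$ the homogeneous ``leading part'' whose inverse Fourier transform can be written down in closed form, and then to show that the remainder contributes only a lower-order term near $x=0$. First I would split $\psi=\psi_0+\psi_r$, where
$$
\psi_0(t)=A_+\,|t|^{-k-1}H(t)+A_-\,|t|^{-k-1}H(-t).
$$
Since $-1<k<0$ we have $-k-1\in(-1,0)$, so $\psi_0$ is locally integrable and defines a homogeneous tempered distribution, while $\psi_r:=\psi-\psi_0$ lies in $L^1_{\mathrm{loc}}$ with $\psi_r(t)=O(|t|^{-k-1-\epsilon})$ as $t\to\pm\infty$; after shrinking $\epsilon$ if necessary we may assume $-k-1-\epsilon\in(-1,0)$ as well, so that $k+\epsilon<0$. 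Then $f=f_0+f_r$ with $f_j=\mF^{-1}\psi_j$, the transforms taken in $\bar{\mS}(\Reals)$.

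For the explicit part I would use the classical conditionally convergent identity $\int_0^{\infty}t^{a-1}e^{\mp ixt}\,dt=\Gamma(a)\,e^{\mp i\pi a/2}\,x^{-a}$ for $x>0$ and $0<\Re(a)<1$ (obtained by rotating the contour in $\int_0^\infty t^{a-1}e^{-bt}\,dt=\Gamma(a)b^{-a}$ to $b=\pm ix$). Applying it with $a=-k$ on each half-line, and substituting $t\mapsto-t$ on $(-\infty,0)$, gives for $x>0$
$$
2\pi f_0(x)=\Gamma(-k)\,x^{k}\bigl[A_+e^{i\pi k/2}+A_-e^{-i\pi k/2}\bigr],
$$
with the companion formula for $x<0$ obtained by conjugating the exponents and replacing $x^k$ by $|x|^k$. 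Writing the bracket as $(A_++A_-)\cos\frac{\pi k}{2}\pm i(A_+-A_-)\sin\frac{\pi k}{2}$ and using $\cos\frac{\pi k}{2}=\sin\frac{\pi(k+1)}{2}$, $\sin\frac{\pi k}{2}=-\cos\frac{\pi(k+1)}{2}$ turns this into exactly $f_0(x)=B_\pm|x|^k$ for $x\gtrless0$, with $B_\pm$ as stated; equivalently $\psi_0$ and $B_\pm|x|^k$ are a Fourier-conjugate pair of homogeneous distributions.

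It then remains to show $f_r(x)=O(|x|^{k+\epsilon})$ as $x\to0^\pm$. For $0<|x|<1$ I would split $f_r(x)=\frac1{2\pi}\bigl(\int_{|t|\le1/|x|}+\int_{|t|>1/|x|}\bigr)\psi_r(t)e^{-ixt}\,dt$. The low-frequency piece is bounded in modulus by $\int_{|t|\le1}|\psi_r|+C\int_1^{1/|x|}|t|^{-k-1-\epsilon}\,dt=O(1)+O(|x|^{k+\epsilon})=O(|x|^{k+\epsilon})$, using $-k-1-\epsilon>-1$ together with $k+\epsilon<0$. For the tail one rescales $u=xt$ and compares with the convergent oscillatory integral $\int_{|u|\ge1}|u|^{-k-1-\epsilon}e^{-iu}\,du$, the cancellation in $e^{-ixt}$ being extracted by a second mean value (Bonnet) estimate or an integration by parts; this contribution is then also $O(|x|^{k+\epsilon})$, and adding the two parts finishes the proof.

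The algebra in the first two steps is routine, the only care being branch bookkeeping so that the exponentials collapse to precisely the $\sin/\cos$ combination in $B_\pm$. The real obstacle is the tail estimate for $f_r$: because $-k-1-\epsilon>-1$ that integral is \emph{not} absolutely convergent, so it cannot be dominated by $|\psi_r|$ and one must genuinely use the oscillation of $e^{-ixt}$, which needs a mild extra regularity or bounded-variation property of $\psi_r$ at infinity. This is automatic in every application of the theorem in this paper, where $\psi$ is piecewise smooth with power-type tails, so the cleanest presentation is either to record that hypothesis explicitly or simply to prove the statement for $\psi$ equal to a Schwartz function plus its homogeneous tail, which is all that Theorem~\ref{theorem} requires.
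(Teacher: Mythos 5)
Your proposal is correct in outline and, at its core, performs the same key computation as the paper's proof: both isolate the homogeneous tail $A_\pm|t|^{-k-1}$ and evaluate its oscillatory Fourier integral through the classical identities $\int_0^\infty\xi^{-k-1}\cos\xi\,d\xi=\Gamma(-k)\sin\frac{\pi(k+1)}{2}$ and $\int_0^\infty\xi^{-k-1}\sin\xi\,d\xi=\Gamma(-k)\cos\frac{\pi(k+1)}{2}$; your $\Gamma(a)\,e^{\mp i\pi a/2}x^{-a}$ formula is just the complex form of these, and your trigonometric bookkeeping reproducing $B_\pm$ checks out. The organization differs. The paper first splits $\psi=H\psi+(1-H)\psi$ by half-line, cuts each half-line at a radius $a$ (taken as $x^{-1/2}$ in the companion proofs, fixed here), replaces $\psi$ by its leading asymptotic term on $[a,\infty)$, rescales $\xi=|x|t$, and computes only the limit $f(x)/|x|^k\to B_\pm$. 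You instead subtract the exact homogeneous distribution $\psi_0$ globally, obtain $f_0(x)=B_\pm|x|^k$ as an identity, and estimate the remainder with an $|x|$-dependent cutoff at $|t|=1/|x|$. Your route has the advantage of actually delivering the quantitative error $O(|x|^{k+\epsilon})$ asserted in \eq{res}, rather than only the leading asymptotic equivalence, provided the tail estimate is completed.

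The difficulty you flag for $\int_{|t|>1/|x|}\psi_r(t)e^{-ixt}\,dt$ is genuine: under the stated hypotheses (local integrability plus \eq{conds}) that integral is at best conditionally convergent when $k+\epsilon<0$, and one must use the oscillation of $e^{-ixt}$, which requires some mild regularity or bounded variation of $\psi_r$ at infinity that the theorem does not state. You should be aware, however, that the paper's own proof contains exactly the same lacuna in disguised form: the step asserting $f_1(x)=g(x)+O(x^{\min\{k+\epsilon,1\}})$ silently replaces $\psi_1$ by $A_+t^{-k-1}$ inside a non-absolutely-convergent integral, which is the same tail estimate you are missing. So your explicit acknowledgment of the needed extra hypothesis (automatic for the piecewise-smooth, power-tailed transforms to which the theorem is applied in Section \ref{SIF_Abelian} and Theorem \ref{theorem}) is an improvement in rigour relative to the published argument, not a defect of your approach; either record that hypothesis or restrict the statement to the class of functions actually used.
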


\begin{proof} First we write
$$
\psi(t) = \psi_1(t) + \psi_2(t),
$$ 
where $\psi_1(t) = H(t) \psi(t)$ and $\psi_2(t) = [1 - H(t)] \psi(t)$, with $H(t)$ being the unit-step Heaviside function. Then, we consider the inverse transform of
$\psi_1(t)$ and, assuming a positive constant $a$, we deduce
$$
\barr{ll}
f_1(x) 
& \ds = \frac{1}{2\pi} \int_{0}^{\infty} \psi_1(t) e^{-ixt} dt \\[3mm]
& \ds = \frac{1}{2\pi} \left( \int_{0}^{a} + \int_{a}^{\infty} \right) \psi_1(t) e^{-ixt} dt \\[3mm]
& \ds = \frac{1}{2\pi} \int_{a}^{\infty} \psi_1(t) e^{-ixt} dt + O(1), \quad x \to 0,
\earr
$$
where the last equality is a consequence of the local integrability of the function $\psi_1$. From the asymptotics \eq{conds}$_1$, we have 
$\psi_1(t) \sim A_+ t^{-k-1}$, $t \to +\infty$ and thus
$$
f_1(x) = g(x) + O(x^{\min\{k+\epsilon,1\}}), \quad x \to 0,
$$
where
$$
g(x) = \frac{A_+}{2\pi} \int_{a}^{\infty} t^{-k-1} e^{-ixt} dt.
$$
Making the substitution $\xi = |x|t$, we obtain
\beq
\lb{gi2}
g(x) = \frac{A_+}{2\pi} |x|^k \int_{|x|a}^{\infty} \xi^{-k-1} e^{-i\sign(x)\xi} d\xi, 
\eeq
and hence
\beq
\lb{part1}
\barr{ll}
\ds \lim_{x \to 0^\pm} \frac{f_1(x)}{|x|^k} 
& \ds = \frac{A_+}{2\pi} \int_{0}^{\infty} \xi^{-k-1} e^{\mp i\xi} d\xi \\[5mm]
& \ds = \frac{A_+}{2\pi} \left\{ \int_{0}^{\infty} \xi^{-k-1} \cos\xi d\xi \mp i \int_{0}^{\infty} \xi^{-k-1} \sin\xi d\xi \right\} \\[5mm]
& \ds = \frac{A_+ \Gamma(-k)}{2\pi} \left\{ \sin\frac{\pi(k+1)}{2} \mp i \cos\frac{\pi(k+1)}{2} \right\}.
\earr
\eeq
Considering now the inverse transform of $\psi_2(t)$, we have
$$
f_2(x) = \frac{1}{2\pi} \int_{-\infty}^{0} \psi_2(t) e^{-ixt} dt = \frac{1}{2\pi} \int_{0}^{\infty} \psi_3(t) e^{ixt} dt,
$$
where $\psi_3(t) = \psi_2(-t) \sim A_- t^{-k-1}$, $t \to +\infty$. From the same argument used above for $f_1(x)$, we deduce
\beq
\lb{part2}
\lim_{x \to 0^\pm} \frac{f_2(x)}{|x|^k} = \frac{A_- \Gamma(-k)}{2\pi} \left\{ \sin\frac{\pi(k+1)}{2} \pm i \cos\frac{\pi(k+1)}{2} \right\}.
\eeq
It suffices now to add the partial results \eq{part1} and \eq{part2} to get the final result. 
\end{proof}

\begin{theorem} 
\lb{kappazero}
Let $\psi$ be a locally integrable function on $\Reals$ satisfying
\beq
\lb{conds3}
\psi(t) \sim A_\pm |t|^{-1}, \quad t \to \pm\infty, 
\eeq
and assume that $A_+ + A_- \ne 0$. Then there exists a function $f$,
$$
f(x) = \frac{1}{2\pi} \int_{-\infty}^{\infty} \psi(t) e^{-ixt} dt,
$$
satisfying 
\beq
\lb{res3}
f(x) \sim -\frac{1}{2\pi} (A_+ + A_-) \log |x|, \quad x \to 0.
\eeq
\end{theorem}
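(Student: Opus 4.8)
The plan is to mirror the decomposition used in the proof of Theorem \ref{kappa}; the single new feature is that the \emph{threshold} integral $\int_0^\infty \xi^{-1}\cos\xi\,d\xi$ is logarithmically divergent at the origin, and it is exactly this divergence that replaces the power $|x|^k$ by $\log|x|$.

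First I would write $\psi = \psi_1 + \psi_2$ with $\psi_1(t) = H(t)\psi(t)$, $\psi_2(t) = [1-H(t)]\psi(t)$, so that $f = f_1 + f_2$ where $f_1(x) = \frac{1}{2\pi}\int_0^\infty \psi_1(t)e^{-ixt}\,dt$ and, after the change $t\mapsto -t$, $f_2(x) = \frac{1}{2\pi}\int_0^\infty \psi_3(t)e^{ixt}\,dt$ with $\psi_3(t) = \psi(-t) \sim A_- t^{-1}$. For $f_1$ I split $\int_0^\infty = \int_0^a + \int_a^\infty$ for a fixed $a>0$; the first piece is $O(1)$ as $x\to 0$ by local integrability of $\psi$. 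On $[a,\infty)$ I replace $\psi_1(t)$ by its leading part $A_+/t$ and, after the substitution $\xi = |x|t$, obtain
\[
g(x) = \frac{A_+}{2\pi}\int_a^\infty \frac{e^{-ixt}}{t}\,dt = \frac{A_+}{2\pi}\int_{|x|a}^\infty \frac{e^{-i\sign(x)\xi}}{\xi}\,d\xi .
\]
Separating real and imaginary parts and using $\int_\delta^\infty \xi^{-1}\cos\xi\,d\xi = -\Ci(\delta) \sim -\log\delta$ together with $\int_\delta^\infty \xi^{-1}\sin\xi\,d\xi = \frac{\pi}{2} - \Si(\delta) \to \frac{\pi}{2}$ as $\delta\to 0^+$, I get $g(x) \sim -\frac{A_+}{2\pi}\log|x|$, the imaginary part contributing only a bounded term. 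The identical computation applied to $f_2$ (with $e^{ixt}$ in place of $e^{-ixt}$, which merely flips $\sign(x)$) gives $f_2(x) \sim -\frac{A_-}{2\pi}\log|x|$ up to an $O(1)$ term. Adding the two contributions, and using the hypothesis $A_+ + A_- \ne 0$ to guarantee that the logarithmic terms do not cancel, yields \eq{res3}.

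I expect the main obstacle to be the control of the error $\int_a^\infty \bigl(\psi_1(t) - A_+/t\bigr)e^{-ixt}\,dt$, which must be shown to be $o(\log|x|)$; this is the step in which the weaker hypothesis ``$\sim$'' in \eq{conds3}, as opposed to the power-law remainder \eq{conds} available in Theorem \ref{kappa}, makes a genuine difference. Writing $r(t) = \psi_1(t) - A_+/t = o(1/t)$, I would split $\int_a^\infty = \int_a^{1/|x|} + \int_{1/|x|}^\infty$. On $[a,1/|x|]$ the trivial bound $\int_a^{1/|x|}|r(t)|\,dt$, combined with $|r(t)| < \eta/t$ for all large $t$, gives $\limsup_{x\to 0}\, \bigl|\,\int_a^{1/|x|} r(t)e^{-ixt}\,dt\,\bigr|/\log(1/|x|) \le \eta$ for every $\eta>0$, hence $o(\log|x|)$. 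On $[1/|x|,\infty)$ a single integration by parts — legitimate under the mild regularity (eventual monotonicity or bounded variation of $\psi$) that is implicit throughout the paper and automatic in the application of Section \ref{SIF_Abelian}, where $\psi$ is built from explicitly known analytic Fourier transforms — leaves a boundary term of size $|r(1/|x|)|/|x| = o(1)$ plus an equally small remainder. Thus the total error is $o(\log|x|)$.

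As a consistency check I would note that \eq{res3} is formally the $k\to 0^-$ limit of Theorem \ref{kappa}: there $\Gamma(-k)$ has a simple pole $\sim -1/k$, $\sin\frac{\pi(k+1)}{2}\to 1$, $\cos\frac{\pi(k+1)}{2}\sim -\frac{\pi k}{2}$, and $|x|^k = 1 + k\log|x| + O(k^2\log^2|x|)$, so the $\log|x|$-coefficient of $B_\pm|x|^k$ tends to the finite value $-\frac{1}{2\pi}(A_+ + A_-)$, while the divergent $O(1/k)$ part is the $x$-independent constant absorbed into the $O(1)$ remainder above. I would use this only as a sanity check and carry out the direct argument sketched above.
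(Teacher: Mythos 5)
Your argument is essentially the paper's own: the same Heaviside splitting $\psi=\psi_1+\psi_2$, the same division $\int_0^a+\int_a^\infty$ with the local-integrability bound on the first piece, the same substitution $\xi=|x|t$, and the same identification of the resulting integral with $-\Ci(|x|a)-i\sign(x)\bigl[\tfrac{\pi}{2}-\Si(|x|a)\bigr]$, whose real part supplies the $\log|x|$ asymptotics. Your extra paragraph controlling the remainder $\psi_1(t)-A_+/t=o(1/t)$ treats a point the paper dismisses with ``the proof proceeds in the same way as for Theorem \ref{kappa}'', so it is a refinement of, not a departure from, the published route.
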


\begin{proof} The proof proceeds in the same way as for Theorem \ref{kappa}, except that the function $g(x)$ in \eq{gi2} is now given by
$$
g(x) = \frac{A_+}{2\pi} \int_{|x|a}^{\infty} \frac{1}{\xi} e^{-i\sign(x)\xi} d\xi 
= \frac{A_+}{2\pi} \left\{ -\Ci(|x|a) - i\sign(x) \left[\frac{\pi}{2} - \Si(|x|a)\right] \right\},
$$
where $\Ci$ and $\Si$ stand for the cosine and sine integral functions, respectively. From the properties $\Ci|x| \sim \log|x|$ and $\Si(x) \to 0$, as $x \to 0$, we 
deduce
$$
g(x) \sim -\frac{A+}{2\pi} \log|x|,
$$
and hence the assertion. 
\end{proof}

For the purpose of application of these results to the perturbation analysis of the physical fields around the interfacial crack, the need arises for an extension of 
the Theorem \ref{tau} to the case $k = 0$. This extension is provided by the following theorem. 

\begin{theorem} 
\lb{theorem}
Let $f(x)$ be the function
\beq
\lb{effe}
f(x) = \frac{1}{2\pi} \int_{-\infty}^{\infty} \psi^+(t) e^{-ixt} dt.
\eeq
If $\psi^+(t)$ is analytic in $\Complex^+$ and
\beq
\lb{assumpt}
\psi^+(t) = a_1 t^{-1} + a_2 t^{-2} + O\left(t^{-3}\right), \quad t \to \infty, 
\eeq
in the closed half-plane $\overline{\Complex}^+ = \Complex^+ \cup \Reals$, then $f(x) = 0,\ \forall x < 0$ and 
\beq
\lb{res1}
\lim_{x \to 0^+} f(x) = - i a_1, 
\eeq
\beq
\lb{res2}
f'(x) = -\frac{i}{2\pi} \int_{-\infty}^{\infty} \{t \psi^+(t) - a_1\}e^{-ixt} dt.
\eeq
\end{theorem}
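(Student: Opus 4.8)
The statement is the endpoint case $k=0$ that is not covered by the Tauberian Theorems \ref{tau}--\ref{kappazero}: since here $\psi^+(t)\sim a_1 t^{-1}$ both as $t\to+\infty$ and as $t\to-\infty$ (with leading coefficients $a_1$ and $-a_1$, whose sum vanishes), the argument of Theorem \ref{kappazero} does not apply and a fresh computation is needed. The plan is to regularise the only conditionally convergent integral \eq{effe} by subtracting from $\psi^+$ two explicit rational functions that are analytic in $\overline{\Complex}^+$ and reproduce its asymptotics up to order $t^{-2}$:
\beq
\psi^+(t) = \frac{a_1}{t+i} + \frac{c}{(t+i)^2} + h(t), \qquad c := a_2 + i a_1,
\eeq
where, by \eq{assumpt}, this choice of $c$ forces $h$ to be analytic in $\overline{\Complex}^+$ with $h(t) = O(t^{-3})$, so that $h\in L^1(\Reals)$ and $t\,h\in L^1(\Reals)$. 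The two building blocks have inverse Fourier transforms supported on $[0,\infty)$, obtained by residues (closing the contour in $\Complex^-$ when $x>0$ and in $\Complex^+$ when $x<0$):
\beq
\frac{1}{2\pi}\int_{-\infty}^{\infty}\frac{e^{-ixt}}{t+i}\,dt = -i\,e^{-x}H(x), \qquad
\frac{1}{2\pi}\int_{-\infty}^{\infty}\frac{e^{-ixt}}{(t+i)^2}\,dt = -x\,e^{-x}H(x),
\eeq
$H$ denoting the Heaviside step.

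From this the first two assertions follow at once. For $x<0$ all three contributions vanish: the rational ones because $H(x)=0$, and $\frac{1}{2\pi}\int h(t)e^{-ixt}\,dt = 0$ because $h$ is analytic in $\overline{\Complex}^+$ and $O(t^{-3})$ there, so Cauchy's theorem applied to the contour closed by a large upper semicircle (whose contribution is $O(R^{-2})$) gives zero; hence $f(x)=0$ for every $x<0$. For $x>0$ one has
\beq
f(x) = -i a_1 e^{-x} - c\,x\,e^{-x} + f_h(x), \qquad f_h(x) := \frac{1}{2\pi}\int_{-\infty}^{\infty} h(t)\,e^{-ixt}\,dt,
\eeq
and since $h\in L^1(\Reals)$ the function $f_h$ is continuous with $f_h(0) = \frac{1}{2\pi}\int h(t)\,dt$, which is again $0$ by the same contour argument; letting $x\to0^+$ yields \eq{res1}.

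For the derivative \eq{res2} I would differentiate this representation termwise: the rational terms are differentiable for $x>0$, and $f_h\in C^1$ with $f_h'(x) = \frac{1}{2\pi}\int(-it)\,h(t)\,e^{-ixt}\,dt$ because $t\,h\in L^1(\Reals)$. On the other side, $t\psi^+(t)-a_1$ is analytic in $\overline{\Complex}^+$ and, using $\frac{a_1 t}{t+i}-a_1 = \frac{-ia_1}{t+i}$ and $\frac{c\,t}{(t+i)^2} = \frac{c}{t+i}-\frac{ic}{(t+i)^2}$, it admits the same-basis decomposition $t\psi^+(t)-a_1 = \frac{a_2}{t+i} - \frac{ic}{(t+i)^2} + t\,h(t)$; applying the residue formulae above to $-\frac{i}{2\pi}\int\{t\psi^+(t)-a_1\}e^{-ixt}\,dt$ and comparing term by term with the derivative of $f$ gives \eq{res2} (both sides vanishing for $x<0$, consistently with $f\equiv 0$ there). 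The main obstacle, and the point that must be handled with care, is exactly the borderline decay $\psi^+(t)=O(t^{-1})$: it makes \eq{effe} not absolutely convergent, it is precisely the case excluded from Theorem \ref{kappazero}, and it is what dictates introducing the auxiliary poles at $t=-i$; one must also use that the expansion \eq{assumpt} holds uniformly in the closed half-plane, so that the semicircular-arc estimates invoked above are legitimate.
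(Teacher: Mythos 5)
Your proof is correct, but it takes a genuinely different route from the paper's. The paper works directly with the conditionally convergent integral: it splits $\int_{-\infty}^{\infty}$ at $\pm a$ with the coupled scaling $a=x^{-1/2}$, writes $\psi^+(t)=a_1/t+R(t)$ with $tR(t)\to 0$, evaluates the tail via the sine integral ($f_{11}\to -i\pi a_1$), and deforms the middle piece onto the semicircle $\Gamma_a$ by analyticity ($f_{21}\to -i\pi a_1$), the two halves summing to $-ia_1$; for \eq{res2} it integrates by parts twice with $\Im(x)>0$ and then passes to real $x$. You instead subtract the explicit model functions $a_1/(t+i)$ and $c/(t+i)^2$ with $c=a_2+ia_1$ (I checked: this is exactly the constant that kills the $t^{-2}$ term, and your residue evaluations $-ie^{-x}H(x)$ and $-xe^{-x}H(x)$ are right), reducing everything to residue calculus plus an absolutely convergent remainder $h=O(t^{-3})$; the identity \eq{res2} is then verified by decomposing $t\psi^+(t)-a_1$ in the same basis and matching term by term, and indeed $-c+ia_1=-a_2$ makes the two sides agree. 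What your approach buys is an explicit closed form for $f$ on $x>0$ modulo a $C^1$ remainder, full absolute convergence after the subtraction, and a derivative formula obtained without the detour through complex $x$; what the paper's approach buys is that it generalizes more transparently to situations where one only controls asymptotics of $\psi^+$ rather than being able to subtract globally analytic comparison functions, and it makes visible that \emph{half} of the limit $-ia_1$ comes from the real-axis tails and half from the deformation into $\Complex^+$ (which is the content of the closing Remark about why analyticity, not mere smoothness, is essential). Both proofs rely on the same two hypotheses in the same way: analyticity in $\Complex^+$ for the contour deformations, and uniformity of \eq{assumpt} on the closed half-plane for the arc estimates, which you correctly flag.
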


\begin{proof} The fact that $f(x) = 0,\ \forall x < 0$ is a direct consequence of the fact that $\psi^+(t)$ is a "$+$" function. Assume now that $x > 0$. 

From the assumptions on the behaviour of the function $\psi^+(t)$, it follows that $\psi^+(t) = a_1/t + R(t)$, where $t R(t) \to 0$, as $t \to \infty$, $t \in \overline{\Complex}^+$ 
(including $t \to \pm\infty$, $t \in \Reals$).

We write
\beq
\lb{int}
f(x) = \frac{1}{2\pi} \lim_{a \to +\infty} \left\{ \int_{a}^{\infty} [\psi^+(-t) e^{ixt} + \psi^+(t) e^{-ixt}] dt + \int_{-a}^{a} \psi^+(t) e^{-ixt} dt \right\}.
\eeq
The first integral is
$$
f_1(x,a) = \int_{a}^{\infty} [\psi^+(-t) e^{ixt} + \psi^+(t) e^{-ixt}] dt = f_{11}(x,a) + f_{12}(x,a),
$$
where
$$
f_{11}(x,a) = \int_{a}^{\infty} \left[-\frac{a_1}{t} e^{ixt} + \frac{a_1}{t} e^{-ixt}\right] dt 
= -2i a_1 \int_{xa}^{\infty} \frac{\sin(\xi)}{\xi} d\xi,
$$
and
$$
f_{12}(x,a) = \int_{a}^{\infty} \left[R(-t) e^{ixt} + R(t) e^{-ixt}\right] dt 
= \int_{xa}^{\infty} \left[ \frac{1}{x}R\left(-\frac{\xi}{x}\right)e^{i\xi} + \frac{1}{x}R\left(\frac{\xi}{x}\right)e^{-i\xi} \right] d\xi,
$$
so that, taking $a = x^{-1/2}$, we have $f_{11}(x,x^{-1/2}) \to -i\pi a_1$ and $f_{12}(x,x^{-1/2}) \to 0$, as $x \to 0^+$.

Let us denote the second integral in \eq{int} by $f_2(x,a)$. Then, using analyticity of $\psi^+(t)$ in $\Complex^+$, we deduce
$$
f_2(x,a) = - \int_{\Gamma_a} \psi^+(t) e^{-ixt} dt,
$$
where $\Gamma_a = \{t \in \Complex | t = a e^{i\theta}, 0 < \theta < \pi\}$. We write 
$$
f_2(x,a) = f_{21}(x,a) + f_{22}(x,a),
$$
where
$$
f_{21}(x,a) = - \int_{\Gamma_a} \frac{a_1}{t} e^{-ixt} dt \quad \text{and} \quad 
f_{22}(x,a) = - \int_{\Gamma_a} R(t) e^{-ixt} dt.
$$
Taking again the same $a = x^{-1/2}$, we obtain
$$
f_{21}(x,x^{-1/2}) = - \int_{\Gamma_{x^{-1/2}}} \frac{a_1}{t} e^{-ixt} dt \sim - a_1 \int_{\Gamma_{x^{-1/2}}} \frac{1}{t} dt = -i \pi a_1,
$$
as $x \to 0^+$.

Finally, 
$$
f_{22}(x,a) = - \int_{\Gamma_a} R(t) e^{-ixt} dt = - \int_{\Gamma_{xa}} \frac{1}{x}R\left(\frac{\xi}{x}\right) e^{-i\xi} d\xi,
$$
so that $f_{22}(x,x^{-1/2}) \to 0$, as $x \to 0^+$, and we can conclude that 
$$
f(x) \to \frac{1}{2\pi} (-i\pi a_1 - i\pi a_1) = -i a_1,
$$
as $x \to 0^+$, which proves \eq{res1}.

Taking the derivative of \eq{effe} with respect to $x$ we obtain 
$$
f'(x) = -\frac{i}{2\pi} \int_{-\infty}^{\infty} t\psi^+(t) e^{-ixt} dt.
$$
Note that this integral does not exist for real $x$, so everywhere later we assume $\Im(x) > 0$. Integrating by parts 
$$
\barr{ll}
f'(x) 
& \ds = -\frac{i}{2\pi} \left\{ \left[ t\psi^+(t) \int_{-\infty}^{t} e^{-ixp} dp \right]_{-\infty}^{\infty} 
- \int_{-\infty}^{\infty} [t\psi^+(t)]' \left(\int_{-\infty}^{t} e^{-ixp} dp\right) dt \right\} \\[5mm]
& \ds = \frac{i}{2\pi} \int_{-\infty}^{\infty} [t\psi^+(t) - a_1]' \left(\int_{-\infty}^{t} e^{-ixp} dp\right) dt,
\earr
$$
because $\int_{-\infty}^{t} e^{-ixp} dp \to 0$ as $t \to \pm\infty$. Integrating again by parts 
$$
\barr{ll}
f'(x) 
& \ds = \frac{i}{2\pi} \left\{ \left[ [t\psi^+(t) - a_1] \int_{-\infty}^{t} e^{-ixp} dp \right]_{-\infty}^{\infty} 
- \int_{-\infty}^{\infty} [t\psi^+(t) - a_1] e^{-ixt} dt \right\} \\[5mm]
& \ds = -\frac{i}{2\pi} \int_{-\infty}^{\infty} [t\psi^+(t) - a_1] e^{-ixt} dt.
\earr
$$
Note that the last formula makes sense also for real $x$. 
\end{proof}

An immediate consequence of the Theorem \ref{theorem} is the following

\begin{corollarynonum}
It follows from the Theorem \ref{theorem} that
$$
\lim_{x \to 0^+} f'(x) = - a_2.
$$
\end{corollarynonum}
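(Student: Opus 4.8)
The plan is to apply Theorem~\ref{theorem} a second time, now to the function that appears on the right-hand side of \eqref{res2}. Set $\psi_1^+(t) := t\,\psi^+(t) - a_1$. From the hypothesis \eqref{assumpt} one has $\psi_1^+(t) = a_2\,t^{-1} + O(t^{-2})$ as $t \to \infty$ in $\overline{\Complex}^+$, and $\psi_1^+$ is analytic in $\Complex^+$ because $\psi^+$ is and $t \mapsto t$ is entire. Thus $\psi_1^+$ is of the type covered by Theorem~\ref{theorem}, with the role of the coefficient $a_1$ now played by $a_2$. (Only the one-term asymptotics is needed below, and the argument establishing \eqref{res1} in the proof above uses nothing more than the fact that the coefficient of $t^{-1}$ is the quoted constant together with $tR(t)\to 0$ for the remainder $R$, which holds here since $\psi_1^+(t) - a_2 t^{-1} = O(t^{-2})$.)

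Next I would observe that, by \eqref{res2}, $f'(x) = -i\,h(x)$, where $h(x) := \tfrac{1}{2\pi}\int_{-\infty}^{\infty}\psi_1^+(t)\,e^{-ixt}\,dt$ is precisely the function associated with $\psi_1^+$ through \eqref{effe}. Applying the conclusion \eqref{res1} of Theorem~\ref{theorem} to $\psi_1^+$ yields $\lim_{x\to 0^+} h(x) = -i\,a_2$, and hence
$$
\lim_{x\to 0^+} f'(x) = -i\,\lim_{x\to 0^+} h(x) = -i\,(-i\,a_2) = -a_2 ,
$$
which is the asserted identity. Consistently, $f'(x) = 0$ for all $x<0$, since $\psi_1^+$ is again a ``$+$'' function and Theorem~\ref{theorem} applies verbatim.

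There is no substantive obstacle here; the corollary really is an immediate consequence. The only points that need a moment of care are (i) checking that $\psi_1^+$ inherits analyticity in $\Complex^+$ and the correct leading asymptotics $a_2 t^{-1} + O(t^{-2})$ from \eqref{assumpt}, so that Theorem~\ref{theorem} is applicable to it, and (ii) the sign bookkeeping, where the prefactor $-i$ in \eqref{res2} combines with the $-i$ produced by \eqref{res1} to give $(-i)(-i) = -1$. If one wished for a rate of convergence in \eqref{res1} applied to $\psi_1^+$ rather than a bare limit, one further term in the expansion \eqref{assumpt} would be required, but for the statement as given the above argument is complete.
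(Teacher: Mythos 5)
Your proof is correct and is exactly the intended argument behind the paper's ``immediate consequence'': by \eqref{res2}, $f'$ is $-i$ times the inverse transform of $\psi_1^+(t)=t\psi^+(t)-a_1$, which is analytic in $\Complex^+$ with $\psi_1^+(t)=a_2t^{-1}+O(t^{-2})$, so \eqref{res1} applies (only the one-term asymptotics with $tR(t)\to 0$ is used there) and gives $\lim_{x\to0^+}f'(x)=(-i)(-ia_2)=-a_2$. The paper offers no further detail, and your two points of care (analyticity of $\psi_1^+$ and the sign bookkeeping) are exactly the right ones.
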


\begin{remarknonum}
Note that the difference between the Theorems \ref{theorem} and \ref{tau} is in the assumption about the asymptotic behaviour of the transform $\psi^+$ 
as $t \to \infty$.
The assumption \eq{est2} is weaker, since it is equivalent to assume the asymptotic property as $\Im(t) \to +\infty$, whereas the assumption \eq{assumpt} is stronger, 
since the asymptotic property must hold true also as $t \to \infty$ along the real axis ($\Im(t) = 0$ and $\Re(t) \to \pm\infty$). The stronger assumption is needed in 
order to extend the theorem to the case $k = 0$.

Note also that analyticity of the function $\psi^+$ in the Theorem \ref{theorem} is essential. To make this clear let us consider the following examples.

In the case $k = 0$ and $A_+ + A_- = 0$, neither Theorem \ref{kappa} nor Theorem \ref{kappazero} are applicable and the only conclusion on the asymptotic 
behaviour of the inverse transform $f(x)$ is that $f(x)$ is asymptotically equal to a constant as $x \to 0$, but the value of this constant depends, in general, on 
the particular function $\psi(t)$, not only on its asymptotic behaviour as $t \to \pm\infty$. For example, if $\psi(t) = t/(t^2 + 1)$ then $f(x) \to \mp i/2$ as 
$x \to 0^\pm$, whereas if $\psi(t) = (t+1)/(t^2 + 1)$ then $f(x) \to 1/2 \mp i/2$ as $x \to 0^\pm$. From these examples, it is clear that smoothness of the function 
$\psi(t)$ is not helpful.

Analyticity of the function $\psi(t) \equiv \psi^+(t)$ in $\Complex^+$, in addition to conditions (\ref{conds}) guarantees the result (\ref{res1}) for $k = 0$ and 
$A_+ + A_- = 0$.
\end{remarknonum}


\begin{thebibliography}{}

\bibitem{antipov}
Antipov, Y.A. (1999)
An exact solution of the 3-D-problem of an interface semi-infinite plane crack
\JMPS {\bf 47}, 1051-1093.

\bibitem{bercial}
Bercial-Velez, J.P., Antipov, Y.A. and Movchan, A.B. (2005)
High-order asymptotics and perturbation problems for 3D interfacial cracks.
\JMPS {\bf 53}, 1128-1162.

\bibitem{bueckner1}
Bueckner, H.F. (1987)
Weight functions and fundamental fields for the penny-shaped and the half-plane crack in three-space.
\IJSS {\bf 23}, 57-93.

\bibitem{bueckner2}
Bueckner, H.F. (1989)
Observations on weight functions.
{\it Eng.\ Anal.\ Bound.\ Elem.}\ {\bf 6}, 3-18.

\bibitem{erdogan}
Erdogan, F. (1963) 
Stress distribution in non-homogeneous elastic plane with cracks.
\JAM {\bf 30}, 232-236.

\bibitem{hutch}
Hutchinson, J.W., Mear, M.E. and Rice, J.R. (1987)
Crack paralleling an interface between dissimilar materials
\JAM {\bf 54}, 828-32. 

\bibitem{leblond}
Lazarus, V. and Leblond, J.B. (1998)
Three-dimensional crack-face weight functions for the semi-infinite interface crack - I: Variation of the stress intensity factors due to some small perturbation of 
the crack front
\JMPS {\bf 46}, 489-511. 

\bibitem{gena}
Mishuris, G.S. and Kuhn, G. (2001) 
Asymptotic behaviour of the elastic solution near the tip of a crack situated at a nonideal interface.
\ZAMM {\bf 81}, 811-826.

\bibitem{meade}
Meade, K.P. and Keer, L.M. (1984)
On the problem of a pair of point forces applied to the faces of a semi-infinite plane crack.
\JE {\bf 14}, 3-14.

\bibitem{noble}
Noble, B. (1988)
Methods based on the Wiener-Hopf technique for the solution of partial differential equations. 
Chelsea Publishing Company, New York.

\bibitem{roaz}
Piccolroaz, A., Mishuris, G. and Movchan, A.B. (2007)
Evaluation of the Lazarus-Leblond constants in the asymptotic model of the interfacial wavy crack.
\JMPS {\bf 55}, 1575-1600. 

\bibitem{pindra}
Pindra, N., Lazarus, V. and Leblond J.B. (2008)
The deformation of the front of a 3D interface crack propagating quasistatically in a medium with random fracture properties.
\JMPS {\bf 56}, 1269-1295. 

\bibitem{rice}
Rice, J.R. and Sih, G.C. (1965) 
Plane problems of cracks in dissimilar media
\JAM {\bf 32}, 418-423.

\bibitem{williams}
Williams, M.L. (1959)
The stresses around a fault or crack in dissimilar media.
{\em Bull. Seism. Soc. America} {\bf 49}, 199-204.

\bibitem{willis71}
Willis, J.R. (1971) 
Fracture mechanics of interfacial cracks.
\JMPS {\bf 19}, 353-368.

\bibitem{willis95}
Willis, J.R. and Movchan, A.B. (1995)
Dynamic weight functions for a moving crack. I. Mode I loading.
\JMPS {\bf 43}, 319-341.

\bibitem{zemanian}
Zemanian, A.H. (1968)
Generalized Integral Transformations. 
Interscience Publishers, New York.

\end{thebibliography}
\end{document}